\documentclass{article}
\usepackage[utf8]{inputenc}

\usepackage[hmargin=0.75in,vmargin=1in]{geometry} 

\usepackage[round]{natbib}

\usepackage[dvipsnames]{xcolor}

\usepackage{mathtools} 
\usepackage{booktabs} 
\usepackage{tikz} 
\usepackage{caption}
\usepackage{subcaption} 
\usepackage{amsfonts} 
\usepackage{amsthm}
\usepackage{thmtools}
\usetikzlibrary{arrows.meta} 
\usepackage{algorithm}
\usepackage{algpseudocode}
\usepackage{array}
\usepackage{multirow}
\usepackage{csquotes}
\usepackage{amssymb}
\usepackage{graphbox}
\usepackage{bm}
\usepackage{authblk}

\usepackage{hyperref}
\hypersetup{
  colorlinks=true,
  linkcolor=NavyBlue,
  citecolor=gray
}
\usepackage[capitalize]{cleveref}

\theoremstyle{plain}

\theoremstyle{definition}

\newtheorem{example}{Example}

\theoremstyle{remark}
\newtheorem{remark}{Remark}[section]

\Crefname{assumption}{Assumption}{Assumptions}
\crefname{assumption}{Assumption}{Assumptions}
\Crefname{condition}{Condition}{Conditions}
\crefname{condition}{Condition}{Conditions}
\Crefname{secinapp}{Appendix}{Appendices}
\crefname{secinapp}{Appendix}{Appendices}
\renewcommand{\thmcontinues}[1]{%
\ifcsname hyperref\endcsname
    \hyperref[#1]{continued}%
\else
    continued%
\fi%
}

\def\ci{\perp\!\!\!\perp}

\newcommand{\E}{\mathbb{E}}
\newcommand{\G}{\mathcal{G}}
\newcommand{\D}{\mathcal{D}}
\newcommand{\bD}{\bm{\mathcal{D}}}
\newcommand{\W}{\mathcal{W}}

\newcommand{\M}{\mathcal{M}}
\newcommand{\btheta}{\bm{\theta}}
\newcommand{\boldeta}{\bm{\eta}}
\newcommand{\bthetahat}{\hat{\btheta}}

\usepackage{xcolor}

\usepackage[normalem]{ulem}
\newcommand{\stkout}[1]{\ifmmode\text{\sout{\ensuremath{#1}}}\else\sout{#1}\fi}

\definecolor{friendlyblue}{HTML}{1E88E5}
\definecolor{friendlyred}{HTML}{D81B60}
\definecolor{friendlyyellow}{HTML}{FFC107}

\title{Score-Based Methods for Selecting Direct Causes in Reweighted Distributions}

\begin{document}

\author[1]{Jacob M. Chen}
\author[1]{Ilya Shpitser}
\affil[1]{Department of Computer Science, Johns Hopkins University}

\maketitle

\begin{abstract}
    Scoring methods based on the Bayesian information criterion (BIC) are commonly used for model selection tasks such as choosing features in a regression model or learning directed acyclic graphs from data.
    In certain model classes, model selection based on the BIC is consistent.
    However, since the BIC is based on the observed likelihood function, it does not apply directly to reweighted distributions arising in causal inference, missing data, and domain shift applications.
    Here, we propose a generalized version of the BIC that allows for model selection in reweighted distributions.
    We prove its corresponding consistency property and demonstrate how it can be used for selecting direct causes of an outcome variable in a marginal structural model.
    Our proposed method accounts for scenarios with unmeasured confounders and where the weights must be estimated from data.
    Through simulation studies, we demonstrate the asymptotic properties of the reweighted BIC and compare it with alternative methods for model selection.
\end{abstract}

\section{Introduction}
\label{sec:introduction}

The Bayesian information criterion (BIC), first proposed in \cite{ref:schwarz1978estimating}, is a popular method for score-based model selection.
In this type of model selection, the BIC score -- computed as the maximized log-likelihood given the model penalized by its dimension -- is assigned to each candidate model, and the one with the most favorable score is chosen.
Under the assumption that the underlying probability distribution is a member of the exponential family, \cite{ref:haughton1988choice} proved that the BIC is {\it consistent}.
This means that, as the sample size approaches infinity, the BIC will select the model with the smallest dimension containing the true parameters with probability approaching one.
Recent work has also extended the BIC to cases where the number of candidate models grows as the sample size grows and where predictor variables are subject to a ranked sparsity assumption.
In such cases, additional penalty terms are added to maintain consistency \citep{ref:chen2008extended,ref:peterson2026ranked}.

The BIC has seen applications in numerous types of model selection problems, including those that aim to learn relevant features in a regression model and a directed acyclic graph (DAG) underlying the data-generating process (DGP) \citep{ref:vrieze2012model,ref:chickering2002optimal}. 
An important generalization of the former type of model selection occurs in causal inference applications where we wish to select relevant causes of an outcome among a potentially large set of candidates.
This type of problem, common in the analysis of data in healthcare, epidemiology, economics, and psychology, may be formulated as a model selection problem that selects features of a {\it marginal structural model} (MSM) 
\citep{ref:robins2000marginal,ref:hernan2000marginal}.

An MSM may be viewed as a causal regression model whose parameters are fit in a reweighted distribution corresponding to a hypothetical situation where some inputs to the regression behave as if they were randomized.
The task of selecting direct causes must be done in an MSM because associations induced by confounding in the original (unweighted) distribution may make potential causes of the outcome appear significant even when they are not.
MSMs have been used extensively in causal inference for target trial emulation, learning dynamic treatment regimes, and precision medicine \citep{ref:bodnar2004marginal,ref:buikema2018effect,ref:van2004history}.

The BIC is based on the observed likelihood function, so it cannot be directly applied to model selection problems in MSMs as they feature a reweighted distribution and, thus, a reweighted likelihood.
To address this, we propose a reweighted BIC (rBIC) and prove that it is consistent for model selection with respect to the reweighted distribution.
We further show that, with appropriate modifications to the penalty term, it remains consistent when weights are estimated from data using nuisance functions.

This paper is organized as follows.
We first define and set up the problem using a graphical models and causal inference framework.
Next, we review the derivation of the classical BIC and its consistency property.
We then present the rBIC and prove its consistency.
Throughout, we give examples that motivate our method.
Finally, we demonstrate the asymptotic properties of the rBIC in simulation studies and compare it with alternative methods for model selection.

To summarize, our main contributions are as follows:
\begin{itemize}
    \item We give a review of the derivation of the classical BIC and its proof for consistency.
    We also slightly extend the consistency proof of \cite{ref:haughton1988choice} to a more general class of distributions\footnote{Please see \cref{rm:extensions_consistency} in the Appendix for more details on how we extend the proof of \cite{ref:haughton1988choice}.}.
    \item We give examples demonstrating why the the classical BIC is not suited for certain model selection problems in causal inference, but the rBIC is.
    \item We propose the rBIC and prove that it is consistent even when weights are estimated from data using nuisance functions.
    \item In synthetic experiments, we demonstrate the asymptotic performance of the rBIC.
\end{itemize}
{\bf Related work.} 
\cite{ref:greco2020weighted} discuss methods for fitting multivariate Gaussian mixture models using weighted likelihoods.
To select the number of mixtures in their models, the authors suggest using a reweighted BIC.
Their proposed method is similar to ours but has different assumptions, uses different weights, does not discuss consistency, and is not used for selecting direct causes in causal models.

\cite{ref:bahamyirou2022doubly} use the adaptive LASSO to select effect modifiers -- covariates that change the magnitude or direction of a causal effect -- in MSMs \citep{ref:zou2006adaptive, ref:tibshirani1996regression}.
\cite{ref:jaman2025penalized} utilize the smoothly clipped absolute deviation (SCAD) penalty function to select effect modifiers in structural nested mean models with repeated outcomes \citep{ref:fan2001variable, ref:vansteelandt2014structural}.
Our problem differs from these works in that we are selecting relevant causes in reweighted distributions rather than effect modifiers in observed distributions.
Furthermore, the works above use penalized likelihoods to select effect modifiers, which reduces to an optimization problem.
In contrast, utilizing the BIC to select direct causes is a discrete search problem.
We also do not assume that all variables related to treatment assignment are observed, whereas the above works do.

\section{Motivating Examples}
\label{sec:motivating_example}

\begin{example}[Causes of Acute Kidney Injury]\label{exp:example_backdoor}
    Suppose a cardiac surgeon notices that her patients sometimes develop acute kidney injury (AKI) after cardiopulmonary bypass surgery, a type of surgery where the heart is stopped, and blood flow is facilitated by a perfusion machine.
    She suspects that poor intraoperative oxygen delivery to the patient may be causing AKI \citep{ref:magruder2015nadir, ref:ranucci2018goal}.
    Oxygen delivery, however, is measured every minute over the course of the surgery, making interventions difficult to describe and interpret.
    Thus, the researcher defines three metrics of oxygen delivery that summarizes its impacts on AKI: (i) the lowest quantity of oxygen delivered during surgery, (ii) the number of times the quantity of oxygen delivered drops below a key threshold, and (iii) the length of time the patient is connected to the perfusion machine. 
    She hopes to learn from observational data which, if any, of these intraoperative events are true causes of AKI.
    All of the confounding variables in the surgeon's dataset are observed.
\end{example}

\begin{example}[Causes of Lung Cancer]\label{exp:example_frontdoor}
    Suppose a public health researcher notices that a community near a major coal mine is recording a high incidence of lung cancer.
    He identifies three potential culprits -- a resident's smoking status, distance of a resident's home to the coal mine, and whether the resident works at the coal mine -- and would like to determine which, if any of these, are significant causes of lung cancer.
    The researcher is concerned, however, that unmeasured variables such as education and genetics could be affecting both the likelihood of a lung cancer diagnosis and the three potential causes.
    Thus, he also collects data on tar accumulation in individuals' lungs as a mediating variable between the potential causes and the outcome \citep{ref:pearl2022causal}.
    The researcher hopes to determine which of the three potential causes are significant causes of lung cancer to create policy suggestions.
\end{example}

\cref{fig:example_dag_backdoor,fig:example_dag_frontdoor}, called the backdoor and frontdoor graphs, respectively \citep{ref:pearl2009causality}, represent the DGPs of \cref{exp:example_backdoor,exp:example_frontdoor}.

\begin{figure}[t!]
    \begin{center}
        \begin{subfigure}[t]{0.225\textwidth}
            \centering
            \scalebox{0.95}{
                \begin{tikzpicture}[>=stealth, node distance=1cm]
                    \tikzstyle{square} = [draw, thick, minimum size=1.0mm, inner sep=3pt]
                    \begin{scope}
                        \path[->, very thick]
                        node[] (a2) {$A_2$}
                        node[below of=a2] (a1) {$A_3$}
                        node[above of=a2] (a3) {$A_1$}
                        node[right of=a2] (y) {$Y$}
                        node[left of=a2] (c) {${\bf C}$}
                                        
                        (a1) edge[friendlyyellow] (y)
                        (a2) edge[friendlyyellow] (y)
                        (a3) edge[friendlyyellow] (y)
                        (c) edge[friendlyblue, bend left] (y)
                        (c) edge[friendlyblue] (a1)
                        (c) edge[friendlyblue] (a2)
                        (c) edge[friendlyblue] (a3)
                        ;
                    \end{scope}
                \end{tikzpicture}
            }
            \caption{}
            \label{fig:example_dag_backdoor}
        \end{subfigure} 
        \begin{subfigure}[t]{0.225\textwidth}
            \centering
            \scalebox{0.95}{
                \begin{tikzpicture}[>=stealth, node distance=1cm]
                    \tikzstyle{square} = [draw, thick, minimum size=1.0mm, inner sep=3pt]
                    \begin{scope}
                        \path[->, very thick]
                        node[] (a2) {$A_2$}
                        node[below of=a2] (a1) {$A_3$}
                        node[above of=a2] (a3) {$A_1$}
                        node[right of=a2] (m) {$M$}
                        node[right of=m] (y) {$Y$}
                        node[left of=a2] (c) {${\bf U}$}
                                        
                        (a1) edge[friendlyyellow] (m)
                        (a2) edge[friendlyyellow] (m)
                        (a3) edge[friendlyyellow] (m)
                        (m) edge[friendlyblue] (y)
                        (c) edge[friendlyred, bend left] (y)
                        (c) edge[friendlyred] (a1)
                        (c) edge[friendlyred] (a2)
                        (c) edge[friendlyred] (a3)
                        ;
                    \end{scope}
                \end{tikzpicture}
            }    
            \caption{}
            \label{fig:example_dag_frontdoor}
        \end{subfigure}
    \end{center}
    \caption{
    DAGs representing the DGPs of \cref{exp:example_backdoor,exp:example_frontdoor}.
    In both graphs, ${\bf A} = \{A_1, A_2, A_3\}$ represents potential direct causes of the outcome.
    ${\bf C}$ and ${\bf U}$ represent observed and unobserved covariates, respectively, and $M$ represents a mediator between treatments and outcome.
    Blue and red edges are outgoing edges from observed and unobserved vertices, respectively.
    We do not know whether the yellow edges are present, and the goal of inference is to determine which, if any, of the variables in ${\bf A}$ are significant direct causes of $Y$.
    } 
    \label{fig:examples}
\end{figure}

\section{Problem Setup}
\label{sec:model_and_problem_setup}

We use uppercase letters $X$ to denote random variables and uppercase letters in boldface ${\bf X}$ to denote a set of random variables.
A lowercase $x$ indicates a specific value a random variable $X$ takes, and likewise for {\bf x}.

We assume that the DGP follows that of a directed acyclic graph (DAG) $\G$ with the set of vertices ${\bf V} = {\bf C} \cup {\bf U} \cup {\bf A} \cup {\bf M} \cup \{Y\}$, where ${\bf C}$ represents observed confounders, ${\bf U}$ represents unobserved confounders, ${\bf A} = \{A_1, A_2, \dots, A_k\}$ represents $k$ potential causes of the outcome that we must select from, ${\bf M}$ represents variables mediating treatment and outcome, and $Y$ represents the outcome of interest.
A DAG $\G$ is further defined by a set of directed edges $\mathcal{E}$ connecting the vertices.
The set of variables with a directed edge to $A$ are denoted as the parents of $A$: $\text{pa}_{\G}(A)$.
We assume that the variables in $\G$ follow the partial topological ordering ${\bf C} \cup {\bf U} \prec {\bf A} \prec {\bf M} \prec Y$.
We further define the set of observed variables ${\bf O} = {\bf V} \setminus {\bf U}$.

{\bf Background on DAG models.}
The joint distribution in the statistical DAG model is factorized as
\begin{align*}
    p({\bf V}) &= \prod_{V \in {\bf V}} p(V \mid \text{pa}_{\G}(V)).
\end{align*}
Independencies in the full data distribution $p({\bf V})$ can be read off from $\G$ using the d-separation criterion \citep{ref:pearl2014probabilistic}.
DAGs are also used to represent causal relationships through the use of {\it structural causal models} (SCMs).
In an SCM, the value of each $V \in {\bf V}$ is determined by a causal mechanism function $f_V$ that takes as inputs $\text{pa}_{\G}(V)$ and an unobserved exogenous error term $\epsilon_V$ \citep{ref:pearl2009causality}.
That is, $V = f_V(\text{pa}_{\G}(V), \epsilon_V)$.
The error terms are assumed to be jointly independent of each other and ${\bf V}$.
An SCM represents causal relationships via responses to interventions, denoted by the $\text{do}()$ operator in \cite{ref:pearl2009causality}.  
An intervention that sets $A$ to a value $a$, represented by $\text{do}(a)$, replaces the structural equation $f_A$ for $A$ with another that sets $A$ to $a$, with a similar definition holding for a set ${\bf A}$.

For conciseness, we will sometimes write ${\bf a}$ as shorthand for the expression ${\bf A} = {\bf a}$.
It is well known that, given any subset ${\bf A}$ of ${\bf V}$, the interventional distribution $p({\bf V} \setminus {\bf A} \mid \text{do}({\bf a}))$ is a function of the full data distribution $p({\bf V})$. 
This function is known as the {\it g-formula} \citep{ref:robins1986new}:
\begin{align*}
    p({\bf V} \setminus {\bf A} \mid \text{do}({\bf a})) &= \prod_{V \in {\bf V} \setminus {\bf A}} p(V \mid \text{pa}_{\G}(V)) \bigg \vert_{{\bf A} = {\bf a}}.
\end{align*}

Alternatively, $p({\bf V} \setminus {\bf A} \mid \text{do}({\bf a}))$ can be represented as a joint distribution over {\it potential outcome} variables of the form ${\bf V}({\bf a}) \equiv \{ V({\bf a}) : V \in {\bf V} \setminus {\bf A} \}$, where $V({\bf a})$ means ``$V$ had ${\bf A}$, possibly contrary to fact, been set to values ${\bf a}$'' \citep{ref:splawa1990application,ref:rubin1974estimating}.
Specifically, $p({\bf V} \setminus {\bf A} \mid \text{do}({\bf a}))$ is alternatively written as $p(\{ V({\bf a}): V \in {\bf V} \setminus {\bf A} \})$.

{\it Consistency} is defined as the assumption that $Y({\bf a}) = Y$ if ${\bf A} = {\bf a}$.
This assumption allows us to relate potential outcome variables to observed variables\footnote{Note that this consistency assumption is different from consistency of the BIC. The distinction between the two should be clear from context.}.

Given an outcome $Y$ and a set of treatments ${\bf A}$, a {\it marginal structural model} (MSM) assumes that the expectation of $Y({\bf a})$ follows a parametric function indexed by a finite set of parameters $\btheta$: $\E[Y({\bf a})] = g({\bf a}; \btheta)$\footnote{See \cref{app:msms} for concrete examples of MSMs.}. 
Note that this object views ${\bf A}$ as being artificially set rather than being conditioned on, which would result in the object $E[Y \mid {\bf A}={\bf a}]$.
It is in this sense that MSMs are viewed as ``causal regressions.''

{\bf Background on exponential families.} A model for the set of random variables ${\bf X}$ is said to be a member of the exponential family if the joint probability mass function (PMF), or probability density function (PDF) for continuous variables, for ${\bf X}$ is of the form
\begin{align}
    p_{\bf X}({\bf x}, \btheta) &= h({\bf x}) \exp[{\bf T}({\bf x}) \btheta - b(\btheta)],
    \label{eq:exp_family}
\end{align}
where $\exp[x] = e^x$ is the exponential function, $h({\bf x})$ is the base measure, ${\bf T}({\bf x})$ are the sufficient statistics of the distribution, and $b(\btheta)$ is the log-partition function.
In this paper, we will use the notation $p(\cdot)$ to represent both PMFs and PDFs, and the integral and summation notations will be used interchangeably to expand expectations.
Here, ${\bf T}({\bf x})$ and $\btheta$ are both vectors, and their product in \cref{eq:exp_family} is computed as their dot product.
$\btheta \in \mathrm{int} \Theta$ -- where $\mathrm{int}\Theta$ represents the interior of the natural parameter space $\Theta$ -- is termed the {\it parameters} of the model.
The number of elements in the vector $\btheta$ is known as the dimension $d$ \citep{ref:lehmann2005testing}, and the space covered by $\Theta$ depends on the specific distribution specified in \cref{eq:exp_family}.
Common distributions in the exponential family include the normal, exponential, and Bernoulli distributions.
When the value of ${\bf X}$ depends on other variables ${\bf Z}$ in the distribution, \cref{eq:exp_family} can be extended to conditional PMFs and PDFs by allowing the parameters $\btheta$ to depend on observed values of ${\bf Z} = {\bf z}$.
Such conditional densities may be denoted as $p_{{\bf X} \mid {\bf Z}}({\bf x} \mid {\bf z}, \btheta({\bf z}))$.

For the rest of this paper, we will make the independent and identically distributed (IID) assumption.
Given $n$ IID samples of the set of random variables ${\bf X}$, the maximum likelihood estimate (MLE) for $\btheta$ is
\begin{align}
    \bthetahat_n &= \underset{\btheta}{\text{argmax }} \sum_{i=1}^n \log p_{\bf X}({\bf x}_i, \btheta).
    \label{eq:mle}
\end{align}
Under standard regularity conditions listed in \cref{app:regularity_conditions}, we can analytically derive the MLE for a set of variables whose model is a member of the exponential family and show that it is asymptotically normal, meaning $\bthetahat_n$ converges to the true parameters at rate $O_p(n^{-1/2})$.
When the model for ${\bf X}$ given ${\bf Z}$ is a member of the exponential family, one can compute the MLE for $\btheta({\bf z})$ to estimate $\E[{\bf X} \mid {\bf Z} = {\bf z}]$.

{\bf Goal of inference.}
Suppose that we are given an SCM associated with a DAG $\G$ and a set of treatment variables ${\bf A}$ where each $A \in {\bf A}$ may be a cause of the outcome $Y$.
We are also given the PDF or PMF of 
the {\it interventional distribution} $p(Y({\bf a}))$, with this object being a member of the exponential family.
Our goal of inference is to determine ${\bf A}^* \subseteq {\bf A}$ such that each $A \in {\bf A}^*$ is a significant predictor of the outcome $Y$ in the MSM $\E[Y({\bf a})]$ while each $A \in {\bf A} \setminus {\bf A}^*$ is not.
Note that significant predictors of $Y$ in the MSM correspond to significant direct causes of $Y$ in $\G$.

\section{The BIC and Consistency}
\label{sec:bic_and_consistency}

Here, we first review how the BIC arises as an approximation of the log-likelihood of the observed data given a candidate model \citep{ref:konishi2008information}.
Then, we state the theorem and conditions under which the BIC is consistent for choosing the correct model between two candidate ones \citep{ref:haughton1988choice}.

Let ${\bf O}_i$ represent a single observation of all observed variables ${\bf O}$ in the dataset, and let $\bD = ({\bf O}_1, \dots, {\bf O}_n)$ be the $n$ IID observations of each observed variable in the data.
Next, let $M_1, \dots, M_R$ be $R < \infty$ candidate models for the data where each $M_r$ is characterized by the space of some parameters $\btheta_r \in \mathrm{int} \Theta$.
Let $\pi(\btheta_r \mid M_r)$ be the prior distribution of the parameters $\btheta_r$ for the model $M_r$.
Then, the likelihood of the data $\bD$ conditional on the model $M_r$ is given by
\begin{align}
    p(\bD \mid M_r) &= \int p(\bD \mid \btheta_r, M_r) \pi(\btheta_r \mid M_r) d\btheta_r.
    \label{eq:marginal_likelihood}
\end{align}

Let $p(M_r)$ be the prior probability that the model $M_r$ is the true model.
Using Bayes' theorem, the posterior probability of $M_r$ being the true model given the observed data is $p(M_r \mid \bD) = \frac{p(\bD \mid M_r) p(M_r)}{\sum_{j=1}^R p(\bD \mid M_j) p(M_j)}$.

From a Bayesian perspective, we seek to choose the model with the greatest posterior probability conditional on the data.
Assuming that the prior for each model is the same, the Bayesian choice for the best model is the one with the greatest $\log p(\bD \mid M_r)$.
In general, $\log p(\bD \mid M_r)$ can be difficult to calculate; thus, we utilize Laplace's method for approximating integrals to get its finite sample approximation.

\begin{restatable}[BIC Approximation]{proposition}{bicDerivation}\label{prop:bic_derivation}
    Under standard regularity conditions and given a candidate model $M_r$, the log-likelihood of the observed data given $M_r$ can be approximated as
	\begin{align*}
        \log p(\bD \mid M_r) &\approx \log p(\bD \mid \hat{\btheta}_r, M_r) - \frac{d_r}{2} \log n,
	\end{align*}
    where $\hat{\btheta}_r$ is the MLE for the parameters and $d_r$ is the model dimension \citep{ref:konishi2008information}.
\end{restatable}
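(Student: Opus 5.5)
The approach is the standard Laplace approximation of the integral in \cref{eq:marginal_likelihood}. Write $\ell_n(\btheta_r) = \log p(\bD \mid \btheta_r, M_r) = \sum_{i=1}^n \log p({\bf O}_i \mid \btheta_r, M_r)$, using the IID assumption. Then
\begin{align*}
p(\bD \mid M_r) = \int \exp[\ell_n(\btheta_r)]\, \pi(\btheta_r \mid M_r)\, d\btheta_r .
\end{align*}
Since $\ell_n$ is a sum of $n$ terms, the integrand concentrates sharply around the MLE $\hat{\btheta}_r$ defined in \cref{eq:mle}. The regularity conditions of \cref{app:regularity_conditions} guarantee that $\hat{\btheta}_r$ lies in $\mathrm{int}\Theta$ and that $\ell_n$ is twice continuously differentiable there. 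We additionally take the prior $\pi(\cdot \mid M_r)$ to be continuous and positive at $\hat{\btheta}_r$.

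The first step is a second-order Taylor expansion around $\hat{\btheta}_r$. Because $\nabla \ell_n(\hat{\btheta}_r) = 0$,
\begin{align*}
\ell_n(\btheta_r) \approx \ell_n(\hat{\btheta}_r) - \tfrac{n}{2} (\btheta_r - \hat{\btheta}_r)^\top J_n(\hat{\btheta}_r) (\btheta_r - \hat{\btheta}_r),
\end{align*}
where $J_n(\btheta) = -\frac{1}{n}\nabla^2 \ell_n(\btheta)$ is the observed information per observation. For exponential families this equals $\nabla^2 b(\btheta)$, which is positive definite, and by the law of large numbers it converges to the Fisher information $I(\btheta_r)$.

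The second step replaces $\pi(\btheta_r \mid M_r)$ by $\pi(\hat{\btheta}_r \mid M_r)$ and evaluates the resulting Gaussian integral. This gives
\begin{align*}
p(\bD \mid M_r) \approx \exp[\ell_n(\hat{\btheta}_r)]\, \pi(\hat{\btheta}_r \mid M_r)\, (2\pi)^{d_r/2}\, n^{-d_r/2}\, |J_n(\hat{\btheta}_r)|^{-1/2}.
\end{align*}
Taking logarithms yields
\begin{align*}
\log p(\bD \mid M_r) = \ell_n(\hat{\btheta}_r) - \tfrac{d_r}{2}\log n + \log \pi(\hat{\btheta}_r \mid M_r) + \tfrac{d_r}{2}\log(2\pi) - \tfrac12 \log|J_n(\hat{\btheta}_r)| + o_p(1).
\end{align*}
The three terms after $-\frac{d_r}{2}\log n$ are $O_p(1)$. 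This uses $\hat{\btheta}_r \to \btheta_r$ at rate $O_p(n^{-1/2})$ and $J_n \to I$ nonsingular. Discarding these $O_p(1)$ terms gives the stated approximation, with error $O_p(1)$ against leading terms of order $n$ and $\log n$.

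The main obstacle is justifying the Laplace step rigorously. There are two pieces. The contribution from outside a shrinking neighborhood $\|\btheta_r - \hat{\btheta}_r\| \le \delta_n$ must be negligible. Inside that neighborhood, the cubic Taylor remainder and the variation of the prior must be $o(1)$ in relative terms. I would handle the outer region first, using strict concavity of $\ell_n$ in the natural parameter, which holds because $b$ is strictly convex. This gives $\ell_n(\btheta_r) - \ell_n(\hat{\btheta}_r) \le -c n \delta_n^2$ there, so choosing $\delta_n = n^{-1/2}\log n$ makes the outer contribution exponentially small. For the inner region, continuity of $\nabla^2 b$ and of $\pi$ controls the remainder. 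Since the statement is only an approximation citing \cite{ref:konishi2008information}, these details may be sketched at this level.
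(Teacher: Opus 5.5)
Your proposal is correct and follows essentially the same Laplace-approximation argument as the paper: a second-order Taylor expansion of the log-likelihood at the MLE (the gradient term vanishes), evaluation of the resulting Gaussian integral as $(2\pi)^{d_r/2} n^{-d_r/2}|J(\hat{\btheta}_r)|^{-1/2}$, and discarding the $O_p(1)$ terms $\frac{d_r}{2}\log 2\pi$, $-\frac{1}{2}\log|J(\hat{\btheta}_r)|$ and $\log\pi(\hat{\btheta}_r)$ after taking logarithms. The differences are minor. The paper expands the prior to first order and kills the linear term by symmetry of the Gaussian kernel, rather than freezing the prior at $\hat{\btheta}_r$. It also omits the outer-region and inner-neighborhood control that you sketch with $\delta_n = n^{-1/2}\log n$, so your version is the more careful of the two.
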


The proof for \cref{prop:bic_derivation} and all other proofs are presented in \cref{app:proofs}.
We further provide a proof intuition for \cref{prop:bic_derivation} in \cref{app:bic_derivation_intuition}.
The BIC is defined exactly as this finite sample approximation of the observed data log-likelihood.

\begin{restatable}[BIC]{definition}{bicDefinition}\label{def:bic_definition}
    Let $M_r$ be a model characterized by the space of parameters $\btheta_r$ for the random variables ${\bf X}$ given ${\bf Z}$, where ${\bf X}$ and ${\bf Z}$ are disjoint and ${\bf X} \cup {\bf Z} \subseteq {\bf O}$.
    Let $({\bf X}, {\bf Z})_i$ represent a single observation of all variables in ${\bf X} \cup {\bf Z}$, and let $\D = (({\bf X}, {\bf Z})_1, \dots, ({\bf X}, {\bf Z})_n)$ be the dataset of $n$ IID observations of the variables ${\bf X} \cup {\bf Z}$.
    The BIC score is defined as
    \begin{align}
        \gamma(\D, M_r) &= \log p(\D \mid \hat{\btheta}_r, M_r) - \frac{d_r}{2} \log n,
        \label{eq:bic_definition}
    \end{align}
    where $\hat{\btheta}_r$ is the MLE estimates for the parameters in the model space $M_r$, and $d_r$ is the model dimension.
    Larger values suggest more likely models.
\end{restatable}

\begin{remark}
    The BIC score is often also defined as
    \begin{align*}
        -2 \log p(\D \mid \hat{\btheta}_r, M_r) + d_r \log n.
    \end{align*}
    This is the same as \cref{eq:bic_definition} after multiplying by $-2$.
    Here, smaller values suggest more likely models.
\end{remark}

Let $\btheta^*$ be the true values of the parameters for ${\bf X}$ given ${\bf Z}$.
We now define the notion of a {\it consistent} scoring criterion and state the theorem that the BIC is consistent when ${\bf Z}=\emptyset$.

\begin{restatable}[Consistent Scoring Criterion]{definition}{correctChoice}\label{def:correct_model_choice}
    Let $\M = M_1, \dots, M_R$, where $1 < R < \infty$, be a set of models each characterized by $\btheta_r \in \mathrm{int} \Theta$.
    Suppose that $\btheta^* \in \mathrm{int} \Theta$ and that $\btheta^* \in M_r$ for at least one $r$.
    Let $M_1$ and $M_2$ be any pair of models in $\M$.
    A scoring criterion $S(\D, M_r)$ is {\it consistent} if (i) $\btheta^* \in M_2 \setminus M_1$ or (ii) $\btheta^* \in M_2 \cap M_1$ with $d_2 < d_1$, then
    \begin{align*}
        \lim_{n \rightarrow \infty} P(S(\D, M_1) < S(\D, M_2)) = 1.
    \end{align*}
\end{restatable}

\begin{restatable}[Consistency of BIC \citep{ref:haughton1988choice}]{theorem}{bicConsistency}\label{thm:bic_consistency}
    Let ${\bf Z}=\emptyset$ and ${\bf X}$ be a set of random variables, and suppose each model for ${\bf X}$ in $\M$ is a member of the exponential family.
    Then, the BIC is a consistent scoring criterion.
\end{restatable}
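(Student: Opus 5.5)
The plan is to compare the two BIC scores directly. The difference is
\begin{align*}
\gamma(\D, M_2) - \gamma(\D, M_1) = \left[\ell_n(\hat{\btheta}_2) - \ell_n(\hat{\btheta}_1)\right] - \frac{d_2 - d_1}{2}\log n,
\end{align*}
where $\ell_n(\btheta) = \sum_{i=1}^n \log p_{\bf X}({\bf x}_i, \btheta)$ and $\hat{\btheta}_r$ maximizes $\ell_n$ over $M_r$. I will show that this difference tends to $+\infty$ in probability in each of the two cases of \cref{def:correct_model_choice}. Throughout I use the exponential family form \cref{eq:exp_family}. There $\ell_n(\btheta) = \sum_i \log h({\bf x}_i) + n[\bar{\bf T}_n \btheta - b(\btheta)]$, with $\bar{\bf T}_n = n^{-1}\sum_i {\bf T}({\bf x}_i)$. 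The function $\btheta \mapsto \bar{\bf T}_n\btheta - b(\btheta)$ is strictly concave. The law of large numbers gives $\bar{\bf T}_n \to \nabla b(\btheta^*)$ almost surely, and the CLT gives $\bar{\bf T}_n - \nabla b(\btheta^*) = O_p(n^{-1/2})$. I treat each $M_r$ as a (relatively) closed subset of $\mathrm{int}\Theta$, e.g.\ a linear subspace or a smooth submanifold intersected with $\Theta$, as in \cite{ref:haughton1988choice}.

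\textbf{Case (i): $\btheta^* \in M_2 \setminus M_1$.} Define $\lambda(\btheta) = \nabla b(\btheta^*)\btheta - b(\btheta)$. This function is uniquely maximized at $\btheta^*$, and $\lambda(\btheta^*) - \lambda(\btheta) = \mathrm{KL}(p_{\btheta^*}\,\|\,p_{\btheta})$. Because $\btheta^* \notin M_1$ and $M_1$ is closed, I claim $\delta := \lambda(\btheta^*) - \sup_{\btheta \in M_1}\lambda(\btheta) > 0$. Given that, $n^{-1}\ell_n(\hat{\btheta}_2) \ge n^{-1}\ell_n(\btheta^*)$. Also $n^{-1}[\ell_n(\hat{\btheta}_1) - \ell_n(\btheta^*)] \le \sup_{M_1}[\bar{\bf T}_n\btheta - b(\btheta)] - [\bar{\bf T}_n\btheta^* - b(\btheta^*)]$, and this tends to $-\delta$ by uniform convergence of the concave criterion. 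Hence the likelihood-ratio term is at least $n\delta/2$ with probability tending to one. Since $n\delta/2$ dominates any $O(\log n)$ penalty difference, regardless of the sign of $d_2 - d_1$, the claim follows. The main obstacle is this uniform step, because $M_1$ may be unbounded and the supremum may escape toward the boundary of $\Theta$. To handle it I will use concavity. Outside a compact ball $K$ around $\btheta^*$, strict concavity makes $\bar{\bf T}_n\btheta - b(\btheta)$ fall below its value on $\partial K$ for large $n$. This reduces the problem to $M_1 \cap K$, which is compact, and there uniform convergence is immediate because the criterion is linear in $\bar{\bf T}_n$.

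\textbf{Case (ii): $\btheta^* \in M_1 \cap M_2$ with $d_2 < d_1$.} Now the penalty term equals $\frac{d_1 - d_2}{2}\log n \to \infty$. It suffices to show the likelihood-ratio term is $O_p(1)$. I bound
\begin{align*}
0 \le \ell_n(\hat{\btheta}_1) - \ell_n(\hat{\btheta}_2) \le \sup_{\btheta\in\Theta}\ell_n(\btheta) - \ell_n(\btheta^*).
\end{align*}
The right side is the full-model likelihood ratio statistic. I will show it is $O_p(1)$ by a second-order Taylor expansion around $\btheta^*$, using the consistency and $\sqrt{n}$-rate of the MLE from \cref{app:regularity_conditions}. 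It equals $\frac{n}{2}(\hat{\btheta}-\btheta^*)^\top \nabla^2 b(\tilde{\btheta})(\hat{\btheta}-\btheta^*)$ for an intermediate point $\tilde{\btheta}$, and this is $O_p(1)$, with limiting law $\chi^2_{d}$. Therefore $\gamma(\D,M_2) - \gamma(\D,M_1) \ge \frac{d_1-d_2}{2}\log n - O_p(1) \to \infty$, which gives $P(\gamma(\D,M_1) < \gamma(\D,M_2)) \to 1$. This case is routine. The finiteness of $R$ is then used only to pass from pairwise statements to selecting the correct model overall.
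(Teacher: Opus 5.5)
Your proposal is correct and follows essentially the same route as the paper. You use the exponential-family decomposition of the BIC. In case (i) you have a strictly positive KL-type gap $\delta$ (the paper's $\epsilon$) that the factor $n$ turns into dominance over the $\log n$ penalty. In case (ii) you bound by the full-model likelihood-ratio statistic, which plays the role of the paper's $\hat{U}_n - U_n = O_p(1/n)$. Your concavity argument handles uniformity over a possibly unbounded $M_1$ more carefully than the paper's appeal to the continuous mapping theorem. Using $\btheta^*$ directly as the comparator for $M_2$ also avoids needing convergence of $\sup_{M_2}$.

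One small slip: the lower bound $0 \le \ell_n(\hat{\btheta}_1) - \ell_n(\hat{\btheta}_2)$ requires $M_2 \subseteq M_1$, and case (ii) does not assume nested models. This is harmless, because you only use the upper bound. That bound holds since $\btheta^* \in M_2$ gives $\ell_n(\hat{\btheta}_2) \ge \ell_n(\btheta^*)$.
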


In \cref{app:bic_consistency_intuition}, we give a high-level overview for why \cref{thm:bic_consistency} is true.
\cref{thm:bic_consistency} and an application of Bayes' rule also allows us to conclude consistency of the BIC for conditional distributions, which we formalize in the following corollary.

\begin{restatable}[Consistency of BIC for Conditional Distributions]{corollary}{bicConditionalConsistency}\label{cor:bic_consistency_conditional}
    Let ${\bf X}$ and ${\bf Z}$ be two disjoint sets of random variables, and suppose that each model for ${\bf X}$ given ${\bf Z}$ in $\M$ is a member of the exponential family.
    Then, the BIC is a consistent scoring criterion.
\end{restatable}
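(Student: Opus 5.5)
The plan is to reduce \cref{cor:bic_consistency_conditional} to \cref{thm:bic_consistency} through the factorization $p({\bf x},{\bf z}) = p({\bf x} \mid {\bf z})\, p({\bf z})$ given by Bayes' rule. Every candidate model $M_r$ in $\M$ constrains only the conditional law of ${\bf X}$ given ${\bf Z}$. I will therefore attach to each $M_r$ one common model $M_{\bf Z}$ for the marginal of ${\bf Z}$, with parameters $\boldeta$ shared by all candidates and variation independent of $\btheta_r$. The resulting joint model $\tilde M_r$ has parameters $(\btheta_r, \boldeta)$. Under this product parameterization, the joint MLE splits into the conditional MLE $\hat\btheta_r$ and a marginal MLE $\hat\boldeta$, and $\hat\boldeta$ does not depend on $r$. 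The joint BIC then decomposes as
\begin{align*}
    \gamma(\D, \tilde M_r) &= \log p(\D \mid \hat\btheta_r, M_r) + \sum_{i=1}^n \log p({\bf z}_i; \hat\boldeta) - \frac{d_r + d_{\bf Z}}{2}\log n \\
    &= \gamma(\D, M_r) + c_n,
\end{align*}
where $c_n$ does not depend on $r$. Consequently $\gamma(\D, M_1) < \gamma(\D, M_2)$ holds if and only if $\gamma(\D,\tilde M_1) < \gamma(\D, \tilde M_2)$. The containment conditions also transfer: $(\btheta^*, \boldeta^*) \in \tilde M_r$ holds exactly when $\btheta^* \in M_r$, and the dimensions of any two joint models differ by $d_2 - d_1$. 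So cases (i) and (ii) of \cref{def:correct_model_choice} for $\M$ correspond exactly to cases (i) and (ii) for $\{\tilde M_r\}$.

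Next I need each $\tilde M_r$ to be an exponential family in ${\bf x}$ and ${\bf z}$ jointly, so that \cref{thm:bic_consistency} applies to ${\bf X}\cup{\bf Z}$ with an empty conditioning set. This step is the main obstacle. If the conditional family is $h({\bf x}\mid{\bf z})\exp[{\bf T}({\bf x},{\bf z})\btheta - b(\btheta,{\bf z})]$, then the log-partition term $b(\btheta,{\bf z})$ depends on ${\bf z}$, and the product with an exponential-family marginal is not in general a joint exponential family in $(\btheta,\boldeta)$.

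The first remedy I would try is to choose $p({\bf z})$ to absorb the term, for example by taking $p({\bf z}) \propto q({\bf z}) \exp[b(\btheta,{\bf z})]$. This fails because the normalizer would then depend on $\btheta$, which breaks the decomposition. The more robust remedy, which I would adopt, is to note that the proof of \cref{thm:bic_consistency} uses the exponential family structure only through two facts: concavity and a LLN-plus-Taylor expansion of the log-likelihood around $\btheta^*$, and $O_p(n^{-1/2})$ convergence of the MLE. Both facts hold for the conditional log-likelihood $\sum_i \log p({\bf x}_i\mid{\bf z}_i,\btheta)$ under the regularity conditions of \cref{app:regularity_conditions}, since the sum of IID conditional exponential-family terms is concave in $\btheta$. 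The extension of \citeauthor{ref:haughton1988choice}'s argument noted in \cref{rm:extensions_consistency} is meant to cover exactly this class. I would therefore invoke \cref{thm:bic_consistency} in that generalized form for the joint model.

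Granting this, the argument concludes in two cases. In case (i), $\btheta^*\notin M_1$, so the scaled difference $n^{-1}[\ell_n(\hat\btheta_2)-\ell_n(\hat\btheta_1)]$ converges to a positive Kullback--Leibler gap. This gap dominates the $O(\log n)$ difference in penalties. In case (ii), both models contain $\btheta^*$. The likelihood-ratio difference is then $O_p(1)$ by asymptotic normality and the usual $\chi^2$-type expansion, while the penalty difference $\frac{d_1-d_2}{2}\log n$ tends to $+\infty$. In both cases $P(\gamma(\D,M_1)<\gamma(\D,M_2))\to 1$, and by the equivalence established in the first step the same limit holds for the conditional models.
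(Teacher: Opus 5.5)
Your route differs from the paper's. The paper does not give up on a joint exponential family. It models ${\bf Z}$ by its saturated empirical (multinomial) distribution $\eta_j = p({\bf Z}={\bf z}_j)$, $j=1,\dots,m$. On that support the offending term is linear in the indicator statistics, $b(\btheta({\bf z})) = \sum_j \mathbb{I}({\bf z}={\bf z}_j)\, b(\btheta({\bf z}_j))$. It can therefore be folded into their natural parameters $\log\eta_j - b(\btheta({\bf z}_j))$, with sufficient statistics $\mathbb{I}({\bf z}={\bf z}_j){\bf T}({\bf x})$ and $\mathbb{I}({\bf z}={\bf z}_j)$. This is a reparameterization, not the $\btheta$-dependent marginal of your first remedy: the ${\bf Z}$-marginal remains $\eta$, the $r$-independent term cancels as in your first step, and \cref{thm:bic_consistency} is cited verbatim.

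What the paper's approach buys is a pure reduction with no new asymptotics. The cost is that it is only clean when ${\bf Z}$ takes finitely many values. For continuous ${\bf Z}$, $m$ grows with $n$, the ambient family is not of fixed dimension, and the theorem does not apply as stated, despite the paper's claim that the construction ``extends readily.''

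Your route covers arbitrary ${\bf Z}$, but it is a new proof, not an invocation of a ``generalized'' \cref{thm:bic_consistency}. That proof runs through the finite-dimensional statistic $\bar{{\bf Y}}_n$ and continuity of $x \mapsto \sup_{\btheta\in M_r}(x\btheta - b(\btheta))$. This device is unavailable once $\frac{1}{n}\sum_i b(\btheta,{\bf z}_i)$ is a random function of $\btheta$.

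Case (i) therefore needs a uniform law of large numbers over $M_1$. Concavity supplies it: pointwise convergence of random concave functions is uniform on a small sphere around $\btheta^*$, and concavity then bounds the criterion everywhere outside that ball. This requires $\btheta^*$ not to lie in the closure of $M_1$, which the paper also assumes implicitly.

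Case (ii) is cleanest via the paper's sandwich. For each $r$ with $\btheta^*\in M_r$, $0\le \ell_n(\hat\btheta_{\mathrm{full}})-\ell_n(\hat\btheta_r)\le \ell_n(\hat\btheta_{\mathrm{full}})-\ell_n(\btheta^*)=O_p(1)$, which needs no regularity of the submodels.

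With the direct argument, your joint-model construction becomes superfluous. Taken literally, its containment transfer would also require $M_{\bf Z}$ to be correctly specified, which the paper's saturated choice guarantees automatically for discrete ${\bf Z}$.
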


\section{Reweighted BIC}
\label{sec:reweighted_bic}

We will begin this section by revisiting \cref{exp:example_backdoor,exp:example_frontdoor} to motivate the
reweighted BIC (rBIC).
Then, we will state the rBIC and show that it is consistent with both oracle and estimated weights.
Finally, we will propose an algorithm that utilizes the rBIC for selecting direct causes of the outcome in an MSM.

\begin{example}[continues=exp:example_frontdoor]
    The only observed variables in the researcher's dataset are the potential treatments, the mediator, and the outcome. 
    If the researcher compares models for the conditional mean $\E[Y \mid {\bf A}]$ using the observed data directly, variables in ${\bf A}$ may be significant in the chosen model even if their dependencies with $Y$ are non-causal\footnote{Please see \cref{app:non_causal_dependence} for a concrete example.}.
    Such non-causal associations arise due to {\it confounding bias}.
\end{example}

\begin{example}[continues=exp:example_backdoor]
    Failure to account for the covariates ${\bf C}$ raises the same issues regarding confounding bias as in \cref{exp:example_frontdoor}.
    One common approach to correct for this confounding bias is to compare conditional mean models $\E[Y \mid {\bf A}, {\bf C}]$ instead of $\E[Y \mid {\bf A}]$.
    However, in general, whether a variable is significant in a model for $\E[Y \mid {\bf A}, {\bf C}]$ and for $\E[Y({\bf a})]$ can be different\footnote{This phenomenon is also known as non-collapsibility. Please see \cref{app:different_significance} for a concrete example.}.
    The target of inference is to find which variables in ${\bf A}$ are significant in the MSM, so we must somehow select ${\bf A}^*$ using models that do not condition on ${\bf C}$.
\end{example}

In both of the examples above, the researchers would like to assess the significances of potential predictor variables in the MSM, but, regardless of whether covariates are observed, this task is hindered by the presence of confounding bias.
In the special case where the graph representing the data-generating process satisfies a property called {\it marginal unconfoundedness}, however, we can use observed associations to conclude causality.
We formalize this below.

\begin{restatable}[Marginal Unconfoundedness]{definition}{marginalUnconfoundedness}\label{def:confounding}
    Potential treatments ${\bf A}$ and an outcome $Y$ are {\it marginally unconfounded} in a DAG $\G$ if $Y \ci_{d-sep} {\bf A}$ in $\G$ when all outgoing edges from ${\bf A}$ are removed\footnote{Note that this definition is isomorphic to the criterion that $Y({\bf a}) \ci_{d-sep} {\bf A}$ in a single-world intervention graph (SWIG) \citep{ref:richardson2013single}. We define marginal unconfoundedness using edge deletions to simplify the presentation.
    }.
\end{restatable}

\begin{restatable}[Association is Causation]{proposition}{treatmentOutcomeSufficient}\label{prop:treatment_outcome_sufficient}
    Suppose there is an SCM associated with a DAG $\G$ on the variables ${\bf V}$.
    If ${\bf A}$ and $Y$ are marginally unconfounded in $\G$ and consistency holds, then $\E[Y({\bf a})] = \E[Y \mid {\bf A} = {\bf a}]$.
\end{restatable}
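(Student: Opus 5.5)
The plan is to reduce the claim to the single conditional independence $Y({\bf a}) \ci {\bf A}$ and then apply consistency. Concretely, I would show that marginal unconfoundedness implies $Y({\bf a}) \ci {\bf A}$ for every ${\bf a}$ in the support. Given that, the computation is short:
\begin{align*}
    \E[Y({\bf a})] = \E[Y({\bf a}) \mid {\bf A} = {\bf a}] = \E[Y \mid {\bf A} = {\bf a}].
\end{align*}
The first equality uses the independence. The second uses consistency, since $Y({\bf a}) = Y$ on the event ${\bf A} = {\bf a}$. This requires $p({\bf A} = {\bf a}) > 0$ (or a positive density) so that the conditional expectation is defined; I would state this as a positivity convention.

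The main step, and the expected obstacle, is deriving $Y({\bf a}) \ci {\bf A}$ from the graphical condition in \cref{def:confounding}. I would do this directly in the SCM rather than through the SWIG isomorphism mentioned in the footnote. Let $\G'$ be $\G$ with all edges out of ${\bf A}$ deleted. Define the modified SCM in which every structural equation $f_V$ with a parent in ${\bf A}$ has the ${\bf A}$-inputs fixed at the constants ${\bf a}$, and all other equations and error terms are unchanged.

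\begin{itemize}
    \item Its variables, call them ${\bf V}^{\bf a}$, follow the same equations as in the intervened model, except that ${\bf A}$ itself still follows $f_{\bf A}$. Hence $Y^{\bf a} = Y({\bf a})$ as a function of the errors.
    \item Because ancestors of ${\bf A}$ do not depend on ${\bf A}$, the variables ${\bf A}^{\bf a}$ coincide with the factual ${\bf A}$.
    \item The modified SCM has graph $\G'$ with independent errors, so its joint distribution is Markov to $\G'$.
\end{itemize}

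The d-separation $Y \ci_{d\text{-sep}} {\bf A}$ in $\G'$ then gives $Y^{\bf a} \ci {\bf A}^{\bf a}$, which is exactly $Y({\bf a}) \ci {\bf A}$. The delicate points are checking that fixing the ${\bf A}$-arguments preserves both the Markov property and joint independence of the errors, and that this holds simultaneously for each fixed ${\bf a}$. Both follow because the errors are untouched and the constants ${\bf a}$ are nonrandom.
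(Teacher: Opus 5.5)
Your outer argument is the paper's: the paper gets $Y({\bf a}) \ci {\bf A}$ in one line by citing the SWIG result of Richardson and Robins, then writes the same two equalities. The gap is in your self-contained derivation of that independence, at the bullet claiming that ${\bf A}^{\bf a}$ coincides with the factual ${\bf A}$. For the modified SCM to have graph $\G'$, you must fix the treatment inputs of \emph{every} equation with a treatment parent, including the equations of other treatments. The paper's partial ordering allows edges inside ${\bf A}$, and these are the standard MSM situation; the longitudinal example has $A_0 \rightarrow A_1$. For a general DAG, as in the statement, a non-treatment can also sit between two treatments. With $A_1 \rightarrow A_2$, your construction gives $A_2^{\bf a} = f_{A_2}(\ldots, a_1, \epsilon_{A_2})$. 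This in general differs from $A_2 = f_{A_2}(\ldots, A_1, \epsilon_{A_2})$ on $\{A_1 \neq a_1\}$. Your justification ``ancestors of ${\bf A}$ do not depend on ${\bf A}$'' fails because $A_1$ is itself such an ancestor. So d-separation in $\G'$ only yields $Y({\bf a}) \ci {\bf A}^{\bf a}$, which is independence from the natural values (the random treatment nodes of the SWIG), not from ${\bf A}$. Your first equality is therefore not yet justified.

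The repair is short, and there are two options.

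\emph{Option (a).} Keep your construction and show, by induction along a topological order of $\G$, that on each of the events $\{{\bf A} = {\bf a}\}$ and $\{{\bf A}^{\bf a} = {\bf a}\}$ every $V^{\bf a}$ equals $V$. The reason is that each constant $a_j$ you substituted agrees with the input the factual equation receives. So the two events coincide, and $\E[Y({\bf a})] = \E[Y^{\bf a} \mid {\bf A}^{\bf a} = {\bf a}] = \E[Y^{\bf a} \mid {\bf A} = {\bf a}] = \E[Y \mid {\bf A} = {\bf a}]$. This is precisely the SWIG argument that the paper's citation packages.

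\emph{Option (b).} Prove $Y({\bf a}) \ci {\bf A}$ directly. $Y({\bf a})$ is a function of the errors indexed by $\mathrm{an}_{\G'}(Y)$, which contains no treatment. ${\bf A}$ is a function of the errors indexed by $\mathrm{an}_{\G}({\bf A})$, i.e.\ by ${\bf A}$ together with $\mathrm{an}_{\G}({\bf A}) \setminus {\bf A}$. Truncating any directed path into ${\bf A}$ at its first treatment shows $\mathrm{an}_{\G}({\bf A}) \setminus {\bf A} \subseteq \mathrm{an}_{\G'}({\bf A})$. A common $\G'$-ancestor of $Y$ and ${\bf A}$ would produce a collider-free path violating the d-separation in $\G'$. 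Hence the two index sets are disjoint, and joint independence of the errors finishes the argument.

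With either fix, your route is a valid and more self-contained replacement for the paper's appeal to SWIGs.
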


Thus, when treatments and outcome are marginally unconfounded, selecting significant predictors of $Y$ in the conditional mean model $\E[Y \mid {\bf A}]$ is equivalent to selecting ${\bf A}^*$ in the MSM $\E[Y({\bf a})]$.
If the assumption that the PDF or PMF of $Y$ conditional on ${\bf A}$ is a member of the exponential family also holds, we can simply employ the BIC to select ${\bf A}^*$ in the MSM.
Most observed distributions, however, do not satisfy marginal unconfoundedness, but reweighting is a common method used to obtain a marginally unconfounded distribution from a confounded one, provided a sufficiently rich set of confounding or mediating variables are observed \citep{ref:hernan2000marginal}.

Let $\G$ be the DAG associated with the SCM of the observed data.
Consider simulating treatment variables ${\bf A}'$ that have the same support as ${\bf A}$ from a user-specified distribution $p^*({\bf A}')$.
We seek weights $W({\bf O}, {\bf A}')$ that yield a {\it target reweighted distribution} $q({\bf A}', Y)$, defined as a distribution where the causal relationship between ${\bf A}'$ and ${\bf Y}$ is the same as that between ${\bf A}$ and ${\bf Y}$ in $p({\bf O})$, but ${\bf A}'$ and $Y$ are marginally unconfounded.
Specifically, these weights must satisfy $W({\bf O}, {\bf A}') p({\bf O}) = q({\bf A}', Y)$.
We now propose weights that will be valid for any $\G$ where $p(Y({\bf a}))$ is identified\footnote{For a general theory on graphs where $p(Y({\bf a}))$ is identified, we refer the reader to \cite{ref:shpitser2006identification,ref:huang2006identifiability}.}.
Let the target reweighted distribution be
\begin{align*}
    q({\bf A}'= {\bf a}', Y=y) &= p(Y({\bf A} = {\bf a}')=y) p^*({\bf A}'={\bf a}').
\end{align*}
By construction, causal relationships between ${\bf A}'$ and $Y$ are preserved in any distribution of this form, and marginal unconfoundedness is satisfied.
To see this, first notice that ${\bf A}'$ is simulated and is, thus, randomized.
Next, consider the distribution of $Y({\bf A} = {\bf a}')$.
This is the value of $Y$ in the observed distribution had we intervened on ${\bf A}$ to be the observed values of the simulated treatments ${\bf A}'$.
For a discussion on the graph corresponding to this target reweighted distribution $q({\bf A}', Y)$, please see \cref{app:graph_of_reweighted_dist}.

The weights that move the observed distribution to the target reweighted distribution are thus $W({\bf O}, {\bf A}') = \frac{q({\bf A}', Y)}{p({\bf O})}$, and multiple representations of this expression are possible; see Algorithm 2 in \cite{ref:bhattacharya2022semiparametric} for discussions on inverse-probability weighted versions of $p(Y({\bf a}))$ in graphs where it is identified.
For shorthand, we will refer to the weights as simply $W$.
We impose the constraint that, for all possible values of $({\bf O}, {\bf A}')$, $0 < W < \infty$\footnote{This is a more general version of the {\it positivity} assumption common in the causal inference literature.}.
Although these weights are valid in general, they are usually a ratio of products of many conditional distributions, which may be difficult to estimate in practice.
For instance, one representation of the weights valid for \cref{fig:example_dag_backdoor} is 
\begin{align*}
    W = \frac{\sum_C p(Y \mid {\bf A}', {\bf C}) p({\bf C}) p^*({\bf A}')}{p(Y \mid {\bf A}, {\bf C}) p({\bf A} \mid {\bf C}) p({\bf C})},
\end{align*}
which requires estimating two distinct nuisance functions.

For ease of exposition in the rest of this paper, we will consider target 
reweighted distributions of the form $q({\bf V}, {\bf A}')$, where we sometimes set ${\bf A}' \equiv {\bf A}$, that utilize 
weights that are simple functions of conditional distributions.
It is not always obvious how to find such $q({\bf V}, {\bf A}')$ target distributions with simple weights, but, in \cref{app:procedure_find_weights}, we present a heuristic procedure for finding these distributions and their corresponding weights.
We show that this procedure is valid for the DAGs shown in \cref{fig:examples} as well as the so-called napkin graph \citep{ref:pearl2018book,ref:guo2025causal}.
A general characterization of identified functionals using inverse weights with a simple form is complex and beyond the scope of this work.

\begin{example}[continues=exp:example_backdoor]
    When all confounders are observed, the researcher can use the weights $W = \frac{p^*({\bf A})}{p({\bf A} \mid {\bf C})}$, where $p^*({\bf A})$ is a user-defined distribution for the potential treatments.
    Here, we set ${\bf A}' \equiv {\bf A}$.
    To see why these weights are valid, consider multiplying them onto the DAG factorization of \cref{fig:example_dag_backdoor}:
    \begin{align*}
        &\frac{p^*({\bf A})}{p({\bf A} \mid {\bf C})} \times p(Y \mid {\bf A}, {\bf C}) p({\bf A} \mid {\bf C}) p({\bf C}) \\
        &= p(Y \mid {\bf A}, {\bf C}) p^*({\bf A}) p({\bf C}) = q(Y, {\bf A}, {\bf C}).
    \end{align*}
    $q(Y, {\bf A}, {\bf C})$ corresponds to a target reweighted distribution where marginal unconfoundedness is satisfied.
\end{example}

\begin{example}[continues=exp:example_frontdoor]
    The researcher can use the weights $W = \frac{p^*({\bf A}') p(M \mid {\bf A}')}{p(M \mid {\bf A})}$, to obtain a reweighted distribution $q({\bf A}, {\bf A}', M, Y)$ where the simulated treatments ${\bf A}'$ satisfy marginal unconfoundedness with respect to the outcome.
    At a high level, these weights work by first breaking the dependence of $M$ on the confounded treatments ${\bf A}$ and then inducing dependence of $M$ on the simulated treatments ${\bf A}'$\footnote{Due to space constraints, please see \cref{app:weights_frontdoor} for more details on why these weights yield the desired reweighted distribution $q({\bf V}, {\bf A}')$.}.
\end{example}

Now, the BIC can only compare models with respect to the observed distribution $p({\bf O})$.
To compare models with respect to a target reweighted distribution $q({\bf V}, {\bf A}')$, we need a reweighted BIC (rBIC).
We now define and prove consistency for the rBIC when the ground-truth values of the weights, or oracle values, are known.
Afterwards, we will consider the case where the weights must be estimated from data.

\begin{restatable}[Oracle rBIC]{definition}{reweightedBicDefinition}\label{def:reweighted_bic}
    Let $M$ be a model characterized by the parameters $\btheta_W$ for the random variables ${\bf X}$ given ${\bf Z}$ in a target reweighted distribution $q({\bf V}, {\bf A}')$.
    Let $\W = (W_1, \dots, W_n)$ be the oracle weights computed for each of the $n$ observations normalized such that $\sum_{i=1}^n W_i = n$\footnote{A vector of arbitrary weights $\W$ can be normalized to sum to $n$ by multiplying each $W_i$ by $n / (\sum_{i=1}^n W_i)$.}.
    Let $P = \sum_{i=1}^n W_i \log p({\bf X}_i \mid {\bf Z}_i, \hat{\btheta}_W({\bf Z}_i), M_r)$, where $\hat{\btheta}_W$ are the reweighted MLE estimates\footnote{For more details on how to compute a reweighted MLE, please see \cref{app:reweighted_mle}.}.
    The rBIC is defined as
    \begin{align}
        \gamma_W(\D, \W, M_r) &= P - \frac{d_r}{2} n^{\alpha_d},
        \label{eq:reweighted_bic}
    \end{align}
    where $0 < \alpha_d < 1$ is the {\it scaling parameter} of the penalty term.
\end{restatable}

An important change from the classical BIC to the rBIC is that the scaling of the penalty term is modified from $\log n$ to $n^{\alpha_d}$.
The reason for this change will become apparent when we discuss using the estimated weights for the rBIC rather than the oracle weights.

Let $\btheta^*_W$ be the true values of the parameters for ${\bf X}$ given ${\bf Z}$ in a target reweighted distribution $q({\bf V}, {\bf A}')$.
We now state the theorem for consistency of the rBIC.

\begin{restatable}[Consistency of rBIC]{theorem}{reweightedBicConsistency}\label{thm:reweighted_bic_consistency}
    Let ${\bf X}$ and ${\bf Z}$ be two disjoint sets of random variables, and suppose that each model for ${\bf X}$ given ${\bf Z}$ in $\M$ with respect to the reweighted distribution is a member of the exponential family.
    Then, the rBIC is a consistent scoring criterion.
\end{restatable}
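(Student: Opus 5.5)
We prove consistency by comparing the rBIC scores of any two candidate models $M_1, M_2$ directly. The comparison reduces to a difference of reweighted maximized log-likelihoods, normalized by $n$, plus the penalty difference $\frac{d_1-d_2}{2}n^{\alpha_d}$. The key observation is that, with oracle weights normalized to sum to $n$, the reweighted log-likelihood $\frac{1}{n}\sum_i W_i \log p({\bf X}_i \mid {\bf Z}_i, \btheta, M_r)$ is an empirical average whose population limit is $\E_p[W \log p(\cdot)] = \E_q[\log p({\bf X}\mid {\bf Z}, \btheta, M_r)]$. This holds because $W p({\bf O}) = q$, and because the normalizing factor $n/\sum_i W_i \to 1/\E_p[W] = 1$ almost surely by the LLN. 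So the reweighted objective behaves like an ordinary log-likelihood under $q$, up to a change in the variance of the fluctuations.

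\textbf{Case (i): $\btheta^*_W \in M_2\setminus M_1$.} We show the reweighted MLE in each model converges to the $q$-Kullback--Leibler projection of the truth onto that model. For exponential families the objective is concave in the natural parameter, so uniform convergence over compacts plus concavity gives this. Hence
\begin{align*}
\frac{1}{n}\left[P_2 - P_1\right] \to \E_q[\log p(\cdot\mid \btheta^*_W)] - \sup_{\btheta\in M_1}\E_q[\log p(\cdot\mid\btheta)] = \delta > 0,
\end{align*}
where $\delta>0$ because $\btheta^*_W\notin M_1$ and the model is identifiable (strict concavity of $b$, with $M_1$ closed in $\mathrm{int}\Theta$ as in the regularity conditions). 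The penalty difference is $O(n^{\alpha_d}) = o(n)$ since $\alpha_d<1$. Therefore $\gamma_W(M_2) - \gamma_W(M_1) \to +\infty$ in probability.

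\textbf{Case (ii): $\btheta^*_W \in M_1\cap M_2$ with $d_2<d_1$.} We show $P_1 - P_2 = O_p(1)$. We expand the reweighted log-likelihood around $\btheta^*_W$ as in the proofs of \cref{prop:bic_derivation} and \cref{thm:bic_consistency}. The weighted score $\frac{1}{\sqrt n}\sum_i W_i \nabla\log p(\cdot\mid\btheta^*_W)$ has mean zero under $p$, since it equals the $q$-expectation of the score at the truth. It also has finite variance given bounded weights, so by the CLT it is $O_p(1)$. The weighted Hessian converges to the $q$-Fisher information $I_q$, which is positive definite. A second-order Taylor expansion then gives
\begin{align*}
P_r - \sum_i W_i \log p(\cdot\mid\btheta^*_W) = \tfrac{1}{2} S_n^\top I_{q,r}^{-1} S_n + o_p(1) = O_p(1)
\end{align*}
for $r=1,2$. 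Unlike the classical case this is not $\chi^2$, because the weighted score has a sandwich variance, but only the $O_p(1)$ bound matters. Thus $\gamma_W(M_2)-\gamma_W(M_1) = \frac{d_1-d_2}{2}n^{\alpha_d} + O_p(1) \to +\infty$ since $\alpha_d>0$.

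The conditional case is handled as in \cref{cor:bic_consistency_conditional}. We treat the model for ${\bf X}$ given ${\bf Z}$ with parameters $\btheta({\bf Z})$ and apply the same argument to the conditional log-likelihood, since the ${\bf Z}$-marginal factor under $q$ does not depend on the model.

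The main obstacle is case (ii). We need the $O_p(1)$ bound on the likelihood-ratio term without the information-matrix equality, and we need consistency of the reweighted MLE under the data-dependent normalization. We handle the normalization by writing $W_i^{\text{norm}} = W_i\cdot c_n$ with $c_n\to 1$. This multiplies both $P_r$ by the same factor, so the difference remains $O_p(1)$, and in case (i) it does not change the sign of the $O(n)$ gap. Positivity ($0<W<\infty$, taken as bounded) gives the moment conditions needed for the LLN and CLT.
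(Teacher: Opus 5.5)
Your proposal is correct in outline, but it takes a different route from the paper.

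\textbf{How the paper argues.} The paper never analyses the reweighted MLE or expands the likelihood model by model. It uses the exponential-family form to write $\gamma_W(\D,\W,M_r)=\sum_i W_i\log h({\bf X}_i)+n\,g(\bar{\bf Y}_W,M_r)-\frac{d_r}{2}n^{\alpha_d}$, with $g({\bf y},M_r)=\sup_{\btheta\in M_r}({\bf y}\btheta-b(\btheta))$. The $h$-term cancels in score differences, so everything reduces to one random vector, $\bar{\bf Y}_W=\frac1n\sum_i W_i{\bf T}({\bf X}_i)$. This vector converges to $\nabla b(\btheta^*_W)=\E_q[{\bf T}({\bf X})]$ at rate $n^{-1/2}$.
\begin{itemize}
\item Case (i) follows from continuity of $g$ plus the gap created by the unique maximiser of $\nabla b(\btheta^*_W)\btheta-b(\btheta)$; this gap is your $\delta$.
\item Case (ii) follows from the sandwich $0\le g(\bar{\bf Y}_W,M_r)-U_n\le\hat U_n-U_n=O_p(1/n)$. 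Here $U_n$ and $\hat U_n$ evaluate $\bar{\bf Y}_W\btheta-b(\btheta)$ at $\btheta^*_W$ and at the unrestricted MLE, respectively.
\item The conditional case is reduced to a joint exponential family via a reweighted empirical ${\bf Z}$-marginal.
\end{itemize}

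\textbf{What each route buys.} The paper's route is more economical. It needs no uniform LLN, no consistency of constrained MLEs, and, in case (ii), no local structure on $M_r$ beyond $\btheta^*_W\in M_r$. The same representation is then reused, through local Lipschitz continuity of $g$, for the estimated-weights theorem. Your M-estimation route is more general, since it only needs a concave or regular weighted likelihood. It treats the conditional model directly, rather than through an empirical ${\bf Z}$-marginal that is awkward for continuous ${\bf Z}$. It also makes explicit that the weighted likelihood ratio has a sandwich-type limit rather than a $\chi^2$ one.

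\textbf{Small points to tighten.}
\begin{itemize}
\item Case (i) only needs convergence of the maximised values. The constrained MLE itself need not converge if the KL projection onto a non-convex $M_1$ is not unique.
\item Your per-model expansion $\tfrac12 S_n^\top I_{q,r}^{-1}S_n$ requires each $M_r$ to be locally a smooth manifold at $\btheta^*_W$, with a $\sqrt n$-consistent constrained MLE. You can avoid this exactly as the paper does: bound $P_r-\sum_i W_i\log p(\cdot\mid\btheta^*_W)$ between $0$ and the unrestricted likelihood ratio, and expand only the full model.
\item Assuming bounded weights is stronger than needed, and it fails in both of the paper's examples. For instance, the frontdoor ratio of normal densities is unbounded in $M$. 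Finite second moments of $W{\bf T}({\bf X})$ suffice.
\item $b$ is strictly convex, not concave.
\end{itemize}
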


Note that ${\bf Z} = \emptyset$ corresponds to the case described in \cref{thm:bic_consistency}.
We give a proof intuition for \cref{thm:reweighted_bic_consistency} in \cref{app:reweighted_bic_consistency_intuition}.
The key drawback of the oracle rBIC is that it requires access to the ground-truth weights.
In practice, we often do not have access to such information and must fit nuisance models to estimate the weights $\W$ from data.
Let $\hat{\W} = (\hat{W}_1, \dots, \hat{W}_n)$ be the set of estimated weights for $\W$.
We now show that, under flexible assumptions regarding the rate of convergence of $\hat{\W}$, the rBIC using the estimated weights is still consistent for model selection.

\begin{restatable}[Consistency of rBIC with Estimated Weights]{theorem}{consistencyEstimatedWeights}\label{thm:consistency_estimated_weights}
    Suppose that
    \begin{align}
        \bigg( \frac{1}{n} \sum_{i=1}^n ( \hat{W}_i - W_i )^2 \bigg)^{1/2} = o_p(n^{-\alpha_W}). \label{eq:weights_converge}
    \end{align}
    That is, the estimated weights are $o_p(n^{-\alpha_W})$-consistent in root-mean squared error (RMSE).
    If $\alpha_W > 0$ and $1-\alpha_W < \alpha_d$, then, under the same conditions as those of \cref{thm:reweighted_bic_consistency}, the rBIC with estimated weights is a consistent scoring criterion.
\end{restatable}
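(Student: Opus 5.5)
The plan is to reduce the estimated-weight case to the oracle case of \cref{thm:reweighted_bic_consistency}. I will show that replacing $\W$ by $\hat{\W}$ perturbs each score difference $\gamma_W(\D,\hat{\W},M_1)-\gamma_W(\D,\hat{\W},M_2)$ by an amount that is $o_p(n^{\alpha_d})$. For each model $M_r$, write $\hat P_r = \sum_i \hat W_i \ell_{r,i}(\hat{\btheta}_{\hat W})$ and $P_r = \sum_i W_i \ell_{r,i}(\hat{\btheta}_W)$, where $\ell_{r,i}(\btheta) = \log p({\bf X}_i\mid{\bf Z}_i,\btheta({\bf Z}_i),M_r)$. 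From the proof of \cref{thm:reweighted_bic_consistency} I take two facts. In case (i), $P_2-P_1$ grows linearly in $n$, since the misspecified model has a strictly smaller expected reweighted log-likelihood. In case (ii), $P_1-P_2 = O_p(1)$, because it is the usual weighted likelihood-ratio statistic between nested correctly specified models. The penalty difference $\frac{d_1-d_2}{2}n^{\alpha_d}$ is exactly the same with oracle or estimated weights. So it suffices to show $\hat P_r - P_r = o_p(n^{\alpha_d})$ in case (ii), and $o_p(n)$ in case (i).

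The key decomposition is
\begin{align*}
\hat P_r - P_r = \Big[\textstyle\sum_i \hat W_i \ell_{r,i}(\hat{\btheta}_{\hat W}) - \sum_i \hat W_i \ell_{r,i}(\hat{\btheta}_{W})\Big] + \sum_i (\hat W_i - W_i)\,\ell_{r,i}(\hat{\btheta}_W).
\end{align*}
The first bracket is nonnegative because $\hat{\btheta}_{\hat W}$ maximizes the $\hat W$-weighted likelihood. Its size follows from a second-order expansion around the maximizer: concavity of the exponential-family log-likelihood makes it of order $n\|\hat{\btheta}_{\hat W}-\hat{\btheta}_W\|^2$. The weighted score equations show that $\|\hat{\btheta}_{\hat W}-\hat{\btheta}_W\|$ is controlled by $\frac1n\sum_i(\hat W_i-W_i)\nabla\ell_{r,i}$. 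By Cauchy--Schwarz and \cref{eq:weights_converge} this is $o_p(n^{-\alpha_W})$, provided the scores have bounded second moments (a regularity condition). For the second term, Cauchy--Schwarz gives
\begin{align*}
\Big|\sum_i (\hat W_i - W_i)\ell_{r,i}\Big| \le n\Big(\tfrac1n\textstyle\sum_i(\hat W_i-W_i)^2\Big)^{1/2}\Big(\tfrac1n\sum_i \ell_{r,i}^2\Big)^{1/2} = o_p(n^{1-\alpha_W}),
\end{align*}
using $\frac1n\sum_i\ell_{r,i}(\hat{\btheta}_W)^2=O_p(1)$ from the law of large numbers and consistency of $\hat{\btheta}_W$. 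Hence $\hat P_r-P_r = o_p(n^{1-\alpha_W}) + o_p(n^{1-2\alpha_W})$. Since $\alpha_W>0$, this is $o_p(n^{1-\alpha_W})$, and the hypothesis $1-\alpha_W<\alpha_d$ makes it $o_p(n^{\alpha_d})$. The weight normalization to sum $n$ should be handled first: it changes $\hat W_i$ by a factor $1+o_p(n^{-\alpha_W})$, which does not alter the rate.

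The cases then finish directly. In case (i) the difference of estimated scores is $P_2-P_1 + o_p(n^{1-\alpha_W}) + O(n^{\alpha_d})$. Both error terms are $o(n)$ because $\alpha_d<1$ and $\alpha_W>0$, so the linear gap dominates. In case (ii) the difference is $\frac{d_1-d_2}{2}n^{\alpha_d} + O_p(1)+o_p(n^{\alpha_d})$, which diverges to $+\infty$ because $d_1>d_2$.

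The main obstacle is the crude Cauchy--Schwarz bound on $\sum_i(\hat W_i-W_i)\ell_{r,i}$. It does not cancel between models, so it really is of order $n^{1-\alpha_W}$ and not smaller, which is why the condition $1-\alpha_W<\alpha_d$ is needed. Bounding it rigorously requires uniform moment control of $\ell_{r,i}$ near $\btheta^*_W$. I would obtain this from the regularity conditions of \cref{app:regularity_conditions} together with the positivity bound $0<W<\infty$.
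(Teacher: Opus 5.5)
Your proof is correct and has the same skeleton as the paper's. You compare each model's estimated-weight score with its oracle score, show the per-model perturbation is $o_p(n^{1-\alpha_W})$, and then reuse the linear gap (case (i)) and the $O_p(1)$ gap (case (ii)) from \cref{thm:reweighted_bic_consistency}. The difference is in how the perturbation is bounded.

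\textbf{The paper's route.} It uses the sufficient-statistic form of \cref{lm:reweighted_likelihood_simplification}: the score is $\sum_i\hat W_i\log h({\bf X}_i)+n\,g(\bar{{\bf Y}}_{\hat W},M_r)-\frac{d_r}{2}n^{\alpha_d}$, with $g({\bf z},M_r)=\sup_{\btheta\in M_r}({\bf z}\btheta-b(\btheta))$. The $\log h$ part does not depend on the model and cancels exactly between $M_1$ and $M_2$. Cauchy--Schwarz is applied only to $\bar{{\bf Y}}_{\hat W}-\bar{{\bf Y}}_W$. Local Lipschitz continuity of $g$ (its gradient is the bounded maximizer) then turns this into $O_p(n^{1-\alpha_W})$. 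This needs only second moments of ${\bf T}$ and boundedness of the maximizer. However, it is tied to the exponential-family form and, for ${\bf Z}\neq\emptyset$, to the empirical-distribution reduction.

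\textbf{Your route.} You use a generic M-estimation decomposition: a weight change at a fixed parameter plus a re-optimization gain controlled by the weighted score equations, applied directly to conditional log-likelihoods. This bypasses the reduction and is less tied to exponential families. It also handles the renormalization of $\hat{\W}$ explicitly; the paper takes $\sum_i\hat W_i=n$ for granted when it writes $n\,b(\btheta)$. The cost is moment control on the full $\ell_{r,i}$, including $\log h$, and an invertible weighted Hessian.

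You can skip the second-order step entirely. Since each fit maximizes its own criterion,
$\sum_i(\hat W_i-W_i)\ell_{r,i}(\hat{\btheta}_W)\le\hat P_r-P_r\le\sum_i(\hat W_i-W_i)\ell_{r,i}(\hat{\btheta}_{\hat W})$.
Cauchy--Schwarz at both ends then gives $o_p(n^{1-\alpha_W})$ as soon as $\hat{\btheta}_{\hat W}$ is bounded in probability.

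\textbf{A correction to your closing remark.} The term does partly cancel between models: its $\log h$ piece is exactly the paper's model-free term $A$. Moreover, in case (ii), with both weight vectors summing to $n$, the same sandwich shows (in the marginal exponential-family case) that the model-dependent remainder is at most of order $n\,\|\bar{{\bf Y}}_{\hat W}-\bar{{\bf Y}}_W\|\,(\|\hat{\btheta}_{W,r}-\btheta^*_W\|+\|\hat{\btheta}_{\hat W,r}-\btheta^*_W\|)$. This equals $o_p(n^{1/2-\alpha_W})+o_p(n^{1-2\alpha_W})$. So $1-\alpha_W<\alpha_d$ is sufficient for both proofs, but neither argument shows it is needed.
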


The intuition for the choices of $\alpha_W$ and $\alpha_d$ is as follows.
First, $\gamma_W(\D, \hat{\W}, M_r) = \gamma_W(\D, \hat{\W}, M_r) - \gamma_W(\D, \W, M_r) + \gamma_W(\D, \W, M_r)$.
In the difference of the first two terms, the penalty terms cancel.
Next, the maximized log-likelihood, which is a function of the reweighted mean in exponential family distributions, grows at rate $O_p(n)$.
Since the reweighted mean with estimated weights converges to the true reweighted mean at rate $O_p(n^{-\alpha_W})$, this results in the difference in log-likelihoods being $O_p(n^{1-\alpha_W})$.
To make sure that the penalty term of the remaining $\gamma_W(\D, \W, M_r)$ can dominate this growing term, we choose $\alpha_d > 1-\alpha_W$.

Valid choices for $\alpha_d$ depend on how quickly estimated weights converge to the oracle weights.
For instance, if we assume parametric models for the weights that are typically $o_P(n^{-1/2})$-consistent in RMSE, we can choose $\alpha_d=1/2+\epsilon \approx 1/2$, where $\epsilon=10^{-10}$. 
On the other hand, flexible machine learning methods under sparsity constraints can attain $o_p(n^{-1/4})$-consistency in RMSE \citep{ref:chernozhukov2018double}.
In such scenarios, $\alpha_d=3/4+\epsilon \approx 3/4$ is an appropriate choice.

\subsection{Algorithm for Selecting Direct Causes}
\label{sec:algorithm}

\begin{algorithm}[ht]
\caption{Infer ${\bf A}^*$ for the MSM $\E[Y({\bf a})]$.}\label{alg:search}
\begin{algorithmic}[1]
    \State ${\bf A}^* \gets \emptyset$
    \State $M^* \gets$ a model for $Y$ conditional on ${\bf A}^*$
    \State $\hat{\W} \gets$ estimate for $\W$ from nuisance models
    \For{each $A \in {\bf A}$}
        \State $M' \gets$ a model for $Y$ conditional on ${\bf A}^* \cup \{A\}$
        \If{$\gamma_W(\D, \hat{\W}, M') > \gamma_W(\D, \hat{\W}, M^*)$}
            \State ${\bf A}^* \gets {\bf A}^* \cup \{A\}$
            \State $M^* \gets M'$
        \EndIf
    \EndFor
    \For{each $A \in {\bf A}^*$}
        \State $M' \gets$ a model for $Y$ conditional on ${\bf A}^* \setminus \{A\}$
        \If{$\gamma_W(\D, \hat{\W}, M') > \gamma_W(\D, \hat{\W}, M^*)$}
            \State ${\bf A}^* \gets {\bf A}^* \setminus \{A\}$
            \State $M^* \gets M'$
        \EndIf
    \EndFor
    \State return ${\bf A}^*$
\end{algorithmic}
\end{algorithm}

We propose \cref{alg:search} -- following the greedy equivalence search (GES) algorithm \citep{ref:chickering2002optimal} -- to select the variables ${\bf A}^*$ that are significant in the MSM $\E[Y({\bf a})]$\footnote{For an analysis of the time and space complexities of \cref{alg:search}, please refer to \cref{app:complexity_search_alg}.}.
Importantly, the validity of \cref{alg:search} relies on the rBIC and its consistency property.

The assumptions of \cref{alg:search} are as follows: (i) samples are IID, (ii) $Y({\bf a})=Y$ if ${\bf A}={\bf a}$, (iii) the SCM of the data is associated with a DAG $\G$, (iv) the PDF or PMF of the interventional distribution $p(Y({\bf a}))$ is a member of the exponential family, (v) based on $\G$, the oracle weights satisfy $W p({\bf V}) = q({\bf V}, {\bf A}')$, (vi) for all possible values of $({\bf O}, {\bf A}')$, $0 < W < \infty$, (vii) the estimated weights $\hat{W}$ satisfy \cref{eq:weights_converge} for some $\alpha_W > 0$, and (viii) $\alpha_d$ is appropriately chosen given $\alpha_W$.
Under the above assumptions and Lemmas 9 and 10 of \cite{ref:chickering2002optimal}, \cref{alg:search} will select the correct ${\bf A}^*$ as the sample size goes to infinity.

\section{Simulation Study}
\label{sec:simulation_study}

\begin{figure*}[t]
    \begin{center}
        \begin{subfigure}[t]{0.49\textwidth}
            \centering
            \includegraphics[align=c,bmargin=6pt,scale=0.45]{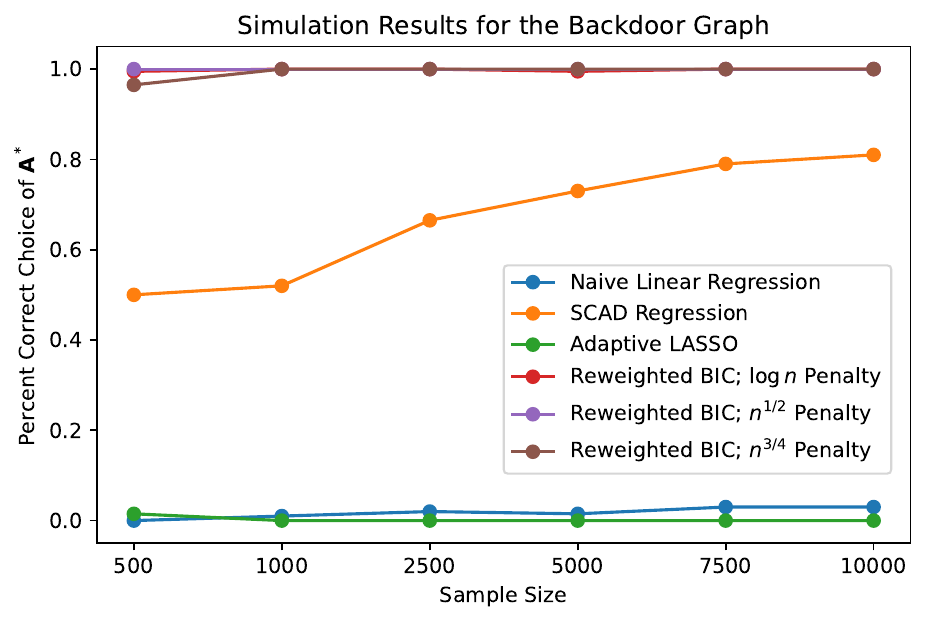}
            \label{fig:backdoor_results}
        \end{subfigure}
        \begin{subfigure}[t]{0.49\textwidth}
            \centering
            \includegraphics[align=c,bmargin=6pt,scale=0.45]{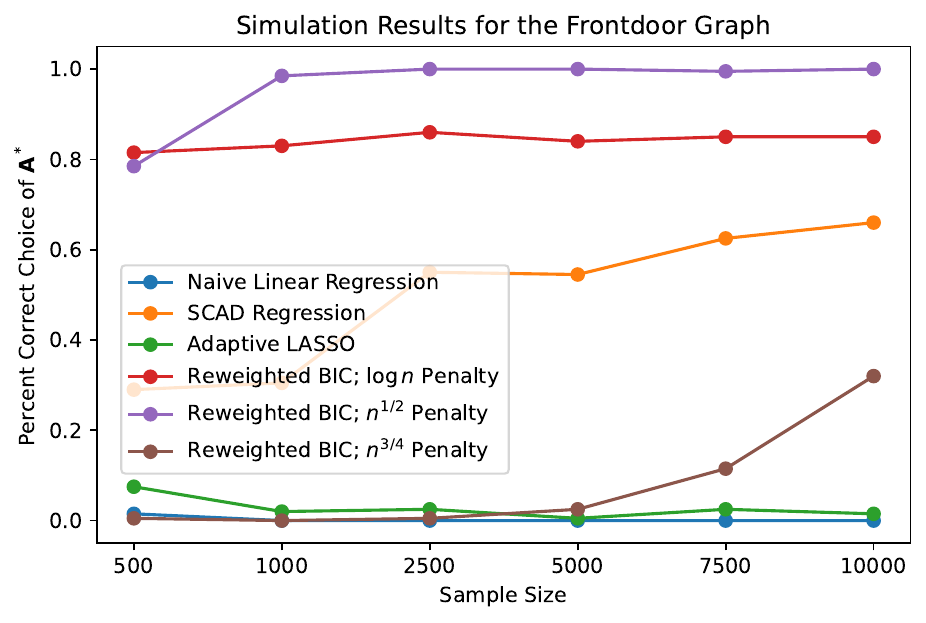}
            \label{fig:frontdoor_results}
        \end{subfigure}
    \end{center}
    \caption{Simulation results for the backdoor graph and frontdoor graph settings with estimated weights.}
    \label{fig:results}
\end{figure*}

We now present a simulation study assessing how reliably the rBIC and alternative methods for model selection with estimated weights can select ${\bf A}^*$ at various sample sizes in two DGPs\footnote{All code implementing the simulations and for generating the figures is publicly available at \url{https://github.com/jacobmchen/reweighted_bic}.}.
The DGPs generate data according to the DAGs shown in \cref{fig:example_dag_backdoor,fig:example_dag_frontdoor}.
In both simulations, $A_1$ and $A_3$ are significant predictors of the outcome in the MSM while $A_2$ is not.
Variables in ${\bf A}$ are Bernoulli random variables while all other variables are normally distributed.
We use a single confounding variable in both simulations, regardless of whether it is observed.
For precise details on the two DGPs, please see \cref{app:sims_dgp}.

After generating the data, we trial \cref{alg:search} with three choices for $\alpha_d$ in the rBIC: $\log n$, $n^{1/2}$, and $n^{3/4}$.
If the output of \cref{alg:search} satisfies ${\bf A}^* = \{A_1, A_3\}$, then we count it as a correct choice of ${\bf A}^*$.
Otherwise, it is an incorrect choice.
In both settings, the weights are estimated using parametric models.
Details on estimation of the weights are given in \cref{app:sims_weights}.

We then compare our method against three alternatives: ordinary least squares (OLS) regression, adaptive LASSO, and linear regression with the SCAD penalty.
The latter two methods are estimated using gradient descent with a reweighted loss function.
Following \cite{ref:jaman2025penalized}, an estimated parameter from the regression methods is considered to be $0$ if its absolute value is less than $0.001$.
A potential cause $A_i$ is included in a regression method's selected ${\bf A}^*$ if its estimated parameter is non-zero, and the criterion for correct choice of ${\bf A}^*$ is the same as above.
We run our experiments at sample sizes ranging from $500$ to $10000$, and, at each sample size, we run $200$ repetitions.

We chose adaptive LASSO and regression with the SCAD penalty to compare against because they possess the {\it oracle property}.
This property states that the methods will, with probability one as the sample size tends to infinity, output zero for the parameters that are zero in ground-truth \citep{ref:zou2006adaptive,ref:fan2001variable}.
More details on these regression methods and their implementations are given in \cref{app:sims_alasso_scad}.

The results of the simulations are shown in \cref{fig:results}.
In both settings, regression with the SCAD penalty achieves the best results out of the regression methods, and it becomes increasingly accurate as the sample size increases.
Adaptive LASSO and OLS are inaccurate at all trialed sample sizes across both DGPs.

In the backdoor graph, all three trialed scalings of the penalty term are accurate in selecting ${\bf A}^*$.
This is because the weights in this setting are stable, and the estimated weights converge to the oracle weights quickly.
In the frontdoor graph, on the other hand, the rBIC with the $\log n$ penalty term achieves roughly $80\%$ accuracy in selecting ${\bf A}^*$, even at large sample sizes.
The $n^{1/2}$ penalty is accurate starting at sample size $1000$, and the accuracy of the $n^{3/4}$ penalty increases at large sample sizes, which aligns with our asymptotic theory.

The weights in the frontdoor graph are a ratio of densities of a normally distributed random variable, so small deviations from the ground-truth densities propagate into large differences in the estimated weights.
Furthermore, division by small numbers occasionally results in numerical instability.
These observations explain why the rBIC with the $\log n$ penalty is inconsistent in the frontdoor graph.
The $n^{1/2}$ and $n^{3/4}$ penalties, however, remain consistent at large samples, despite the issues with the weights.
Simulation results under oracle weights are shown in \cref{app:results_oracle_weights}.

\section{Conclusion}
\label{sec:conclusion}

In this paper, we proposed a novel generalization of the Bayesian information criterion, the rBIC, that is applicable to reweighted distributions arising in causal inference and missing data, and demonstrated how it can be applied to the task of selecting significant causes of an outcome in a marginal structural model.
We proved that the rBIC is consistent for model selection with respect to the target reweighted distribution and showed how to adjust its penalty term to retain its consistency property when weights are estimated from data using nuisance functions.
Finally, we concluded with a simulation study that demonstrated the asymptotic properties of the rBIC.

There are many potential avenues for future work.
Given a DAG with unmeasured variables where the average causal effect is identified, is there a general theory for deriving ``simple'' weights that yield a marginally unconfounded distribution?
Can the rBIC consistently select models in the presence of missing data where weights for removing missing data bias are derived from a missingness graph \citep{ref:mohan2021graphical}?
Is it possible to use the rBIC to learn Verma constraints in nested Markov models \citep{ref:richardson2023nested}?
These open questions are inspirations for future investigation.

\newpage

\subsection*{AI Use Statement}

In this work, we used generative AI tools as a supplement to textbook learning of foundational mathematical and statistical concepts such as maximum likelihood estimators and Lipschitz continuity.
These topics are ingredients in our proofs of mathematical claims, and all outputs of generative AI tools on this topic were cross-checked with textbooks such as \cite{ref:van2000asymptotic}.
Additionally, we used generative AI tools to generate template code for creating scientific figures and images, but all final code was written by a human. 
We also used generative AI tools to give suggestions for rephrasing select sentences and the title of the paper; less than five sentences in total throughout this research paper were rephrased with the assistance of generative AI.

We have not used generative AI tools to generate synthetic data sets, help develop theoretical models or conceptual frameworks, formulate mathematical claims, assist in the writing of proofs, propose or refine hypotheses, design or provide feedback on research methodology or experiments, implement methods, assist with translation, clean or reformat datasets, support qualitative or thematic data analysis, nor interpret results.

The authors wrote this entire document, including the main text of the paper, the entirety of the Appendix, and this AI use statement.
The authors also wrote, with some AI assistance that provided template code for figure generation, the entirety of the accompanying code.
None of the text in this document was copy-pasted from the output of a generative AI tool or model.
We have reviewed all AI-assisted work.
We take responsibility for the final content of this work.

\subsection*{Acknowledgements}

The authors are grateful for fruitful discussions with (in alphabetical order by first name) Beatrix Wen, Helen Guo, and Trung Phung in the early stages of this project. 
This work was supported by the National Institute of Health (Grant No. 1-R01-LM-014800-01A1), the Office of Naval Research (Grant No. N000142412701), and the National Institute of Health (Grant No. 1-R56-AI-0191526-01).

\bibliography{bibliography}

@article{ref:chickering2002optimal,
  title={Optimal structure identification with greedy search},
  author={Chickering, David Maxwell},
  journal={Journal of machine learning research},
  volume={3},
  number={Nov},
  pages={507--554},
  year={2002}
}

@article{ref:haughton1988choice,
  title={On the choice of a model to fit data from an exponential family},
  author={Haughton, Dominique MA},
  journal={The annals of statistics},
  pages={342--355},
  year={1988},
  publisher={JSTOR}
}

@article{ref:schwarz1978estimating,
  title={Estimating the dimension of a model},
  author={Schwarz, Gideon},
  journal={The annals of statistics},
  pages={461--464},
  year={1978},
  publisher={JSTOR}
}

@misc{ref:hernan2000marginal,
  title={Marginal structural models to estimate the causal effect of zidovudine on the survival of HIV-positive men},
  author={Hern{\'a}n, Miguel {\'A}ngel and Brumback, Babette and Robins, James M},
  journal={Epidemiology},
  volume={11},
  number={5},
  pages={561--570},
  year={2000},
  publisher={LWW}
}

@misc{ref:robins2000marginal,
  title={Marginal structural models and causal inference in epidemiology},
  author={Robins, James M and Hernan, Miguel Angel and Brumback, Babette},
  journal={Epidemiology},
  volume={11},
  number={5},
  pages={550--560},
  year={2000},
  publisher={Lww}
}

@article{ref:bahamyirou2022doubly,
  title={Doubly robust adaptive {LASSO} for effect modifier discovery},
  author={Bahamyirou, Asma and Schnitzer, Mireille E and Kennedy, Edward H and Blais, Lucie and Yang, Yi},
  journal={The International Journal of Biostatistics},
  volume={18},
  number={2},
  pages={307--327},
  year={2022},
  publisher={De Gruyter}
}

@article{ref:jaman2025penalized,
  title={Penalized {G}-estimation for effect modifier selection in a structural nested mean model for repeated outcomes},
  author={Jaman, Ajmery and Wang, Guanbo and Ertefaie, Ashkan and Bally, Mich{\`e}le and L{\'e}vesque, Ren{\'e}e and Platt, Robert W and Schnitzer, Mireille E},
  journal={Biometrics},
  volume={81},
  number={1},
  pages={ujae165},
  year={2025},
  publisher={Oxford University Press}
}

@article{ref:zou2006adaptive,
  title={The adaptive {LASSO} and its oracle properties},
  author={Zou, Hui},
  journal={Journal of the American statistical association},
  volume={101},
  number={476},
  pages={1418--1429},
  year={2006},
  publisher={Taylor \& Francis}
}

@article{ref:tibshirani1996regression,
  title={Regression shrinkage and selection via the {LASSO}},
  author={Tibshirani, Robert},
  journal={Journal of the Royal Statistical Society Series B: Statistical Methodology},
  volume={58},
  number={1},
  pages={267--288},
  year={1996},
  publisher={Oxford University Press}
}

@article{ref:fan2001variable,
  title={Variable selection via nonconcave penalized likelihood and its oracle properties},
  author={Fan, Jianqing and Li, Runze},
  journal={Journal of the American statistical Association},
  volume={96},
  number={456},
  pages={1348--1360},
  year={2001},
  publisher={Taylor \& Francis}
}

@article{ref:magruder2015nadir,
  title={Nadir oxygen delivery on bypass and hypotension increase acute kidney injury risk after cardiac operations},
  author={Magruder, J Trent and Dungan, Samuel P and Grimm, Joshua C and Harness, H Lynn and Wierschke, Chad and Castillejo, Stephen and Barodka, Viachaslau and Katz, Nevin and Shah, Ashish S and Whitman, Glenn J},
  journal={The Annals of thoracic surgery},
  volume={100},
  number={5},
  pages={1697--1703},
  year={2015},
  publisher={Elsevier}
}

@article{ref:ranucci2018goal,
  title={Goal-directed perfusion to reduce acute kidney injury: a randomized trial},
  author={Ranucci, Marco and Johnson, Ian and Willcox, Timothy and Baker, Robert A and Boer, Christa and Baumann, Andreas and Justison, George A and De Somer, Filip and Exton, Paul and Agarwal, Seema and others},
  journal={The Journal of thoracic and cardiovascular surgery},
  volume={156},
  number={5},
  pages={1918--1927},
  year={2018},
  publisher={Elsevier}
}

@book{ref:pearl2014probabilistic,
  title={Probabilistic reasoning in intelligent systems: networks of plausible inference},
  author={Pearl, Judea},
  year={2014},
  publisher={Elsevier}
}

@book{ref:pearl2009causality,
  title={Causality},
  author={Pearl, Judea},
  year={2009},
  publisher={Cambridge university press}
}

@book{ref:lehmann2005testing,
  title={Testing statistical hypotheses},
  author={Lehmann, Erich Leo and Romano, Joseph P},
  year={2005},
  publisher={Springer}
}

@book{ref:cramer1999mathematical,
  title={Mathematical methods of statistics},
  author={Cram{\'e}r, Harald},
  volume={9},
  year={1999},
  publisher={Princeton university press}
}

@book{ref:konishi2008information,
  title={Information criteria and statistical modeling},
  author={Konishi, Sadanori and Kitagawa, Genshiro},
  year={2008},
  publisher={Springer}
}

@article{ref:greco2020weighted,
  title={Weighted likelihood mixture modeling and model-based clustering},
  author={Greco, Luca and Agostinelli, Claudio},
  journal={Statistics and Computing},
  volume={30},
  number={2},
  pages={255--277},
  year={2020},
  publisher={Springer}
}

@article{ref:vansteelandt2014structural,
  title={Structural nested models and {G}-estimation: the partially realized promise},
  author={Vansteelandt, Stijn and Joffe, Marshall},
  journal={Statistical science},
  pages={707--731},
  year={2014},
  publisher={JSTOR}
}

@book{ref:van2000asymptotic,
  title={Asymptotic statistics},
  author={Van der Vaart, Aad W},
  volume={3},
  year={2000},
  publisher={Cambridge university press}
}

@incollection{ref:pearl2022causal,
  title={Causal diagrams for empirical research (with Discussions)},
  author={Pearl, Judea},
  booktitle={Probabilistic and causal inference: The works of Judea Pearl},
  pages={255--316},
  year={2022}
}

@book{ref:barndorff2014information,
  title={Information and exponential families: in statistical theory},
  author={Barndorff-Nielsen, Ole},
  year={2014},
  publisher={John Wiley \& Sons}
}

@article{ref:richardson2023nested,
  title={Nested {M}arkov properties for acyclic directed mixed graphs},
  author={Richardson, Thomas S and Evans, Robin J and Robins, James M and Shpitser, Ilya},
  journal={The Annals of Statistics},
  volume={51},
  number={1},
  pages={334--361},
  year={2023},
  publisher={Institute of Mathematical Statistics}
}

@inproceedings{ref:shpitser2006identification,
  title={Identification of joint interventional distributions in recursive semi-{M}arkovian causal models},
  author={Shpitser, Ilya and Pearl, Judea},
  booktitle={AAAI},
  pages={1219--1226},
  year={2006}
}

@article{ref:richardson2013single,
  title={Single world intervention graphs ({SWIGs}): A unification of the counterfactual and graphical approaches to causality},
  author={Richardson, Thomas S and Robins, James M},
  journal={Center for the Statistics and the Social Sciences, University of Washington Series. Working Paper},
  volume={128},
  number={30},
  pages={2013},
  year={2013},
  publisher={Citeseer}
}

@misc{ref:chernozhukov2018double,
  title={Double/debiased machine learning for treatment and structural parameters},
  author={Chernozhukov, Victor and Chetverikov, Denis and Demirer, Mert and Duflo, Esther and Hansen, Christian and Newey, Whitney and Robins, James},
  year={2018},
  publisher={Oxford University Press Oxford, UK}
}

@article{ref:vrieze2012model,
  title={Model selection and psychological theory: a discussion of the differences between the {A}kaike information criterion ({AIC}) and the {B}ayesian information criterion ({BIC}).},
  author={Vrieze, Scott I},
  journal={Psychological methods},
  volume={17},
  number={2},
  pages={228},
  year={2012},
  publisher={American Psychological Association}
}

@article{ref:bodnar2004marginal,
  title={Marginal structural models for analyzing causal effects of time-dependent treatments: an application in perinatal epidemiology},
  author={Bodnar, Lisa M and Davidian, Marie and Siega-Riz, Anna Maria and Tsiatis, Anastasios A},
  journal={American journal of epidemiology},
  volume={159},
  number={10},
  pages={926--934},
  year={2004},
  publisher={Oxford University Press}
}

@article{ref:buikema2018effect,
  title={The effect of delaying initiation with umeclidinium/vilanterol in patients with {COPD}: an observational administrative claims database analysis using marginal structural models},
  author={Buikema, Ami R and Brekke, Lee and Anderson, Amy and Koep, Eleena and Van Voorhis, Damon and Sharpsten, Lucie and Hahn, Beth and Ray, Riju and Stanford, Richard H},
  journal={Multidisciplinary Respiratory Medicine},
  volume={13},
  number={1},
  pages={38},
  year={2018},
  publisher={Springer}
}

@article{ref:chen2008extended,
  title={Extended {B}ayesian information criteria for model selection with large model spaces},
  author={Chen, Jiahua and Chen, Zehua},
  journal={Biometrika},
  volume={95},
  number={3},
  pages={759--771},
  year={2008},
  publisher={Oxford University Press}
}

@article{ref:peterson2026ranked,
  title={A Ranked Sparsity Extension to the {B}ayesian Information Criterion: A Tool for Selecting Variables from Multiple Data Modalities},
  author={Peterson, Ryan A and Bird, Sarah M and Harris, Logan M and Breheny, Patrick J and Cavanaugh, Joseph E},
  year={2026},
  journal={Preprint}
}

@article{ref:mohan2021graphical,
  title={Graphical models for processing missing data},
  author={Mohan, Karthika and Pearl, Judea},
  journal={Journal of the American Statistical Association},
  volume={116},
  number={534},
  pages={1023--1037},
  year={2021},
  publisher={Taylor \& Francis}
}

@article{ref:robins1986new,
  title={A new approach to causal inference in mortality studies with a sustained exposure period—application to control of the healthy worker survivor effect},
  author={Robins, James},
  journal={Mathematical modelling},
  volume={7},
  number={9-12},
  pages={1393--1512},
  year={1986},
  publisher={Elsevier}
}

@article{ref:splawa1990application,
  title={On the application of probability theory to agricultural experiments. {E}ssay on principles. {S}ection 9},
  author={Splawa-Neyman, Jerzy and Dabrowska, Dorota M and Speed, Terrence P},
  journal={Statistical science},
  pages={465--472},
  year={1990},
  publisher={JSTOR}
}

@article{ref:rubin1974estimating,
  title={Estimating causal effects of treatments in randomized and nonrandomized studies.},
  author={Rubin, Donald B},
  journal={Journal of educational Psychology},
  volume={66},
  number={5},
  pages={688},
  year={1974},
  publisher={American Psychological Association}
}

@book{ref:laan2003unified,
  title={Unified methods for censored longitudinal data and causality},
  author={van der Laan, Mark J and Robins, James M},
  year={2003},
  publisher={Springer}
}

@inproceedings{ref:huang2006identifiability,
  title={Identifiability in causal {B}ayesian networks: A sound and complete algorithm},
  author={Huang, Yimin and Valtorta, Marco},
  booktitle={Proceedings of the national conference on artificial intelligence},
  volume={21},
  pages={1149},
  year={2006},
  organization={Menlo Park, CA; Cambridge, MA; London; AAAI Press; MIT Press; 1999}
}

@article{ref:van2004history,
  title={History-adjusted marginal structural models and statically-optimal dynamic treatment regimes},
  author={van der Laan, Mark J and Petersen, Maya L},
  year={2004},
  journal={bepress}
}

@book{ref:pearl2018book,
  title={The book of why},
  author={Pearl, Judea and Mackenzie, Dana},
  year={2018},
  publisher={Allen Lane London}
}

@article{ref:guo2025causal,
  title={Causal Inference with the ``Napkin Graph''},
  author={Guo, Anna and Benkeser, David and Nabi, Razieh},
  journal={arXiv preprint arXiv:2512.19861},
  year={2025}
}

@article{ref:bhattacharya2022semiparametric,
  title={Semiparametric inference for causal effects in graphical models with hidden variables},
  author={Bhattacharya, Rohit and Nabi, Razieh and Shpitser, Ilya},
  journal={Journal of Machine Learning Research},
  volume={23},
  number={295},
  pages={1--76},
  year={2022}
}

\newpage

\section*{Appendix}
\maketitle

\appendix

The appendix is organized as follows.
\cref{app:add_details} gives additional details, background, and examples on topics discussed in the main text.
\cref{app:proofs} gives proofs of propositions and theorems stated in the main text.
Some proofs are preceded by their proof intuitions.
Finally, \cref{app:sims} gives precise details on the simulations described in \cref{sec:simulation_study}.
This includes the exact DGPs or SCMs, precise computation steps for the weights, details on the penalized regression methods we compare against, additional simulation results, and descriptions of the hardware used to run the simulations.

\section{Additional Details, Background, and Examples}
\label[secinapp]{app:add_details}

\subsection{Regularity Conditions for Asymptotic Normality of MLEs}
\label[secinapp]{app:regularity_conditions}

Here, we describe the regularity conditions that guarantee that the MLE of a random variable $X$ with PDF or PMF $p_X(x, \btheta)$ is asymptotically normal and converges to the true value of the parameters at a rate of $O_p(n^{-1/2})$.
Recall that there are $d$ elements in the vector $\btheta$.
All the conditions below can be extended to cases where we have a set of random variables ${\bf X}$ by replacing $X$ with ${\bf X}$.

First, a ``Condition 0'' that has been assumed throughout the paper is that each observation $X_i$ is independent and identically distributed (IID).

{\bf Condition 1}

The support of $X$ does not depend on the parameters $\btheta$ nor the observed value $X=x$.
This allows us to apply Leibniz's integral rule to interchange derivatives and integrals.

{\bf Condition 2}

$\btheta^*$, the true value of the parameters, is in the interior of the parameter space $\Theta$.

{\bf Condition 3}

The notation $\E_{\btheta}[\cdot]$ means that we are taking the expectation with respect to the parameters $\btheta$.
The first and second derivatives of $\log p_X(x, \btheta)$ satisfy

\begin{align*}
    \E_{\btheta} \bigg[\frac{\partial \log p_X(x, \btheta)}{\partial \theta_j} \bigg] &= 0, \text{for } j=1, \dots, d
\end{align*}

and

\begin{align*}
    I_{jk}(\btheta) &= \E_{\btheta} \bigg[ \frac{\partial}{\partial \theta_j} \log p_X(x, \btheta) \frac{\partial}{\partial \theta_k} \log p_X(x, \btheta) \bigg] \\
    &= \E_{\btheta} \bigg[ - \frac{\partial^2}{\partial \theta_j \partial \theta_k} \log p_X(x, \btheta) \bigg], \text{for } j,k=1, \dots, d.
\end{align*}

{\bf Condition 4}

The Fisher information matrix

\begin{align*}
    I(\btheta) &= \E \bigg[ \bigg( \frac{\partial}{\partial \btheta} \log p_X(x, \btheta) \bigg) \bigg( \frac{\partial}{\partial \btheta} \log p_X(x, \btheta) \bigg)^T \bigg]
\end{align*}

is finite and positive definite at $\btheta = \btheta^*$.

{\bf Condition 5}

There exists an open subset $\omega$ of $\Theta$ that contains the true parameter point $\btheta^*$ such that for almost all $X$ the PDF or PMF $p_X(x, \btheta)$ admits all third derivatives $\frac{\partial \log p_X(x, \btheta)}{\partial \theta_j \partial \theta_k \partial \theta_l}$ for all $\btheta \in \omega$.
Furthermore, there exists functions $M_{jkl}(X)$ such that

\begin{align*}
    \bigg\lvert \frac{\partial^3}{\partial \theta_j \partial \theta_k \partial \theta_l} \log p_X(x, \btheta) \bigg\rvert &\leq M_{jkl}(X) \text{ for all } \btheta \in \omega
\end{align*}

where $\E_{\btheta^*}[M_{jkl}(X)] < \infty$ for all $j,k,l$.

For more details on these regularity conditions, we refer readers to \cite{ref:cramer1999mathematical,ref:lehmann2005testing,ref:van2000asymptotic} and the Appendix of \cite{ref:fan2001variable}.

\subsection{Concrete Example of a Marginal Structural Model}
\label[secinapp]{app:msms}

Here, we give a toy example of an MSM to clarify how and when they are used.
MSMs are common in two settings: first, longitudinal studies where many treatments are administered over time and, second, settings where treatments are continuous.
Our example will be a longitudinal study where treatments are administered at two time points.
This example can be easily extended to settings with more treatments and where treatments are continuous.

Consider the DAG shown in \cref{fig:longitudinal_msm} that represents a longitudinal study.
Suppose for now that treatments are binary; an individual either receives a treatment, or they do not.
A treatment $A_0$ is administered at baseline, a second potentially different treatment $A_1$ is administered at time point one, and an outcome $Y$ is observed at time point two.
$L_0$ and $L_1$ represent observed covariates used to make treatment decisions at baseline and at time point one, respectively.
$H_0$, $H_1$, and $H_2$ represent unmeasured characteristics of the individual at baseline, time point one, and time point two, respectively.

\begin{figure}[ht]
    \begin{center}
        \centering
        \scalebox{0.95}{
            \begin{tikzpicture}[>=stealth, node distance=1.5cm]
                \tikzstyle{square} = [draw, thick, minimum size=1.0mm, inner sep=3pt]
                \begin{scope}
                    \path[->, very thick]
                    node[] (a0) {$A_0$}
                    node[left of=a0] (l0) {$L_0$}
                    node[right of=a0] (l1) {$L_1$}
                    node[right of=l1] (a1) {$A_1$}
                    node[right of=a1] (y) {$Y$}
                    node[above of=l0] (h0) {$H_0$}
                    node[above of=l1] (h1) {$H_1$}
                    node[above of=y] (h2) {$H_2$}
                                   
                    (l0) edge[friendlyblue] (a0)
                    (l0) edge[friendlyblue, bend right] (l1)
                    (l0) edge[friendlyblue, bend right] (a1)
                    (l0) edge[friendlyblue, bend right] (y)
                    (a0) edge[friendlyblue] (l1)
                    (a0) edge[friendlyblue, bend left] (a1)
                    (a0) edge[friendlyblue, bend left] (y)
                    (l1) edge[friendlyblue] (a1)
                    (l1) edge[friendlyblue, bend right] (y)
                    (a1) edge[friendlyblue] (y)
                    (h0) edge[friendlyred] (l0)
                    (h0) edge[friendlyred] (h1)
                    (h1) edge[friendlyred] (l1)
                    (h1) edge[friendlyred] (h2)
                    (h2) edge[friendlyred] (y)
                    ;
                \end{scope}
            \end{tikzpicture}
        }
    \end{center}
    \caption{DAG showing a longitudinal study where two treatments are administered over time and where an MSM may be used to model $\E[Y(a_0, a_1)]$.}
    \label{fig:longitudinal_msm}
\end{figure}

Suppose we use a linear MSM to model the expectation of the potential outcome variable $Y(a_0, a_1)$.
Let $\btheta = (\theta_0, \theta_1, \theta_2)$.
The MSM is defined as 
\begin{align*}
    \E[Y(a_0, a_1)] &= g(a_0, a_1, \btheta) \\
    &= \theta_0 + \theta_1 a_0 + \theta_2 a_1.
\end{align*}
This MSM assumes that there is some baseline level of the outcome $\theta_0$.
Furthermore, the effect of receiving the treatment at baseline is $\theta_1$, and the effect of receiving the treatment at time point one is $\theta_2$.
The effect of receiving both treatments is $\theta_1+\theta_2$.

An important feature of the DAG given in \cref{fig:longitudinal_msm} is that treatment decisions are made using only information that is observed: $L_0$ and $L_1$.
This allows us to estimate $\btheta$ using the estimating equation
\begin{align*}
    \E[\frac{p^*(A_1) p^*(A_2)}{p(A_0 \mid L_0) p(A_1 \mid L_1, A_0, L_0)} h(A_0, A_1) (Y - g(A_0, A_1; \btheta))] &= 0,
\end{align*}
where $h(A_0, A_1)$ is the vector of partial derivatives of $g(a_0, a_1, \btheta)$ with respect to each element of $\btheta$ \citep{ref:laan2003unified,ref:hernan2000marginal}.
This results in three estimating equations that we solve for simultaneously to obtain estimates for $\btheta$.
The heuristic procedure we propose for finding weights in \cref{app:procedure_find_weights} is able to find the weights $\frac{p^*(A_1) p^*(A_2)}{p(A_0 \mid L_0) p(A_1 \mid L_1, A_0, L_0)}$ that move us to a marginally unconfounded distribution in which we can fit an ordinary least squares regression.

Now, $A_0$ and $A_1$ in this toy example can be easily modified to be continuous variables.
For instance, they can represent the dosage of an administered drug rather than simply whether a drug was given to the patient.
Under such a scenario, the MSM and estimating equations remain the same.
In addition, one can easily imagine extending the DAG shown in \cref{fig:longitudinal_msm} to an arbitrary number of time points and treatments.
At each time point $i$, we can add a variable $H_i$ representing unmeasured characteristics of the individual, $L_i$ representing observed covariates used to make the treatment decision, and $A_i$ representing the treatment decision.
The parents of $L_i$ are all previously observed covariates, treatments, and $H_i$.
The parents of $A_i$ are all previously observed covariates and treatments, as well as $L_i$.
The parent of $H_i$ is $H_{i-1}$.
If the last measured time point is $i^*$, then the outcome $Y$ would be $L_{i^*}$. 

In summary, MSMs are commonly used when there are many treatments of interest or when treatments are continuous to ``smooth'' the function $g({\bf a}) = \E[Y({\bf a})]$.

\subsection{Example of Non-Causal Dependence (Confounding Bias)}
\label[secinapp]{app:non_causal_dependence}

Here, we give an example of a potential treatment variable being a significant predictor of an outcome despite it not being a direct cause of the outcome.
Namely, we will show that it is possible for a potential treatment to be significant in the conditional mean $\E[Y \mid {\bf A}]$ but not in the MSM $\E[Y({\bf a})]$.

We first define the function $\text{expit}(x) = \frac{1}{1+e^{-x}}$.
The expit function takes a real number as input and outputs a number between $0$ and $1$, which is useful when we seek to generate binary random variables as a function of other variables with support in the real line.
The notation $\mathcal{N}(\mu, \sigma^2)$ denotes a Normal random variable with mean $\mu$ and variance $\sigma^2$.
The notation $\text{Bernoulli}(p)$ denotes a Bernoulli random variable with probability of success $p$.
Consider the following data-generating process.

\begin{align*}
    C &\sim \text{Bernoulli}(0.5) \\
    A_1 &\sim \text{Bernoulli}(\text{expit}(C)) \\
    A_2 &\sim \text{Bernoulli}(\text{expit}(C)) \\
    Y &\sim A_1 + C + \mathcal{N}(0, 1)
\end{align*}

The DAG representing this data-generating process is shown in \cref{fig:app_a4}.

\begin{figure}[ht]
    \begin{center}
        \centering
        \scalebox{0.95}{
            \begin{tikzpicture}[>=stealth, node distance=1.5cm]
                \tikzstyle{square} = [draw, thick, minimum size=1.0mm, inner sep=3pt]
                \begin{scope}
                    \path[->, very thick]
                    node[] (a1) {$A_1$}
                    node[below of=a1] (a2) {$A_2$}
                    node[right of=a1] (y) {$Y$}
                    node[below of=y] (c) {$C$}
                                   
                    (a1) edge[friendlyblue] (y)
                    (c) edge[friendlyblue] (y)
                    (c) edge[friendlyblue] (a1)
                    (c) edge[friendlyblue] (a2)
                    ;
                \end{scope}
            \end{tikzpicture}
        }
    \end{center}
    \caption{DAG corresponding to the data-generating process above.}
    \label{fig:app_a4}
\end{figure}

From the DAG, it is clear to see that $A_1$ is a direct cause of $Y$ while $A_2$ is not.
However, fitting the conditional mean model $\E[Y \mid A_1, A_2]$ on the observed data results in

\begin{align*}
    \E[Y \mid A_1, A_2] &= A_1 + \E[C \mid A_1, A_2] \\
    &= A_1 + p(C=1 \mid A_1, A_2) \\
    &= A_1 + \frac{p(C=1 \mid A_1, A_2) p(A_1, A_2)}{p(A_1, A_2)}.
\end{align*}

To simplify the notation as we continue to expand the expression above, let $p(A_1=1 \mid C=1) = p(A_2=1 \mid C=1) = \text{expit}(1) = p_1$ and $p(A_1=1 \mid C=0) = p(A_2=1 \mid C=0) = \text{expit}(0) = p_0$.
Continuing,

\begin{align*}
    &= A_1 + \frac{p(C=1, A_1, A_2)}{\sum_{C} p(C, A_1, A_2)} \\
    &= A_1 + \frac{p(A_1, A_2 \mid C=1) p(C=1)}{p(A_1, A_2 \mid C=0) p(C=0) + p(A_1, A_2 \mid C=1) p(C=1)} \\
    &= A_1 + \frac{p(A_1 \mid C=1) p(A_2 \mid C=1)}{p(A_1 \mid C=1) p(A_2 \mid C=1) + p(A_1 \mid C=0) p(A_2 \mid C=0)} \\
    &= A_1 + \frac{p_1^{A_1} (1-p_1)^{1-A_1} p_1^{A_2} (1-p_1)^{1-A_2}}{p_1^{A_1} (1-p_1)^{1-A_1} p_1^{A_2} (1-p_1)^{1-A_2} + p_0^{A_1} (1-p_0)^{1-A_1} p_0^{A_2} (1-p_0)^{1-A_2}}
\end{align*}

Although this is a complicated function of $A_1$ and $A_2$, it is clear that $A_2$ is a significant predictor of $Y$ in the conditional mean $\E[Y \mid A_1, A_2]$.
This dependence arises from the non-causal path $A_2 \leftarrow C \rightarrow Y$, which is known as a fork in the language of d-separation.

Note, however, that $A_2$ is not a significant predictor of $Y$ in the conditional mean model $\E[Y \mid A_1, A_2, C] = A_1 + C$.
This removal of the dependence on $A_2$ is due to the fact that $Y$ is unconfounded from $A_1, A_2$ conditional on $C$.

Now, let us consider the mean of the potential outcome $\E[Y(a_1, a_2)]$ in this DGP.
We have the exact distribution of $Y$, so we can compute the expectation of this potential outcome directly.

\begin{align*}
    \E[Y(a_1, a_2)] &= \E[a_1 + C + \mathcal{N}(0, 1)] \\
    &= a_1 + 0.5
\end{align*}

Clearly, $A_2$ is not a significant predictor of the MSM here.
This is because $A_2$ does not have a direct, or causal, path to $Y$ in the DAG.
Thus, intervening on $A_2$ does not change $Y$.

In conclusion, it is possible for a potential predictor to be significant in $\E[Y \mid {\bf A}]$ but not in $\E[Y({\bf a})]$.

\subsection{Example of Different Significant Predictors Across Different Conditional Expectations (Non-Collapsibility)}
\label[secinapp]{app:different_significance}

Here, we give an example of a potential treatment variable being significant in the conditional expectation $\E[Y({\bf a}) \mid {\bf C}]$ but not in the marginal expectation $\E[Y({\bf a})]$.
Consider the following data-generating process, and suppose that consistency holds.

\begin{align*}
    C &\sim \mathcal{N}(0, 1) \\
    A_1 &\sim \text{Bernoulli}(0.5) \\
    A_2 &\sim \text{Bernoulli}(0.5) \\
    Y &\sim A_1 + A_2 * C + \mathcal{N}(0, 1)
\end{align*}

Note that, in this DGP, both $A_1$ and $A_2$ are randomized.
We can verify this by drawing a DAG corresponding to the DGP above and seeing that the treatments $A_1$ and $A_2$ are marginally unconfounded from the outcome $Y$ (see \cref{def:confounding}).
Invoking \cref{prop:treatment_outcome_sufficient}, the conditional mean model is equivalent to the MSM in this DGP: $\E[Y \mid A_1, A_2] = \E[Y(a_1, a_2)]$.
Thus, selecting significant predictors in $\E[Y \mid A_1, A_2]$ is the same as selecting significant predictors in the MSM.
The above statements are still true if we were to condition on $C$.

In the conditional expectation $\E[Y(A_1, A_2) \mid C] = \E[Y \mid A_1, A_2, C] = A_1 + A_2 * C$, and $A_2$ is a significant predictor for the outcome.
However, $A_2$ is not a significant predictor for $Y$ in the marginal expectation 
\begin{align*}
    \E[Y(A_1, A_2)] &= \E[Y \mid A_1, A_2] \\
    &= A_1 + A_2 * \E[C \mid A_1, A_2] \\
    &= A_1 + A_2 \E[C] \\
    &= A_1.
\end{align*}
This is an example of a treatment variable having a heterogeneous effect on the outcome: the exact impact of $A_2$ on $Y$ depends on the value of the confounding variable $C$.
When we do not condition on $C$, $A_2$ has no impact on $Y$ because $C$ is independent of $A_1$ and $A_2$ and is mean zero.
The target of inference in this work is to infer the treatments that are significant in the MSM $\E[Y({\bf a})]$, not $\E[Y({\bf a}) \mid {\bf C}^\dagger]$, where ${\bf C}^\dagger \subseteq {\bf C}$.
Thus, we cannot select significant predictors of $Y$ in the model $\E[Y \mid {\bf A}, {\bf C}]$ -- even when treatments and outcome are unconfounded conditional on ${\bf C}$ -- to infer significant predictors in the MSM.

\subsection{Discussion of the Graph Corresponding to the Target Reweighted Distribution Proposed in \cref{sec:reweighted_bic}}
\label[secinapp]{app:graph_of_reweighted_dist}

An alternative way to verify that ${\bf A}'$ and $Y$ are marginally unconfounded in the target reweighted distribution
\begin{align*}
    q({\bf A}'= {\bf a}', Y=y) &= p(Y({\bf A} = {\bf a}')=y) p^*({\bf A}'={\bf a}')
\end{align*}
that we proposed in \cref{sec:reweighted_bic} is to derive the DAG $\G_q$ that it corresponds to and explicitly check that \cref{def:confounding} is satisfied.

First, we make some definitions.
An acyclic directed mixed graph (ADMG) is a graph defined by a set of vertices and a set of edges between vertices where edges may be directed or bidirected.
Given an ADMG $\G$, we can {\it latent project} out a vertex $V$ in the vertex set of $\G$ by first removing $V$ from $\G$.
Then, we add a bidirected edge between any two vertices $A$ and $B$ if there exists a path $A \leftarrow V \rightarrow B$ in $\G$, and we add a directed edge from $A$ to $B$ if there exists a path $A \rightarrow V \rightarrow B$ in $\G$ \citep{ref:richardson2023nested}.

\begin{restatable}[Graph corresponding to the target reweighted distribution $q({\bf A}', Y)$.]{proposition}{graphOfQ}\label{prop:graph_of_q}
    Let $\G$ be the DAG associated with the SCM of the full data distribution $p({\bf V})$.
    The graph that represents the SCM of the target reweighted distribution
    \begin{align*}
        q({\bf A}'= {\bf a}', Y=y) &= p(Y({\bf A} = {\bf a}')=y) p^*({\bf A}'={\bf a}')
    \end{align*}
    is the ADMG $\G_q$,
    constructed from $\G$ as follows
    \begin{itemize}
        \item all vertices in ${\bf A}$ are replaced by their corresponding vertices in ${\bf A}'$,
        \item all edges into variables in ${\bf A}'$ are deleted,
        \item and all variables in ${\bf V} \setminus ({\bf A}' \cup \{Y\})$ are latent projected out, in any order.
    \end{itemize}
\end{restatable}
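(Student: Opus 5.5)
The plan is to build an explicit SCM for $q({\bf A}', Y)$ by modifying the SCM of $p({\bf V})$. We then check that its graph, after marginalising everything except ${\bf A}' \cup \{Y\}$, is exactly the $\G_q$ described in \cref{prop:graph_of_q}.

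\textbf{Step 1: the intervened SCM.} Start from the structural equations $V = f_V(\text{pa}_{\G}(V), \epsilon_V)$. Replace each equation $f_A$, $A \in {\bf A}$, by $A' = f^*_{A'}(\epsilon^*_{A'})$. Here the new exogenous errors $\epsilon^*_{A'}$ are independent of all other errors and are chosen so that the vector ${\bf A}'$ has law $p^*$. If $p^*$ is not a product distribution, we let ${\bf A}'$ be generated jointly from one shared error; this is harmless, since it only adds dependence among the ${\bf A}'$ vertices and no edges into them from ${\bf V}$. Every other equation is kept, with each occurrence of $A$ as an input renamed to $A'$. The graph of this SCM is $\G$ with ${\bf A}$ relabelled as ${\bf A}'$ and all edges into ${\bf A}'$ deleted, which gives the first two bullets.

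\textbf{Step 2: identify the induced law.} Because ${\bf A}'$ is exogenous and independent of $\{\epsilon_V\}$, conditioning on ${\bf A}' = {\bf a}'$ yields the same solution for the remaining variables as the $\text{do}({\bf a}')$ submodel of the original SCM. Hence the conditional law of $Y$ given ${\bf A}'={\bf a}'$ equals $p(Y({\bf A}={\bf a}'))$. Combining this with the consistency of potential outcomes and the g-formula, the joint law of $({\bf A}', Y)$ is $p(Y({\bf a}'))p^*({\bf a}')$, which is exactly $q$.

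\textbf{Step 3: latent projection.} The third bullet requires that marginalising ${\bf V}\setminus({\bf A}'\cup\{Y\})$ out of the DAG of Step 1 gives an ADMG representing the marginal SCM. I would invoke the standard fact that latent projection preserves the marginal SCM and d-separation structure, and that the result does not depend on the order of projection \citep{ref:richardson2023nested}. Concretely, I would project one vertex at a time, using the paper's rule. Each step corresponds to substituting the structural equation of $V$ into its children's equations. If $V$ has several children, their shared dependence on $\epsilon_V$ and on $V$'s parents' errors creates a bidirected edge. A directed chain $A \to V \to B$ creates the edge $A \to B$.

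\textbf{Main obstacle.} The vertex-by-vertex rule stated in the paper only mentions paths $A\leftarrow V\to B$ and $A\to V\to B$. Mixed configurations arising after earlier projections, such as $A\leftrightarrow V\to B$, must also produce $A\leftrightarrow B$. I would handle this by citing the general latent projection definition, which works with directed paths and treks through latent vertices, and the order-invariance result in \citet{ref:richardson2023nested}. I would not re-derive these facts here.

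With Steps 1–3 in place, $\G_q$ represents $q$. Because it has no edges into ${\bf A}'$, there is also no bidirected edge touching ${\bf A}'$. Therefore, deleting the edges out of ${\bf A}'$ d-separates ${\bf A}'$ from $Y$, confirming \cref{def:confounding}.
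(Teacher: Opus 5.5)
Your proof is correct, but it takes a different route from the paper. The paper's proof is a two-line appeal to single world intervention graphs. The SWIG $\G({\bf a})$ represents $p({\bf V}({\bf a}))$, latent projection of it gives a graph for $p(Y({\bf a}))$, and the fixed vertices ${\bf a}$ are then relabelled as random vertices ${\bf A}'$. You instead randomize first and project second. You write down an explicit SCM in which ${\bf A}$ is replaced by exogenous ${\bf A}' \sim p^*$, read off the mutilated graph directly, and verify in Step 2 that this SCM induces exactly $p(Y({\bf a}'))p^*({\bf a}')$.

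\emph{What each route buys.} Yours is self-contained and needs no SWIG machinery. The random halves of ${\bf A}$ in a SWIG have no children and project away trivially, so your mutilated graph is the SWIG with the fixed nodes turned into random roots. Your Step 2 also justifies the step the paper leaves implicit: why turning the fixed nodes into roots drawn from $p^*$ gives a graph representing $q$. The paper's route is shorter and inherits the SWIG factorization results for $p(Y({\bf a}))$ directly.

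Both arguments rely equally on citing the latent projection results (the marginal model is preserved and the result does not depend on projection order). Your observation that the paper's one-vertex rule omits the configuration $A \leftrightarrow V \rightarrow B$ is correct. It matters because the statement allows projection in any order, so it affects the paper's proof as much as yours.

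One small correction: calling a shared error for a non-product $p^*$ ``harmless'' is too quick. Under the paper's convention of jointly independent errors, that shared error is a latent common parent of the ${\bf A}'$ vertices. Projecting it out creates bidirected edges among them, which contradicts the second bullet. So the stated $\G_q$ is literally correct only when $p^*$ factorizes, or when ${\bf A}'$ is treated as a single vertex. Marginal unconfoundedness is unaffected either way.
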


\begin{proof}
    The desired result follows directly from results on single world intervention graphs (SWIGs) in \citep{ref:richardson2013single}.

    Given the SWIG $\G({\bf a})$ representing the interventional distribution $p({\bf V}({\bf a}))$, the graph representing $p(Y({\bf a}))$ may be obtained by the latent projection operation applied to $\G({\bf a})$.  The graph in $\G_q$ is then obtained by replacing every fixed vertex in ${\bf a}$ in the resulting graph by the corresponding random vertex in ${\bf A}'$.  
\end{proof}

Note that $\G_q$ is also a DAG because it has no bidirected edges, by construction.

\subsection{Discussion on a Heuristic Procedure for Finding Simple Weights}
\label[secinapp]{app:procedure_find_weights}

Here, we present a heuristic procedure for finding weights that are more practical and simple to estimate.
The procedure is as follows.

We start with the DAG $\G$ associated with the SCM and the full data distribution $p({\bf V})$, which may contain latent variables ${\bf U}$.
We initialize the weights as $W \leftarrow 1$ and a new DAG $\G_q$ as $\G_q \leftarrow \G$.
First, we decide whether to simulate ${\bf A}'$; if so, we set $W \leftarrow W \cdot p^*({\bf A}')$ and add the vertices ${\bf A}'$ to $\G_q$.
Recall that the notation $p^*(\cdot)$ refers to a user-defined probability distribution.
If we do not simulate ${\bf A}'$, set ${\bf A}' \equiv {\bf A}$.
Now, complete the following steps any number of times without choosing the same variable more than once.
Choose a variable $V \in {\bf M} \cup \{Y\}$ or $V \in {\bf C} \cup {\bf A}'$ that has no parents in ${\bf U}$ in the DAG $\G_q$.
If we selected a $V \in {\bf M} \cup \{Y\}$, then set $W \leftarrow W \cdot \frac{p(V \mid pa_{\G_q}(V) \setminus {\bf A}, {\bf A}')}{p(V \mid pa_{\G_q}(V))}$, and modify $\G_q$ such that all edges from ${\bf A}$ to $V$ are replaced with edges from ${\bf A}'$ to $V$.
If we selected a $V \in {\bf C} \cup {\bf A}$, then set $W \leftarrow W \cdot \frac{p^*(V)}{p(V \mid pa_{\G_q}(V))}$, and delete all incoming edges to $V$ in $\G_q$.
If ${\bf A}'$ and $Y$ are marginally unconfounded in $\G_q$ at the end of the procedure, then the weights $W$ are valid, and, by construction, $W p({\bf V}) = q({\bf V}, {\bf A}')$, where $q({\bf V}, {\bf A}')$ is a target reweighted distribution where ${\bf A}'$ and $Y$ are marginally unconfounded.
Otherwise, the weights are invalid for the target MSM, and we should not use them.
Note that, by design, these weights are a function of only the observed variables ${\bf O} \cup {\bf A}'$.

Notably, this procedure is able to find the weights described in \cref{exp:example_backdoor,exp:example_frontdoor} for \cref{fig:example_dag_backdoor,fig:example_dag_frontdoor}.
For \cref{fig:example_dag_backdoor}, first do not simulate ${\bf A}'$ then choose $A$.
For \cref{fig:example_dag_frontdoor}, first simulate ${\bf A}'$ then choose $M$.
Interestingly, $W = \frac{p^*({\bf A}') p(Y \mid {\bf A}', {\bf C})}{p(Y \mid {\bf A}, {\bf C})}$ is also a valid weight for the backdoor graph, \cref{fig:example_dag_backdoor}.
Our heuristic procedure above finds these weights by first simulating ${\bf A}'$ then choosing $Y$.

This procedure also yields valid and simple weights in the so-called napkin graph, which is shown in \cref{fig:napkin_graph} below \citep{ref:pearl2018book,ref:guo2025causal}.

\begin{figure}[ht]
    \begin{center}
        \centering
        \scalebox{0.95}{
            \begin{tikzpicture}[>=stealth, node distance=1.5cm]
                \tikzstyle{square} = [draw, thick, minimum size=1.0mm, inner sep=3pt]
                \begin{scope}
                    \path[->, very thick]
                    node[] (a) {${\bf A}$}
                    node[left of=a] (z) {$C_1$}
                    node[left of=z] (w) {$C_2$}
                    node[right of=a] (y) {$Y$}
                    node[above of=a] (u1) {$U_1$}
                    node[below of=z] (u2) {$U_2$}
                                   
                    (a) edge[friendlyblue] (y)
                    (z) edge[friendlyblue] (a)
                    (w) edge[friendlyblue] (z)
                    (u1) edge[friendlyred] (y)
                    (u1) edge[friendlyred] (w)
                    (u2) edge[friendlyred] (w)
                    (u2) edge[friendlyred] (a)
                    ;
                \end{scope}
            \end{tikzpicture}
        }
    \end{center}
    \caption{The napkin graph.}
    \label{fig:napkin_graph}
\end{figure}

As discussed in \cref{sec:reweighted_bic}, a valid weight to achieve a target reweighted distribution $q({\bf A}', Y)$ for this DAG is, for any value of $C_1=c_1$,
\begin{align*}
    W &= \frac{\frac{\sum_{C_2} p(Y, {\bf A}' \mid C_1=c_1, C_2) p(C_2)}{\sum_{C_2} p({\bf A}' \mid C_1=c_1, C_2) p(C_2)} p^*({\bf A}')}{p(C_2, C_1, {\bf A}, Y)}
\end{align*}
\citep{ref:shpitser2006identification,ref:huang2006identifiability,ref:guo2025causal}.
However, these weights are difficult to estimate in practice.
Instead, consider the weights
\begin{align*}
    W &= \frac{p^*(C_1)}{p(C_1 \mid C_2)}
\end{align*}
The corresponding $\G_q$ that results from $W p(C_2, C_1, {\bf A}, Y) = p(Y \mid {\bf A}, C_1, C_2) p({\bf A} \mid C_1, C_2) p^*(C_1) p(C_2) = q(C_2, C_1, {\bf A}, Y)$ is

\begin{figure}[ht]
    \begin{center}
        \centering
        \scalebox{0.95}{
            \begin{tikzpicture}[>=stealth, node distance=1.5cm]
                \tikzstyle{square} = [draw, thick, minimum size=1.0mm, inner sep=3pt]
                \begin{scope}
                    \path[->, very thick]
                    node[] (a) {${\bf A}$}
                    node[left of=a] (z) {$C_1$}
                    node[left of=z] (w) {$C_2$}
                    node[right of=a] (y) {$Y$}
                    node[above of=a] (u1) {$U_1$}
                    node[below of=z] (u2) {$U_2$}
                                   
                    (a) edge[friendlyblue] (y)
                    (z) edge[friendlyblue] (a)
                    (u1) edge[friendlyred] (y)
                    (u1) edge[friendlyred] (w)
                    (u2) edge[friendlyred] (w)
                    (u2) edge[friendlyred] (a)
                    ;
                \end{scope}
            \end{tikzpicture}
        }
    \end{center}
    \caption{A DAG $\G_q$ that corresponds to the SCM of $q(C_2, C_1, {\bf A}, Y)$.}
    \label{fig:reweighted_napkin_graph}
\end{figure}

In the DAG shown in \cref{fig:reweighted_napkin_graph}, ${\bf A}$ and $Y$ are marginally unconfounded.
Thus, $\frac{p^*(C_1)}{p(C_1 \mid C_2)}$ are valid weights for the target MSM.
Our proposed heuristic procedure finds these weights by first deciding not to simulate ${\bf A}'$ then choosing $C_1$.
As stated in \cref{sec:reweighted_bic}, a general characterization or algorithm for finding weights with a simple form is beyond the scope of this work.
However, we hope the discussion presented in this subsection can be a first step towards solutions for this problem.

\subsection{Weights for \cref{exp:example_frontdoor}, the Frontdoor Graph}
\label[secinapp]{app:weights_frontdoor}

The DAG factorization for the graph representing the data-generating process of \cref{exp:example_frontdoor} (\cref{fig:example_dag_frontdoor}) is
\begin{align*}
    p({\bf U}) p({\bf A} \mid {\bf U}) p(M \mid {\bf A}) p(Y \mid M, {\bf U}).
\end{align*}
Multiplying the weights $W = \frac{p^*({\bf A}') p(M \mid {\bf A}')}{p(M \mid {\bf A})}$ onto the factorization above gives
\begin{align}
    &\frac{p^*({\bf A}') p(M \mid {\bf A}')}{p(M \mid {\bf A})} \times p({\bf U}) p({\bf A} \mid {\bf U}) p(M \mid {\bf A}) p(Y \mid M, {\bf U}) \nonumber \\
    &= p({\bf U}) p({\bf A} \mid {\bf U}) p^*({\bf A}') p(M \mid {\bf A}') p(Y \mid M, {\bf U}) \label{eq:reweighted_frontdoor}
\end{align}
The DAG corresponding to this reweighted factorization, which we can derive by following the procedure proposed in \cref{app:procedure_find_weights}, is shown in \cref{fig:reweighted_frontdoor}.

\begin{figure}[ht]
    \begin{center}
        \centering
        \scalebox{0.95}{
            \begin{tikzpicture}[>=stealth, node distance=1cm]
                \tikzstyle{square} = [draw, thick, minimum size=1.0mm, inner sep=3pt]
                \begin{scope}
                    \path[->, very thick]
                    node[] (a2) {$A_2$}
                    node[right of=a2] (a2p) {$A_2'$}
                    node[below of=a2] (a1) {$A_3$}
                    node[below of=a2p] (a1p) {$A_3'$}
                    node[above of=a2] (a3) {$A_1$}
                    node[above of=a2p] (a3p) {$A_1'$}
                    node[right of=a2p] (m) {$M$}
                    node[right of=m] (y) {$Y$}
                    node[left of=a2] (c) {${\bf U}$}
                                    
                    (a1p) edge[friendlyblue] (m)
                    (a2p) edge[friendlyblue] (m)
                    (a3p) edge[friendlyblue] (m)
                    (m) edge[friendlyblue] (y)
                    (c) edge[friendlyblue, bend left] (y)
                    (c) edge[friendlyblue] (a1)
                    (c) edge[friendlyblue] (a2)
                    (c) edge[friendlyblue] (a3)
                    ;
                \end{scope}
            \end{tikzpicture}
        }
    \end{center}
    \caption{DAG corresponding to the reweighted factorization of \cref{eq:reweighted_frontdoor}.}
    \label{fig:reweighted_frontdoor}
\end{figure}

In this DAG, the simulated treatments ${\bf A}'$ are marginally unconfounded from the outcome $Y$.

Intuitively, the weights remove the dependence of the mediator $M$ on the observed treatments ${\bf A}$ and induce dependence of $M$ on the simulated treatments ${\bf A}'$ instead.
This creates directed paths from ${\bf A}'$ to $Y$.
Then, because ${\bf A}'$ is simulated and thus randomized by design, ${\bf A}'$ and $Y$ are marginally unconfounded. 
We can then fit MLEs for the conditional mean model $\E[Y \mid {\bf A}']$ in the reweighted distribution to select ${\bf A}^*$ because $\E[Y({\bf a}')] = \E[Y \mid {\bf A}']$ from \cref{prop:treatment_outcome_sufficient}.

To create the simulated treatments ${\bf A}'$, we suggest the following procedure.
For each observed treatment, choose a parametric distribution that has the same support as it.
Next, choose the parameters of this distribution such that it is fixed and does not depend on any observed variables.
Finally, sample $n$ observations of each observed treatment from its corresponding parametric distribution.

For instance, suppose we observe a binary treatment.
Then, we can create its simulated version by sampling from a Bernoulli distribution with probability of success $0.5$.
Alternatively, suppose we observe a discrete treatment with four possible categories that denote the strength of the administered drug.
Then, we can create its simulated version by sampling from a categorical (multinoulli) distribution where the probability of observing each category is $0.25$.

\subsection{Reweighted MLEs}
\label[secinapp]{app:reweighted_mle}

Here, we give a brief overview of reweighted MLEs.

Recall the notation that ${\bf V}$ represents the set of all variables in the data-generating process.
Let $X$ be the random variable we are fitting an MLE for.
Let the notation $\E_q[X]$ denote the expected value of $X$ with respect to some reweighted distribution.
Everything discussed in this subsection can be easily extended to estimating MLEs for a set of random variables ${\bf X}$.

A typical MLE seeks to find the values of the parameters $\hat{\btheta}$ that maximize the likelihood of $n$ IID observations of $X$: $\underset{\hat{\btheta}}{\text{argmax }} p(X_1, \dots, X_n \mid \hat{\btheta}) = \underset{\hat{\btheta}}{\text{argmax }} \prod_{i=1}^n p(X_i \mid \hat{\btheta})$.
For notational convenience, we will omit the argmax notation in the rest of this subsection.
However, note that each appearance of parameters with the ``hat'' notation implies that we are looking for the values of the parameters that maximize the expression.

Maximizing the likelihood $\prod_{i=1}^n p(X_i \mid \hat{\btheta})$ is equivalent to maximizing the log-likelihood $\sum_{i=1}^n \log p(X_i \mid \hat{\btheta})$, which is in turn equivalent to maximizing the mean log-likelihood $\frac{1}{n} \sum_{i=1}^n \log p(X_i \mid \hat{\btheta})$.
The mean log-likelihood is a finite-sample estimate for the expected log-likelihood of the random variable $X$.
Thus, maximizing the mean log-likelihood is using the observed data to maximize $\E[\log p(X \mid \hat{\btheta})]$.
In conclusion, the MLE attempts to find values $\hat{\btheta}$ that maximize the expected log-likelihood of a single observation of $X$.

The reweighted MLE should, thus, attempt to find values of the parameters $\hat{\btheta}_W$ that maximize the expected log-likelihood of a single observation of $X$ in the reweighted distribution $q({\bf V}, {\bf A}')$.
Let $\E_q[\log p(X \mid \hat{\btheta}_W)]$ be the expected log-likelihood of the random variable $X$ for a single observation drawn from the reweighted distribution.
The reweighted MLE seeks to maximize this quantity.

Note that, if $\btheta_W^*$ and $\btheta^*$ are the true values of the parameters in the reweighted and observed distributions, respectively, $\E_q[\log p(X \mid \btheta_W^*)] \neq \E[\log p(X \mid \btheta^*)]$ in general.
Therefore, the reweighted MLE estimates and the MLE estimates will typically not be the same.

As discussed in \cref{sec:reweighted_bic}, let $W({\bf O}, {\bf A}')$ be weights such that
\begin{align*}
    W({\bf O}, {\bf A}') p({\bf V}) &= q({\bf V}, {\bf A}'),
\end{align*}
where $q({\bf V}, {\bf A}')$ is a target reweighted distribution.
We may replace $q({\bf V}, {\bf A}')$ with any reweighted distribution, if we wish.
Then,
\begin{align*}
    \E[W({\bf O}, {\bf A}') \log p(X \mid \hat{\btheta}_W)] &= \int_{{\bf V}, {\bf A}'} W({\bf O}, {\bf A}') \log p(X \mid \hat{\btheta}_W) p({\bf V}) d{\bf v, a'} \\
    &= \int_{{\bf V}, {\bf A}'} \log p(X \mid \hat{\btheta}_W) W({\bf O}, {\bf A}') p({\bf V}) d{\bf v, a'} \\
    &= \int_{{\bf V}, {\bf A}'} \log p(X \mid \hat{\btheta}_W) q({\bf V}, {\bf A}') d{\bf v, a'} \\
    &= \E_q[\log p(X \mid \hat{\btheta}_W)]
\end{align*}
From the above equality, we can see that the reweighted MLE, which maximizes $\E_q[\log p(X \mid \hat{\btheta}_W)]$, equivalently seeks to find parameters $\hat{\btheta}_W$ that maximize $\E[W({\bf O}, {\bf A}') \log p(X \mid \hat{\btheta}_W)]$.
The finite sample estimate of this quantity is 
\begin{align*}
    \frac{1}{\sum_{i=1}^n W_i({\bf O}, {\bf A}')} \sum_{i=1}^n W_i({\bf O}, {\bf A}') \log p(X_i \mid \hat{\btheta}_W).
\end{align*}
Because the sum of the weights is fixed, maximizing this finite-sample estimate of the reweighted mean is equivalent to maximizing 
\begin{align*}
    \sum_{i=1}^n W_i({\bf O}, {\bf A}') \log p(X_i \mid \hat{\btheta}_W).
\end{align*}
In summary, we can compute the reweighted MLE by maximizing the reweighted sum of the observed log-likelihoods. 

When we have access to the ground-truth values of the weights, the reweighted MLE inherits key properties of the typical MLE, such as convergence to the true parameters at a rate of $O_p(n^{-1/2})$.
This is because the resulting estimating equation is simply an M-estimator with the criterion function $\sum_{i=1}^n W_i({\bf O}, {\bf A}') \log p(X_i \mid \hat{\btheta}_W)$ \citep{ref:van2000asymptotic}.

\subsection{Time and Space Complexity of \cref{alg:search}}
\label[secinapp]{app:complexity_search_alg}

Let $k$ be the number of variables in the set of potential treatments ${\bf A}$ and $n$ be the sample size.
The time complexity of \cref{alg:search} is $O(nk)$ because it iterates over the variables in ${\bf A}$ twice, and, at each iteration, we need to compute two MLEs and two BIC scores.
The space complexity of \cref{alg:search} is $O(n+k)$ because we need to keep track of the list of significant predictor variables ${\bf A}^*$, which has size at most $k$, as well as the data and weights, both of which are of size $n$.

\section{Proofs of Theorems and Propositions}
\label[secinapp]{app:proofs}

\subsection{Proof Intuition for \cref{prop:bic_derivation}}
\label[secinapp]{app:bic_derivation_intuition}

Here, we give intuitions for key steps of the proof for \cref{prop:bic_derivation}.
We use the same notation as that in \cref{sec:bic_and_consistency}.
We start with \cref{eq:marginal_likelihood} and rewrite it as
\begin{align*}
    p(\bD \mid M_r) &= \int \exp[\log p(\bD \mid \btheta_r, M_r)] \pi(\btheta_r \mid M_r) d\btheta_r.
\end{align*}
Next, we rewrite $\log p(\bD \mid \btheta_r, M_r)$ and $\pi(\btheta_r \mid M_r)$ using their Taylor expansions about the MLE with respect to the model $M_r$, which we denote as $\hat{\btheta}_r$.
We do so for two reasons.
First, both expansions will contain terms multiplied by increasingly higher degrees of $(\btheta_r - \hat{\btheta}_r) = O_p(n^{-1/2})$.
Thus, higher degree terms in the Taylor expansion will become negligible.
Second, most of the first and second derivatives of $\log p(\bD \mid \btheta_r, M_r)$ and $\pi(\btheta_r \mid M_r)$ evaluated at the MLE can be ignored for the finite sample approximation because they are constants.

However, there are two important non-negligible terms in the Taylor expansions: $\log p(\bD \mid \hat{\btheta}_r, M_r)$ and 
\begin{align*}
    & \frac{1}{2} (\btheta_r - \hat{\btheta}_r)^T  \frac{\partial^2 \log p(\bD \mid \btheta_r, M_r)}{\partial \btheta_r\partial \btheta^T} \bigg \vert_{\btheta_r = \hat{\btheta}_r} (\btheta_r - \hat{\btheta}_r). 
\end{align*}
They are non-negligible because the values of these terms grow with the sample size.
The first term $\log p(\bD \mid \hat{\btheta}_r, M_r)$ does not depend on the variable of integration $\btheta_r$ in \cref{eq:marginal_likelihood}, so we can factor it out of the integral.
Next, the second order term above approximately integrates to $n^{-d_r/2}$. Roughly, the $n$ comes from the term containing $n$ copies of the Fisher information of the MLE, and $d_r$ comes from the dimension of the matrix of second derivatives.

Putting this together and taking the logarithm on both sides, the marginal log-likelihood of the data can be approximated as
\begin{align*}
    \log p(\bD \mid M_r) &\approx \log \bigg[ \exp[\log p(\bD \mid \hat{\btheta}_r, M_r)] n^{-d_r/2} \bigg] \\
    &= \log p(\bD \mid \hat{\btheta}_r, M_r) - \frac{d_r}{2} \log n,
\end{align*}
which is the desired result and exactly the form that the BIC takes.
We ignore the terms resulting from the Taylor expansion of the prior $\pi(\btheta_r \mid M_r)$ because they become constants after integration.

\subsection{Proof of \cref{prop:bic_derivation}}

\bicDerivation*

\begin{proof}
    In this proof, for notational simplicity, we will drop the conditioning on the candidate model $M_r$, but note that we are conditioning on $M_r$ throughout the proof.
	
    Recall \cref{eq:marginal_likelihood}, which states that
    \begin{align*}
        p(\bD) &= \int p(\bD \mid \btheta_r) \pi(\btheta_r) d\btheta_r,
    \end{align*}
    where $\pi(\btheta_r)$ is the prior for $\btheta_r$.

    Let $\ell(\btheta_r) = \log p(\bD \mid \btheta_r)$ be the natural logarithm of the observed data likelihood conditional on the parameters $\btheta_r$.
    Then, we can rewrite \cref{eq:marginal_likelihood} as
	\begin{align*}
		p(\bD) &= \int \exp[\ell(\btheta_r)] \pi(\btheta_r) d\btheta_r
	\end{align*}

    Supposing that $\ell(\btheta_r)$ is thrice differentiable and bounded (Condition 5 in \cref{app:regularity_conditions}), consider a Taylor expansion of the log-likelihood function at the point $\hat{\btheta}_r$, the MLE for $\btheta_r$:
	\begin{align}
		\ell(\btheta_r) &= \ell(\hat{\btheta}_r) + (\btheta_r - \hat{\btheta}_r) \frac{\partial \ell(\btheta_r)}{\partial \btheta_r} \bigg \vert_{\btheta_r = \hat{\btheta}_r} + \frac{1}{2}  (\btheta_r - \hat{\btheta}_r)^T  \frac{\partial^2 \ell(\btheta)}{\partial \btheta_r\partial \btheta^T} \bigg \vert_{\btheta_r = \hat{\btheta}_r} (\btheta_r - \hat{\btheta}_r) + O_p((\btheta_r - \hat{\btheta}_r)^3) 
        \label{eq:likelihood_taylor1} \\
		&= \ell(\hat{\btheta}_r) - \frac{n}{2} (\btheta_r- \hat{\btheta}_r)^T  J(\hat{\btheta}_r) (\btheta_r- \hat{\btheta}_r) + O_p(n^{-3/2})
		\label{eq:likelihood_taylor2}
	\end{align}
    where 
    \begin{align*}
        J(\hat{\btheta}_r) = -\frac{1}{n} \frac{\partial^2 \ell(\btheta)}{\partial \btheta_r\partial \btheta^T} \bigg\vert_{\btheta_r= \hat{\btheta}_r}.
    \end{align*}
    In \cref{eq:likelihood_taylor1}, the final $O_p((\btheta_r - \hat{\btheta}_r)^3)$ term follows from the fact that all subsequent terms of the Taylor expansion are terms with $(\btheta_r - \hat{\btheta}_r) = O_p(n^{-1/2})$ at an order greater than $3$, so the cubic term dominates.
	To go from \cref{eq:likelihood_taylor1} to \cref{eq:likelihood_taylor2}, first notice that the first derivative of the log-likelihood of the data evaluated at the MLE is $0$ by definition, i.e. $\frac{\partial \ell(\btheta_r)}{\partial \btheta_r} \big\vert_{\btheta_r= \hat{\btheta}_r} = 0$.
    Thus, the second term in \cref{eq:likelihood_taylor1} is $0$.
    Next, $O_p((\btheta_r - \hat{\btheta}_r)^3) = O_p((n^{-1/2})^3) = O_p(n^{-3/2})$.
    Finally, $J(\hat{\btheta}_r)$ is the average observed Fisher information matrix per observation.

	Next, we similarly expand the prior distribution about the MLE $\hat{\btheta}_r$ up to the second term as follows:
	\begin{align}
		\pi(\btheta_r) &= \pi(\hat{\btheta}_r) + (\btheta_r- \hat{\btheta}_r) \frac{\partial \pi(\btheta_r)}{\partial \btheta_r} \bigg \vert_{\btheta_r= \hat{\btheta}_r} + O_p(n^{-1}).
		\label{eq:prior_taylor}
	\end{align}
	Now, substituting \cref{eq:likelihood_taylor2} and \cref{eq:prior_taylor} into \cref{eq:marginal_likelihood}, we get
	\begin{align}
		p(\bD) &= \int \exp \bigg [ \ell(\hat{\btheta}_r) - \frac{n}{2}  (\btheta_r - \hat{\btheta}_r)^T  J(\hat{\btheta}_r) (\btheta_r- \hat{\btheta}_r) + O_p(n^{-3/2}) \bigg ] \nonumber \\
        &\quad \quad \times \bigg ( \pi(\hat{\btheta}_r) + (\btheta_r- \hat{\btheta}_r) \frac{\partial \pi(\btheta_r)}{\partial \btheta_r} \bigg \vert_{\btheta_r= \hat{\btheta}_r} + O_p(n^{-1}) \bigg ) d\btheta_r\label{eq:combined_taylor}
	\end{align}
    Next, note that $\exp[O_p(n^{-3/2})] = 1 + O_p(n^{-3/2})$ from a Taylor expansion of $\exp[x]$ about $x=0$ as follows:
    \begin{align*}
        &\exp[x] = 1 + x + \frac{x^2}{2!} + \dots \\
        &\implies \exp[O_p(n^{-3/2})] = 1 + O_p(n^{-3/2})
    \end{align*}
    Furthermore, we can move $\exp[\ell(\hat{\btheta}_r)]$ out of the integral because this term does not depend on $\btheta_r$, the variable of integration.
    Using these two observations to continue simplifying \cref{eq:combined_taylor} gives us
	\begin{align}
		&= \exp[\ell(\hat{\btheta}_r)] \int \exp \bigg [ - \frac{n}{2}  (\btheta_r- \hat{\btheta}_r)^T  J(\hat{\btheta}_r) (\btheta_r- \hat{\btheta}_r) \bigg ] \nonumber \\
		&\quad \times \bigg ( \pi(\hat{\btheta}_r) + (\btheta_r- \hat{\btheta}_r) \frac{\partial \pi(\btheta_r)}{\partial \btheta_r} \bigg \vert_{\btheta_r= \hat{\btheta}_r} + O_p(n^{-1}) \bigg ) \bigg (1 + O_p(n^{-3/2}) \bigg )d\btheta_r\label{eq:move_log_likelihood_out} \\
		&= \exp[\ell(\hat{\btheta}_r)] \int \exp \bigg [ - \frac{n}{2}  (\btheta_r- \hat{\btheta}_r)^T  J(\hat{\btheta}_r) (\btheta_r- \hat{\btheta}_r) \bigg ] \nonumber \\
        &\quad \times \bigg ( \pi(\hat{\btheta}_r) + (\btheta_r- \hat{\btheta}_r) \frac{\partial \pi(\btheta_r)}{\partial \btheta_r} \bigg \vert_{\btheta_r= \hat{\btheta}_r} + O_p(n^{-1}) \bigg )d\btheta_r\label{eq:multiply_through_error}
	\end{align}
    \cref{eq:multiply_through_error} follows from expanding the terms inside the integral.
    All of the terms in this expansion with a coefficient of $O_p(n^{-3/2})$ are dominated by $O_p(n^{-1})$ and are thus left out of the resulting equation.
    
    We now continue to simplify and apply lineartiy of integration.
	\begin{align}
		&= \exp[\ell(\hat{\btheta}_r)] \bigg (\pi(\hat{\btheta}_r) \int \exp \bigg [ - \frac{n}{2}  (\btheta_r- \hat{\btheta}_r)^T  J(\hat{\btheta}_r) (\btheta_r- \hat{\btheta}_r) \bigg ] d\btheta_r\nonumber \\
		&\quad + \int \exp \bigg [ - \frac{n}{2}  (\btheta_r- \hat{\btheta}_r)^T  J(\hat{\btheta}_r) (\btheta_r- \hat{\btheta}_r) \bigg ] (\btheta_r- \hat{\btheta}_r) \frac{\partial \pi(\btheta_r)}{\partial \btheta_r} \bigg \vert_{\btheta_r= \hat{\btheta}_r} d\btheta_r\nonumber \\
        &\quad + \int \exp \bigg [ - \frac{n}{2}  (\btheta_r- \hat{\btheta}_r)^T  J(\hat{\btheta}_r) (\btheta_r- \hat{\btheta}_r) \bigg ] O_p(n^{-1}) d\btheta_r\bigg ) \label{eq:expand_integral}
	\end{align}

    We now consider each term in \cref{eq:expand_integral} separately.
	For the first term, we have that 
    \begin{align*}
        \int \exp \bigg [ - \frac{n}{2}  (\btheta_r- \hat{\btheta}_r)^T  J(\hat{\btheta}_r) (\btheta_r- \hat{\btheta}_r) \bigg ] d\btheta_r &= (2\pi)^{d_r/2} |(nJ(\hat{\btheta}_r))^{-1}|^{1/2} \\
        &= (2\pi)^{d_r/2} n^{-d_r/2} |J(\hat{\btheta}_r)|^{-1/2},
    \end{align*}
    where $|J(\hat{\btheta}_r)|$ is the determinant of the $d_r \times d_r$ matrix $J(\hat{\btheta}_r)$.
    The first equality follows from noticing that the left-hand side is the PDF of a multivariate normal distribution, without the normalization term, with mean $\hat{\btheta}_r$ and variance $(n J(\hat{\btheta}_r))^{-1}$. 
    The dimension of this multivariate normal distribution is $d_r$, the number of elements in $\btheta_r$.
    The normalization term of such a distribution is $(2 \pi)^{-d_r/2} |(nJ(\hat{\btheta}_r))^{-1}|^{-1/2}$.
    Thus, this integral is equal to $(2 \pi)^{d_r/2} |(nJ(\hat{\btheta}_r))^{-1}|^{1/2}$.
    For the second equality, recall that the determinant of the inverse of a matrix is the reciprocal of the determinant of the matrix.
    Therefore, we can further simplify to $(2 \pi)^{d_r/2} |(nJ(\hat{\btheta}_r))|^{-1/2}$.
    Finally, because each row in the matrix $J(\hat{\btheta}_r)$ is multiplied by $n$, we use the determinant scaling rule to pull the $n$ out of the determinant as $(2 \pi)^{d/2} (n^{d_r} |J(\hat{\btheta}_r)|)^{-1/2}$.
    One more simplification step gives the expression shown.

    For the second term, notice that $\int (\btheta_r - \hat{\btheta}_r) \exp[ - \frac{n}{2}  (\btheta_r- \hat{\btheta}_r)^T  J(\hat{\btheta}_r) (\btheta_r- \hat{\btheta}_r)] d\btheta_r= 0$ because the term inside the exponent is an even function symmetric about $\hat{\btheta}_r$ -- it is the density of a multivariate normal distribution without the normalizing term -- while the term outside of the exponent is also symmetric about $\hat{\btheta}_r$ but is odd. 
    Thus, the density at each point in the integral to the right of $\hat{\btheta}_r$ is cancelled out by the point equidistant to the left of $\hat{\btheta}_r$. 
    This, combined with the fact that the first derivative of the prior distribution is just a constant, the second term in \cref{eq:expand_integral} is $0$.

    For the third term, we apply the same logic as for the first term, except it is multiplied by $O_p(n^{-1})$.

    Continuing from \cref{eq:expand_integral},
	\begin{align}
        p(\bD) &= \exp[\ell(\hat{\btheta}_r)] \bigg ( (2\pi)^{d_r/2} n^{-d_r/2} |J(\hat{\btheta}_r)|^{-1/2} \pi(\hat{\btheta}_r) + (2\pi)^{d_r/2} n^{-d_r/2} |J(\hat{\btheta}_r)|^{-1/2} O_p(n^{-1}) \bigg ) \\
        &= \exp[\ell(\hat{\btheta}_r)] \bigg ( (2\pi)^{d_r/2} n^{-d_r/2} |J(\hat{\btheta}_r)|^{-1/2} \pi(\hat{\btheta}_r) + O_p(n^{-1-d_r/2}) \bigg ) \label{eq:marginal_simplify1} \\
		&= \exp[\ell(\hat{\btheta}_r)] \bigg ( (2\pi)^{d_r/2} n^{-d_r/2} |J(\hat{\btheta}_r)|^{-1/2} \pi(\hat{\btheta}_r) + O_p(n^{-1}) \bigg ) \label{eq:marginal_simplify2} \\
		&=  \exp[\ell(\hat{\btheta}_r)](2\pi)^{d_r/2} n^{-d_r/2} |J(\hat{\btheta}_r)|^{-1/2} \pi(\hat{\btheta}_r) + O_p(n^{-1}) \label{eq:marginal_simplify3}
	\end{align}
    In \cref{eq:marginal_simplify1}, we notice that $(2\pi)^{d_r/2}$ and $|J(\hat{\btheta}_r)|^{-1/2}$ are constants and can therefore be absorbed by the $O_p(n^{-1 - d/2})$ term.
    In \cref{eq:marginal_simplify2}, we use the constraint that the dimension $d_r > 0$ to conclude that $O_p(n^{-1 - d_r/2}) = O_p(n^{-1})$.
    In \cref{eq:marginal_simplify3}, we use the fact that $\exp[\ell(\hat{\btheta}_r)]$ is just a constant to distribute it between the two terms in the parentheses.

    Now, applying a logarithm to both sides of \cref{eq:marginal_simplify3}, we obtain
	\begin{align}
        \log p(\bD) &= \log \bigg(\exp[\ell(\hat{\btheta}_r)](2\pi)^{d_r/2} n^{-d_r/2} |J(\hat{\btheta}_r)|^{-1/2} \pi(\hat{\btheta}_r) + O_p(n^{-1}) \bigg) \\ 
        &= \log \bigg(\exp[\ell(\hat{\btheta}_r)](2\pi)^{d_r/2} n^{-d_r/2} |J(\hat{\btheta}_r)|^{-1/2} \pi(\hat{\btheta}_r) \bigg) + O_p(n^{-1}) \label{eq:log_simplify1} \\ 
		&= \log p(\bD \mid \hat{\btheta}_r) + \frac{d_r}{2} \log 2\pi - \frac{d_r}{2} \log n - \frac{1}{2} \log |J(\hat{\btheta}_r)| + \log \pi(\hat{\btheta}_r) + O_p(n^{-1})
		\label{eq:bic_with_constants}
	\end{align}
    In \cref{eq:log_simplify1}, we use $\log (A + O_p(n^{-1})) = \log A + \frac{(A + O_p(n^{-1}) - A)}{A} =  \log A + O_p(n^{-1})$ from taking the Taylor expansion of $\log (A+O_p(n^{-1}))$ about $A$, where $A$ is any term or expression greater than $0$.
    Further terms of the Taylor expansion are dominated by the second term with $O_p(n^{-1})$.

    In \cref{eq:bic_with_constants}, note that $\frac{d_r}{2} \log 2\pi$, $-\frac{1}{2} \log |J(\hat{\btheta}_r)|$, and $\log \pi(\hat{\btheta}_r)$ are all constants and thus can be combined to a single term $O_p(1)$, which dominates $O_p(n^{-1})$.
    Thus, we can rewrite \cref{eq:bic_with_constants} as
	\begin{align}
		\log p(\bD) &= \log p(\bD \mid \hat{\btheta}_r) - \frac{d_r}{2} \log n + O_p(1).
        \label{eq:bic_approximation}
	\end{align}
    Because the final term is a constant that becomes negligible as the sample size tends to infinity, we can approximate $\log p(\bD)$ as
    \begin{align*}
		\log p(\bD) &\approx \log p(\bD \mid \hat{\btheta}_r) - \frac{d_r}{2} \log n.
    \end{align*}
    This completes the proof.
\end{proof}

\subsection{Proof Intuition for \cref{thm:bic_consistency}}
\label[secinapp]{app:bic_consistency_intuition}

Here, we give a brief overview for why \cref{thm:bic_consistency} is true.
First, consider the following lemma.

\begin{restatable}[Simplification of the Log-Likelihood]{lemma}{likelihoodSimplification}\label{lm:likelihood_simplification}
    Let ${\bf X}$ be a set of random variables and suppose that ${\bf Z} = \emptyset$.
    Let $M_r \subset \Theta$ be a model, $d_r$ be the dimension of $M_r$, and $\bar{{\bf Y}}_{n} = \frac{1}{n} \sum_{i=1}^n {\bf T}({\bf X}_{i})$.
    Let $H = \sum_{i=1}^n \log h({\bf X}_{i})$, and $S_r = \sup_{\hat{\btheta}_{r} \in M_r} ( \bar{{\bf Y}}_{n} \hat{\btheta}_{r} - b(\hat{\btheta}_{r}) )$, where $\hat{\btheta}_r$ is the MLE for the model $M_r$.
    If the model $M_r$ for ${\bf X}$ is a member of the exponential family, then the BIC score $\gamma(\D, M_r)$ can be equivalently written as
    \begin{align}
        \gamma(\D, M_r) &= H + n S_r - \frac{d_r}{2} \log n.
        \label{eq:simplified_bic}
    \end{align}
\end{restatable}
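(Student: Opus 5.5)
The lemma is an algebraic identity, so the plan is to substitute the exponential family form \cref{eq:exp_family} into the definition of the BIC score \cref{def:bic_definition} with ${\bf Z}=\emptyset$. By the IID assumption, the log-likelihood of $\D$ at any $\btheta_r \in M_r$ is
\begin{align*}
    \log p(\D \mid \btheta_r, M_r) = \sum_{i=1}^n \log h({\bf X}_i) + \sum_{i=1}^n {\bf T}({\bf X}_i)\btheta_r - n\, b(\btheta_r) = H + n\big(\bar{{\bf Y}}_n \btheta_r - b(\btheta_r)\big).
\end{align*}
Here I use linearity of the dot product to pull the sum inside, giving $\sum_i {\bf T}({\bf X}_i)\btheta_r = n \bar{{\bf Y}}_n \btheta_r$. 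The key structural fact is that $H$ does not depend on $\btheta_r$, and the data enter the $\btheta_r$-dependent part only through the sufficient statistic average $\bar{{\bf Y}}_n$.

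Next, I evaluate at the MLE. By \cref{eq:mle} restricted to $M_r$, $\hat{\btheta}_r$ maximizes the expression above over $M_r$. Since $H$ is constant in $\btheta_r$ and $n>0$, maximizing the log-likelihood over $M_r$ is the same as maximizing $\bar{{\bf Y}}_n \btheta_r - b(\btheta_r)$ over $M_r$. Hence
\begin{align*}
    \log p(\D \mid \hat{\btheta}_r, M_r) = H + n \sup_{\btheta_r \in M_r}\big(\bar{{\bf Y}}_n \btheta_r - b(\btheta_r)\big) = H + n S_r.
\end{align*}
Subtracting $\frac{d_r}{2}\log n$ then yields \cref{eq:simplified_bic} directly.

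The only delicate point is whether the supremum defining $S_r$ is attained, so that $\sup$ and the value at $\hat{\btheta}_r$ coincide. I would handle this by noting that $S_r$ is defined as a supremum precisely so the identity holds in either case. If the MLE exists in $M_r$, the supremum is attained at it. If not, I would interpret the maximized log-likelihood as the supremum of the log-likelihood over $M_r$, which equals $H + nS_r$ by the same argument. Concavity of $\btheta \mapsto \bar{{\bf Y}}_n\btheta - b(\btheta)$, from convexity of the log-partition function, together with the regularity conditions of \cref{app:regularity_conditions}, guarantees attainment in the interior with probability tending to one. Since this is stated only as a remark, I expect no genuine obstacle in the argument.
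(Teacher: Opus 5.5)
Your proposal is correct and follows essentially the same route as the paper: substitute the exponential-family density under the IID assumption, separate the parameter-free term $H=\sum_{i=1}^n \log h({\bf X}_i)$ from $n(\bar{{\bf Y}}_n\btheta - b(\btheta))$, identify the log-likelihood at the MLE over $M_r$ with $H + nS_r$, and then subtract the penalty. Your extra discussion of whether the supremum is attained goes beyond the paper, which simply equates the value at the MLE with the supremum; it is not needed for the identity once the maximized log-likelihood is read as that supremum.
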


\cref{lm:likelihood_simplification} follows from plugging in the definition of a density from the exponential family (\cref{eq:exp_family}) to \cref{eq:bic_definition} and simplifying.
First, notice that the term $H$ in \cref{eq:simplified_bic} does not depend on the candidate model nor its dimension.
Thus, when we compare the scores of two models by taking their difference, these terms can be ignored because they will always cancel out exactly.

Next, there are $n$ copies of $S_r$, which we call the maximization term.
This term maximizes the portion of the log-likelihood that depends on the parameters over the space specified by the candidate model.
Finally, there is a penalty term scaled by $\log n$ that reduces the score based on the dimension of the model parameters.

First, consider the circumstance where $M_2$ contains the true parameter but $M_1$ does not.
The maximization term for $M_2$ will, asymptotically, be maximized at the true parameters $\btheta^*$.
Meanwhile, the maximization term for $M_1$ will be maximized at some sub-optimal value.
Thus, the difference between the maximization terms of $M_2$ and $M_1$ will be positive.
This difference is scaled by $n$ whereas the difference between penalty terms is scaled by $\log n$, so the former will asymptotically dominate the overall difference in BIC scores.

Next, consider the circumstance where both models contain the true parameter, but $M_2$ has a smaller model dimension.
Because both models contain the true parameter, their maximization terms will be maximized at, asymptotically, arbitrary close values.
Thus, the difference in maximization terms will be negligible when compared to the difference in penalty terms, which is scaled by $\log n$.
$M_2$ has a lower model dimension, so the overall difference in BIC scores will be positive.

\subsection{Proof of \cref{thm:bic_consistency}}

We first restate the definition of the BIC and the definition of a consistent scoring criterion from \cref{sec:bic_and_consistency}.

\bicDefinition*

\correctChoice*

Now, we will prove several lemmas that will aid us in the proof of \cref{thm:bic_consistency}.

For all the lemmas and proofs in this subsection, let $\bar{{\bf Y}}_n = \frac{1}{n} \sum_{i=1}^n {\bf T}({\bf X}_i)$.
This is a vector of the sample averages of each of the sufficient statistics for the random variables ${\bf X}$.
We also assume that every model $M_r$ is a member of the exponential family and that sufficient regularity conditions hold for MLEs to be well-behaved.
For more details on these regularity conditions, we refer the reader to \cref{app:regularity_conditions} and \cite{ref:van2000asymptotic, ref:cramer1999mathematical, ref:barndorff2014information}.

\likelihoodSimplification*

\begin{proof}
    Following \cref{eq:exp_family} and the assumption that $M_r$ for ${\bf X}$ is a member of the exponential family, the log-likelihood of a set of random variable ${\bf X}$ at some arbitrary parameter $\btheta$ can be simplified as
    \begin{align}
        \log p_{\bf X}(\D \mid \btheta) &= \log p_{\bf X}({\bf X}_1, \dots, {\bf X}_n; \btheta) \\
        &= \log \bigg[ h({\bf X}_1) \exp \bigg[ {\bf T}({\bf X}_1) \btheta - b(\btheta) \bigg] \dots h({\bf X}_n) \exp \bigg[ {\bf T}({\bf X}_n) \btheta - b(\btheta) \bigg] \bigg] \label{eq:likelihood_simplification1} \\
        &= \log \bigg[ h({\bf X}_1) \dots h({\bf X}_n) \exp \bigg[ \bigg( \sum_{i=1}^n {\bf T}({\bf X}_i) \bigg) \btheta - n b(\btheta) \bigg] \bigg] \label{eq:likelihood_simplification2} \\
        &= \log \bigg[ h({\bf X}_1) \dots h({\bf X}_n) \bigg] + \log \bigg[ \exp \bigg[ \bigg( \sum_{i=1}^n {\bf T}({\bf X}_i) \bigg) \btheta - n b(\btheta) \bigg] \bigg] \label{eq:likelihood_simplification3} \\
        &= \sum_{i=1}^n \log h({\bf X}_i) + \bigg( \sum_{i=1}^n {\bf T}({\bf X}_i) \bigg) \btheta- n b(\btheta) \label{eq:likelihood_simplification4} \\
        &= \sum_{i=1}^n \log h({\bf X}_i) + n \bar{{\bf Y}}_n \btheta - n b(\btheta) \label{eq:likelihood_simplification5} \\
        &= \sum_{i=1}^n \log h({\bf X}_i) + n (\bar{{\bf Y}}_n \btheta - b(\btheta)) \label{eq:likelihood_simplification6} 
    \end{align}
    \cref{eq:likelihood_simplification1} follows from plugging in the definition for the PDF or PMF of variables that are members of the exponential family and the IID assumption.
    \cref{eq:likelihood_simplification2} follows from grouping the base measure terms $h({\bf X})$ and the rest of the terms inside the exponent separately.
    Because $b(\btheta)$ is the same for each observation, we end up with $n$ copies of it.
    In contrast, we sum over the $n$ observations of the sufficient statistic ${\bf T}({\bf X}_i)$.
    \cref{eq:likelihood_simplification3} follows from applying the product rule of logarithms: $\log(AB) = \log A + \log B$.
    \cref{eq:likelihood_simplification4} follows from applying the product rule of logarithms on the base measure terms and the identity $\log(\exp(x)) = x$.
    \cref{eq:likelihood_simplification5} follows from the definition $n \bar{{\bf Y}}_n = \sum_{i=1}^n {\bf T}({\bf X}_i)$.
    Finally, \cref{eq:likelihood_simplification6} follows from factoring out $n$ in the last two terms.

    The BIC evaluates the log-likelihood of ${\bf X}$ at the MLE within the constraints of model $M_r$.
    Thus, using the simplification above, we can equivalently write
    \begin{align*}
        \gamma(\D, M_r) &= \log p(\D \mid \hat{\btheta}_r, M_r) - \frac{d_r}{2} \log n \\
        &= \sum_{i=1}^n \log h({\bf X}_i) + n \sup_{\hat{\btheta}_r \in M_r} \bigg( \bar{{\bf Y}}_n \hat{\btheta}_r - b(\hat{\btheta}_r) \bigg) - \frac{d_r}{2} \log n
    \end{align*}
    The supremum here means that we are choosing the values of the parameters within the model $M_r$ that maximize the portion of the log-likelihood that depends on the parameters.
    Note that the base measure does not depend on the parameter, so it is not included within the supremum.
    Our desired result then follows from plugging in the definitions of $H$ and $S_r$.
\end{proof}

\begin{restatable}[Gradient of the Log-Partition Function]{lemma}{gradientLogPartition}\label{lm:gradient_log_partition}
    The gradient of the log-partition function evaluated at the true value of the parameter $\btheta^*$ is equal to the expected value of ${\bf T}({\bf X})$.
    That is,
    \begin{align*}
        \nabla b(\btheta^*) &= \E[{\bf T}({\bf X})],
    \end{align*}
    where $\nabla b(\btheta^*)$ is the gradient of the log-partition function -- the vector of partial derivatives of the log-partition function evaluated at the parameter vector $\btheta^*$.
\end{restatable}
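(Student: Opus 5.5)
The plan is to use the normalization of the density in \cref{eq:exp_family} and differentiate it with respect to the parameter. For every $\btheta \in \mathrm{int}\Theta$ we have
\begin{align*}
    1 = \int h({\bf x}) \exp[{\bf T}({\bf x}) \btheta - b(\btheta)] \, d{\bf x},
    \quad \text{so} \quad
    b(\btheta) = \log \int h({\bf x}) \exp[{\bf T}({\bf x}) \btheta] \, d{\bf x}.
\end{align*}
This identity holds on an open set, because $\btheta^* \in \mathrm{int}\Theta$. That is what lets us differentiate it at $\btheta^*$.

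The key step is to differentiate this expression for $b(\btheta)$ under the integral sign. Doing so gives
\begin{align*}
    \nabla b(\btheta) = \frac{\int {\bf T}({\bf x}) h({\bf x}) \exp[{\bf T}({\bf x}) \btheta] \, d{\bf x}}{\int h({\bf x}) \exp[{\bf T}({\bf x}) \btheta] \, d{\bf x}} = \int {\bf T}({\bf x}) h({\bf x}) \exp[{\bf T}({\bf x}) \btheta - b(\btheta)] \, d{\bf x} = \E_{\btheta}[{\bf T}({\bf X})].
\end{align*}
Evaluating at $\btheta = \btheta^*$, the true parameter under which the data are drawn, gives $\nabla b(\btheta^*) = \E[{\bf T}({\bf X})]$.

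An equivalent route uses Condition 3 of \cref{app:regularity_conditions}, which says the score has mean zero. Here the score is $\partial_{\btheta} \log p_{\bf X}({\bf x}, \btheta) = {\bf T}({\bf x}) - \nabla b(\btheta)$. Taking its expectation at $\btheta^*$ and setting it to zero gives the same identity immediately.

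The main obstacle is justifying the interchange of derivative and integral. Condition 1 (the support does not depend on the parameter) permits Leibniz's rule. In addition, the standard exponential-family fact that $\int h \exp[{\bf T}\btheta]$ is finite on a neighborhood of any interior point gives a dominating function. Concretely, $|{\bf T}| e^{{\bf T}\btheta}$ is bounded by a sum of exponentials $e^{{\bf T}(\btheta \pm \epsilon e_j)}$, each of which is integrable against $h$, so dominated convergence applies. I would cite \cite{ref:barndorff2014information} or \cite{ref:lehmann2005testing} for this step rather than redo it.
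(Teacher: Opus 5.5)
Your proposal is correct and takes essentially the same route as the paper, which differentiates the normalization identity $\int h({\bf x}) \exp[{\bf T}({\bf x})\btheta - b(\btheta)]\,d{\bf x} = 1$ under the integral sign to obtain $\E[{\bf T}({\bf X}) - \nabla b(\btheta^*)] = 0$; first solving for $b(\btheta)$ as a log-integral, as you do, is only a cosmetic rearrangement of that step. Your version is slightly more careful than the paper's: you differentiate in $\btheta$ on an open neighborhood before evaluating at $\btheta^*$, and you sketch a dominated-convergence justification for the interchange, whereas the paper simply appeals to its regularity assumptions.
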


\begin{proof}
    We start with the fact that the joint PDF or PMF of the random variables ${\bf X}$ at their true parameters $\btheta^*$ integrates to $1$.
    \begin{align}
        \int h({\bf x}) \exp[{\bf T}({\bf x}) \btheta^* - b(\btheta^*)] d{\bf x} &= 1 \\
        \frac{\partial}{\partial \btheta^*} \int h({\bf x}) \exp[{\bf T}({\bf x}) \btheta^* - b(\btheta^*)] d{\bf x} &= 0 \label{eq:gradient_log_partition1} \\
        \int \frac{\partial}{\partial \btheta^*} h({\bf x}) \exp[{\bf T}({\bf x}) \btheta^* - b(\btheta^*)] d{\bf x} &= 0 \label{eq:gradient_log_partition2} \\
        \int ({\bf T}({\bf x}) - \nabla b(\btheta^*)) h({\bf x}) \exp[{\bf T}({\bf x}) \btheta^* - b(\btheta^*)] d{\bf x} &= 0 \label{eq:gradient_log_partition3} \\
        \E[{\bf T}({\bf X}) - \nabla b(\btheta^*)] &= 0 \label{eq:gradient_log_partition4} \\
        \E[{\bf T}({\bf X})] &= \nabla b(\btheta^*) \label{eq:gradient_log_partition5} 
    \end{align}
    In \cref{eq:gradient_log_partition1}, we take the derivative with respect to $\btheta^*$ on both sides of the equation.
    In \cref{eq:gradient_log_partition2}, we use our sufficient regularity assumption to switch the order of differentiation and integration.
    In \cref{eq:gradient_log_partition3}, we evaluate the derivative.
    First, the base measure $h({\bf x})$ does not depend on $\btheta^*$, so we do not need to differentiate it.
    Then, we use the chain rule to get 
    \begin{align*}
        \frac{\partial}{\partial \btheta^*} \exp[{\bf T}({\bf x}) \btheta^* - b(\btheta^*)] &= \exp[{\bf T}({\bf x}) \btheta^* - b(\btheta^*)] \frac{\partial}{\partial \btheta^*} ({\bf T}({\bf x}) \btheta^* - b(\btheta^*)) \\
        &= \exp[{\bf T}({\bf x}) \btheta^* - b(\btheta^*)] ({\bf T}({\bf x}) - \nabla b(\btheta^*)).
    \end{align*}
    In \cref{eq:gradient_log_partition4}, we use the law of the unconscious statistician ($\E[g(X)] = \int g(x) p_X(x) dx$) to rewrite the integral as an expectation.
    Finally, in \cref{eq:gradient_log_partition5}, we use the observation that $\nabla b(\btheta^*)$ is just a vector of numbers and does not depend on ${\bf X}$.
    Thus, its expectation is equal to its value.
    This completes the proof.

    For a more detailed treatment of this fact, we refer the reader to Chapter 8 of \cite{ref:barndorff2014information}.    
\end{proof}

\begin{restatable}[Maximum Log-Likelihood of a Single Observation]{lemma}{maxLogLikelihood}\label{lm:max_log_likelihood}
    Let $f(\btheta) = \nabla b(\btheta^*) \btheta - b(\btheta)$.
    $f(\btheta)$ attains its unique maximum at $\btheta^*$.
\end{restatable}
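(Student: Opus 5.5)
The plan is the standard concavity argument for exponential families, using \cref{lm:gradient_log_partition} to place the stationary point at $\btheta^*$. The proof has three steps.

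\textbf{Step 1 (concavity).} First I will show that $f(\btheta) = \nabla b(\btheta^*)\btheta - b(\btheta)$ is strictly concave on $\mathrm{int}\,\Theta$. The first term is linear in $\btheta$, so it suffices to show that $b$ is strictly convex. Differentiate the normalization identity $\int h({\bf x}) \exp[{\bf T}({\bf x})\btheta - b(\btheta)]\,d{\bf x} = 1$ twice, exactly as in the proof of \cref{lm:gradient_log_partition}; the interchange of derivative and integral is justified by Condition 1 in \cref{app:regularity_conditions}. This gives $\nabla^2 b(\btheta) = \mathrm{Cov}_{\btheta}[{\bf T}({\bf X})]$. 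Equivalently, $\nabla^2 b(\btheta)$ is the Fisher information $I(\btheta)$ from Condition 3, since $-\nabla^2_{\btheta} \log p_{\bf X} = \nabla^2 b(\btheta)$. A covariance matrix is positive semidefinite, so $b$ is convex.

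\textbf{Step 2 (stationary point).} Next I will identify the maximizer. We have $\nabla f(\btheta) = \nabla b(\btheta^*) - \nabla b(\btheta)$, which vanishes at $\btheta = \btheta^*$. Since $\btheta^*$ lies in the interior of $\Theta$ (Condition 2), it is an interior critical point of a concave function, and hence a global maximizer on the convex set $\Theta$. \Cref{lm:gradient_log_partition} supplies the interpretation $\nabla b(\btheta^*) = \E[{\bf T}({\bf X})]$, so that $f(\btheta)$ is the expected per-observation log-likelihood up to the term $\E[\log h({\bf X})]$, which does not depend on $\btheta$. I will mention this, although it is not needed for the argument.

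\textbf{Step 3 (uniqueness), the main obstacle.} The difficulty is that positive semidefiniteness only yields a maximum, not a unique one. I plan to use strict convexity of $b$ and rule out flat directions as follows. Suppose $\btheta_1 \neq \btheta^*$ also maximizes $f$. By concavity, $f$ is then constant on the segment between $\btheta^*$ and $\btheta_1$. Writing $\mathbf{v} = \btheta_1 - \btheta^*$, this forces $\mathbf{v}^T \nabla^2 b\, \mathbf{v} = 0$ along the segment, that is, $\mathrm{Var}_{\btheta}[{\bf T}({\bf X})\mathbf{v}] = 0$. So ${\bf T}({\bf X})\mathbf{v}$ would be almost surely constant. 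That contradicts Condition 4, which says the Fisher information $I(\btheta^*) = \nabla^2 b(\btheta^*)$ is positive definite (a minimal or identifiable parametrization).

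An alternative route to uniqueness avoids the segment argument. Compute $f(\btheta^*) - f(\btheta) = \mathrm{KL}(p_{\btheta^*} \,\|\, p_{\btheta})$, using $\nabla b(\btheta^*) = \E_{\btheta^*}[{\bf T}]$. This quantity is positive whenever $p_{\btheta} \neq p_{\btheta^*}$. Positive definiteness again guarantees that distinct parameters give distinct distributions.

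I expect the only delicate point to be making explicit that positive definiteness at $\btheta^*$ implies identifiability globally. This holds because ${\bf T}({\bf X})\mathbf{v}$ being almost surely constant under one $\btheta$ implies the same under every $\btheta$, since all members of the family share the support $h$ (Condition 1).
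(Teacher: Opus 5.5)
Your proof is correct. Its core matches the paper's: $\nabla f(\btheta) = \nabla b(\btheta^*) - \nabla b(\btheta)$ vanishes at $\btheta^*$. The difference lies in what is actually established.

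The paper stops at the first-order condition. It asserts that $\nabla b(\btheta) = \nabla b(\btheta^*)$ ``occurs when $\btheta = \btheta^*$,'' which tacitly assumes $\nabla b$ is injective. It never argues that this critical point is a maximum rather than some other stationary point. Both facts are delegated to Theorems 9.13 and 9.1 of Barndorff-Nielsen.

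Your Steps 1 and 3 supply exactly these two missing pieces in a self-contained way:
\begin{itemize}
\item $\nabla^2 b(\btheta) = \mathrm{Cov}_{\btheta}[{\bf T}({\bf X})]$ makes $f$ concave, so the interior critical point is a global maximizer.
\item Positive definiteness from Condition 4 rules out flat directions, which gives uniqueness. Your identity $f(\btheta^*) - f(\btheta) = \mathrm{KL}(p_{\btheta^*} \,\|\, p_{\btheta})$ is an equally valid route.
\end{itemize}
What your version buys is transparency: it shows the lemma really rests on a minimal (identifiable) parametrization via Condition 4. The paper's version is shorter but, as written, is only a stationarity argument.

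One small simplification is available in Step 3. You do not need the mutual-absolute-continuity argument to transfer degeneracy back to $\btheta^*$. Set $g(t) = f(\btheta^* + t\mathbf{v})$, which is constant on $[0,1]$. Then $g''(t) = -\mathbf{v}^T \nabla^2 b(\btheta^* + t\mathbf{v})\,\mathbf{v} = 0$ for $t \in (0,1)$. Since $\btheta^*$ is interior and $\nabla^2 b$ is continuous there, letting $t \to 0$ gives $\mathbf{v}^T \nabla^2 b(\btheta^*)\,\mathbf{v} = 0$ directly, contradicting Condition 4.
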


\begin{proof}
    Consider the derivative of $f(\btheta)$ with respect to $\btheta$.
    Using the $\nabla$ notation to denote the gradient,
    \begin{align*}
        \nabla f(\btheta) &= \nabla b(\btheta^*) - \nabla b(\btheta).
    \end{align*}
    This derivative is equal to $0$ when $\nabla b(\btheta^*) = \nabla b(\btheta)$, which occurs when $\btheta = \btheta^*$.

    For a more detailed treatment of this fact, we refer the reader to Theorems 9.13 and 9.1 of \cite{ref:barndorff2014information}.    

    Combining \cref{lm:likelihood_simplification,lm:gradient_log_partition}, we can interpret $f(\btheta) = \nabla b(\btheta^*) \btheta - b(\btheta) = \E[{\bf T}({\bf X})] \btheta - b(\btheta)$ as the log-likelihood, without the base measure, of a guess $\btheta$ for the parameters evaluated at the mean of ${\bf T}({\bf X})$.
\end{proof}

Now that we have proved all of our lemmas, we can prove the consistency of the BIC for marginal exponential families.

\bicConsistency*

\begin{proof}
    We first prove the theorem assuming that ${\bf X}$ has only a single random variable $X_1 \equiv X$.
    Then, the proof can be generalized to the case where ${\bf X}$ has an arbitrary number of elements by replacing each occurrence of $X$ with ${\bf X}$.

    Because we are assuming that ${\bf X}$ has only one element, we will drop the subscript $j$ for $X$.
    We start with the proof for condition (i) where $\btheta^* \in M_2 \setminus M_1$.
    Here, $\btheta^* \not \in M_1$ by assumption.

    As in \cref{lm:max_log_likelihood}, let $f(\btheta) = \nabla b(\btheta^*) \btheta - b(\btheta)$.
    Now, we pick $\epsilon > 0$ and a neighborhood $N$ near $\btheta^*$ such that $N \cap M_1 = \emptyset$ (meaning $N$ does not overlap with the space covered in $\Theta$ by $M_1$).
    For $\btheta \not \in N$,
    \begin{align}
        f(\btheta) + \epsilon &\leq f(\btheta^*) \nonumber \\
        \nabla b(\btheta^*) \btheta - b(\btheta) + \epsilon &\leq \nabla b(\btheta^*) \btheta^* - b(\btheta^*). \label{eq:neighborhood_inequality}
    \end{align}
    Here, $\btheta$ is a point not in $N$ that is sufficiently far away from the true parameters $\btheta^*$.
    From \cref{lm:max_log_likelihood}, we know that $f(\btheta)$ attains its unique maximum at $\btheta^*$, so $f(\btheta)$ will be at least some $\epsilon$ away from $f(\btheta^*)$.

    From the above inequality, we have that
    \begin{align*}
        \sup_{\hat{\btheta}_1 \in M_1} \nabla b(\btheta^*) \hat{\btheta}_1 - b(\hat{\btheta}_1) + \epsilon \leq \nabla b(\btheta^*) \btheta^* - b(\btheta^*)
    \end{align*}
    In words, the above inequality states that the value of the parameter that maximizes $f(\btheta)$ under the model $M_1$ will always be at least some $\epsilon > 0$ away from $f(\btheta^*)$.
    This is true because $M_1$ does not contain the true parameter $\btheta^*$.

    Next, from the weak law of large numbers, $\bar{{\bf Y}}_n \xrightarrow{p} \E[{\bf T}(X)]$ as $n \rightarrow \infty$.
    From \cref{lm:gradient_log_partition}, this means that $\bar{{\bf Y}}_n \xrightarrow{p} \nabla b(\btheta^*)$ as $n \rightarrow \infty$.
    Thus,
    \begin{align*}
        \sup_{\hat{\btheta}_r \in M_r} \bar{{\bf Y}}_n \hat{\btheta}_r - b(\hat{\btheta}_r) &\xrightarrow{p} \sup_{\hat{\btheta}_r \in M_r} \nabla b(\btheta^*) \hat{\btheta}_r - b(\hat{\btheta}_r)
    \end{align*}
    as $n \rightarrow \infty$ for both models $M_1$ and $M_2$ due to the continuity of the function $g(x, M_r) = \sup_{\hat{\btheta}_r \in M_r} x \hat{\btheta}_r - b(\hat{\btheta}_r)$.
    This is an application of the continuous mapping theorem.
    Using the definition of convergence in probability, we can write that, with probability $1$ as $n \rightarrow \infty$,
    \begin{align}
        \bigg | \sup_{\hat{\btheta}_r \in M_r} \bar{{\bf Y}}_n \hat{\btheta}_r - b(\hat{\btheta}_r) - \sup_{\hat{\btheta}_r \in M_r} \nabla b(\btheta^*) \hat{\btheta}_r - b(\hat{\btheta}_r) \bigg | < \epsilon/4.
        \label{eq:supremum_distance}
    \end{align}
    \cref{eq:supremum_distance} states that, for both models $M_1$ and $M_2$, we can always find some $n$ large enough such that the distance between $g(\bar{{\bf Y}}_n, M_r)$ and $g(\nabla b(\btheta^*), M_r)$ is at most $\epsilon/4$ apart.
    The $\epsilon$ here is the same as the $\epsilon$ used in \cref{eq:neighborhood_inequality}.

    Now,
    \begin{align*}
        \sup_{\hat{\btheta}_2 \in M_2} \nabla b(\btheta^*) \hat{\btheta}_2 - b(\hat{\btheta}_2) = \nabla b(\btheta^*) \btheta^* - b(\btheta^*).
    \end{align*}
    By assumption, $\btheta^* \in M_2$, so taking the supremum over $M_2$ for $f(\btheta)$ will achieve its maximum value, which is at $\btheta^*$.
    This allows us to write
    \begin{align}
        \sup_{\hat{\btheta}_1 \in M_1} \nabla b(\btheta^*) \hat{\btheta}_1 - b(\hat{\btheta}_1) + \epsilon \leq \sup_{\hat{\btheta}_2 \in M_2} \nabla b(\btheta^*) \hat{\btheta}_2 - b(\hat{\btheta}_2)
        \label{eq:m1_inequality}
    \end{align}

    Using the definition of absolute value, \cref{eq:supremum_distance} can be alternatively written as
    \begin{align}
        -\epsilon/4 &< \sup_{\hat{\btheta}_r \in M_r} \bar{{\bf Y}}_n \hat{\btheta}_r - b(\hat{\btheta}_r) - \sup_{\hat{\btheta}_r \in M_r} \nabla b(\btheta^*) \hat{\btheta}_r - b(\hat{\btheta}_r) < \epsilon/4 \nonumber \\
        \sup_{\hat{\btheta}_r \in M_r} \bar{{\bf Y}}_n \hat{\btheta}_r - b(\hat{\btheta}_r) - \epsilon/4 &< \sup_{\hat{\btheta}_r \in M_r} \nabla b(\btheta^*) \hat{\btheta}_r - b(\hat{\btheta}_r) < \sup_{\hat{\btheta}_r \in M_r} \bar{{\bf Y}}_n \hat{\btheta}_r - b(\hat{\btheta}_r) + \epsilon/4 
        \label{eq:supremum_distance2}
    \end{align}
    with probability $1$ as $n \rightarrow \infty$ for both models $M_1$ and $M_2$.

    Starting with \cref{eq:m1_inequality} and using the left bound for $M_1$ and the right bound for $M_2$ from \cref{eq:supremum_distance2}, we get that, with probability $1$ as $n \rightarrow \infty$,
    \begin{align}
        \sup_{\hat{\btheta}_1 \in M_1} \bar{{\bf Y}}_n \hat{\btheta}_1 - b(\hat{\btheta}_1) - \epsilon/4 + \epsilon &< \sup_{\hat{\btheta}_2 \in M_2} \bar{{\bf Y}}_n \hat{\btheta}_2 - b(\hat{\btheta}_2) + \epsilon/4 \nonumber \\
        \sup_{\hat{\btheta}_1 \in M_1} \bar{{\bf Y}}_n \hat{\btheta}_1 - b(\hat{\btheta}_1) + \epsilon/2 &< \sup_{\hat{\btheta}_2 \in M_2} \bar{{\bf Y}}_n \hat{\btheta}_2 - b(\hat{\btheta}_2) \nonumber \\
        \epsilon/2 &< \sup_{\hat{\btheta}_2 \in M_2} \bar{{\bf Y}}_n \hat{\btheta}_2 - b(\hat{\btheta}_2) - \sup_{\hat{\btheta}_1 \in M_1} \bar{{\bf Y}}_n \hat{\btheta}_1 - b(\hat{\btheta}_1) \label{eq:in_prob_distance}
    \end{align}
    This inequality states that, with probability $1$ as $n \rightarrow \infty$, the maximization term maximized over the parameter space of $M_1$ will always be at least some $\epsilon/2$ less than the maximization term maximized over the parameter space of $M_2$.
    We will now consider how this fact relates to comparing the BIC scores evaluated at $M_1$ and $M_2$.

    Let us consider $\gamma(\D, M_2) - \gamma(\D, M_1)$ and apply \cref{lm:likelihood_simplification}.
    \begin{align}
        \gamma(\D, M_2) - \gamma(\D, M_1) &= \sum_{i=1}^n \log h(X_i) + n \sup_{\hat{\btheta}_2 \in M_2} \bigg( \bar{{\bf Y}}_n \hat{\btheta}_2 - b(\hat{\btheta}_2) \bigg) - \frac{d_2}{2} \log n \nonumber \\
        &\quad \quad - \bigg( \sum_{i=1}^n \log h(X_i) + n \sup_{\hat{\btheta}_1 \in M_1} \bigg( \bar{{\bf Y}}_n \hat{\btheta}_1 - b(\hat{\btheta}_1) \bigg) - \frac{d_1}{2} \log n \bigg) \nonumber \\
        &= n \bigg( \sup_{\hat{\btheta}_2 \in M_2} \bigg( \bar{{\bf Y}}_n \hat{\btheta}_2 - b(\hat{\btheta}_2) \bigg) - \sup_{\hat{\btheta}_1 \in M_1} \bigg( \bar{{\bf Y}}_n \hat{\btheta}_1 - b(\hat{\btheta}_1) \bigg) \bigg) - \frac{1}{2} (d_2 - d_1) \log n \label{eq:bic_score_diff}
    \end{align}
    We now apply \cref{eq:in_prob_distance} to get that, with probability $1$ as $n \rightarrow \infty$,
    \begin{align*}
        \gamma(\D, M_2) - \gamma(\D, M_1) &> n (\epsilon/2) - \frac{1}{2} (d_2 - d_1) \log n \\
        &> 0
    \end{align*}
    regardless of the difference $(d_2-d_1)$ because the $n(\epsilon/2)$ term will dominate asymptotically over the $\log n$ term.
    This completes the proof for condition (i).

    Now, consider condition (ii), where $\btheta^* \in M_1 \cap M_2$ with $d_2 < d_1$.
    Let
    \begin{align*}
        S_{n,i} &= \sup_{\hat{\btheta}_r \in M_r} \bar{{\bf Y}}_n \hat{\btheta}_r - b(\hat{\btheta}_r)
    \end{align*}
    for both models $M_1$ and $M_2$.
    We want to show that $| S_{n, 1} - S_{n, 2} | = O_p(1/n)$.
    Since $\nabla b(\mathrm{int} \Theta)$ is open, with probability 1 as $n \rightarrow \infty$, there exists a unique MLE $\hat{\btheta}_n$ such that
    \begin{align*}
        \sup_{\btheta \in \mathrm{int}\Theta} \bar{{\bf Y}}_n \btheta - b(\btheta) &= \bar{{\bf Y}}_n \hat{\btheta}_n - b(\hat{\btheta}_n).
    \end{align*}
    Let $U_n = \bar{{\bf Y}}_n \btheta^* - b(\btheta^*)$ and $\hat{U}_n = \bar{{\bf Y}}_n \hat{\btheta}_n - b(\hat{\btheta}_n)$.
    Then, for both models $M_1$ and $M_2$,
    \begin{align}
        0 \leq S_{n, i} - U_n \leq \hat{U}_n - U_n. \label{eq:likelihood_compare2}
    \end{align}
    \cref{eq:likelihood_compare2} follows from two observations.
    First, because $\btheta^* \in M_1$ and $\btheta^* \in M_2$, $S_{n,i}$ will attain its maximum at some point greater than or equal to $U_n$ based on the observed mean $\bar{{\bf Y}}_n$.
    Next, $\hat{U}_n$ is evaluated at the unique MLE, which will never be less than whatever value $S_{n,i}$ attains its maximum.

    Continuing to expand \cref{eq:likelihood_compare2},
    \begin{align}
        \hat{U}_n - U_n &= \bar{{\bf Y}}_n (\hat{\btheta}_n - \btheta^*) + b(\btheta^*) - b(\hat{\btheta}_n) \label{eq:difference_u}
    \end{align}
    Next, we can write a Taylor expansion for $b(\hat{\btheta}_n)$ about $\btheta^*$ as
    \begin{align}
        b(\hat{\btheta}_n) &= b(\btheta^*) + (\hat{\btheta}_n - \btheta^*) \nabla b(\btheta^*) + O_p(1/n) \nonumber \\
        b(\hat{\btheta}_n) - b(\btheta^*) &= (\hat{\btheta}_n - \btheta^*) \nabla b(\btheta^*) + O_p(1/n) \label{eq:b_taylor_expansion}
    \end{align}
    where the final term arises from MLE estimates converging at a rate of $O_p(n^{-1/2})$ under sufficient regularity conditions \citep{ref:van2000asymptotic}.

    Plugging in \cref{eq:b_taylor_expansion} to \cref{eq:difference_u},
    \begin{align*}
        \hat{U}_n - U_n &= \bar{{\bf Y}}_n (\hat{\btheta}_n - \btheta^*) - (\hat{\btheta}_n - \btheta^*) \nabla b(\btheta^*) + O_p(1/n) \\
        &= (\hat{\btheta}_n - \btheta^*) (\bar{{\bf Y}}_n - \nabla b(\btheta^*)) + O_p(1/n) \\
        &= O_p(1/\sqrt{n}) O_p(1/\sqrt{n}) + O_p(1/n) \\
        &= O_p(1/n).
    \end{align*}
    The key insight above is that $\bar{{\bf Y}}_n - \nabla b(\btheta^*) = O_p(1/\sqrt{n})$ due to the central limit theorem \citep{ref:van2000asymptotic} and \cref{lm:gradient_log_partition}.
    
    We can use the above observation to also conclude that $S_{n, i} - U_n = O_p(1/n)$ for both models $M_1$ and $M_2$ using \cref{eq:difference_u}.
    This gives us 
    \begin{align}
        S_{n,2} - S_{n,1} &= S_{n, 2} - U_n - S_{n, 1} + U_n \nonumber \\
        &= S_{n, 2} - U_n - (S_{n, 1} - U_n) \nonumber \\
        &= O_p(1/n) - O_p(1/n) \nonumber \\
        &= O_p(1/n). \label{eq:dimishing_difference_in_mle}
    \end{align}
    Now, with probability $1$ as $n \rightarrow \infty$ and using \cref{eq:bic_score_diff} directly,
    \begin{align*}
        \gamma(\D, M_2) - \gamma(\D, M_1) &= n(S_{n, 2} - S_{n, 1}) - \frac{1}{2} (d_2 - d_1) \log n \\
        &= n O_p(1/n) - \frac{1}{2} \log n (d_2 - d_1) \\
        &= O_p(1) - \frac{1}{2} \log n (d_2 - d_1) \\
        &> 0
    \end{align*}
    because $d_2 < d_1$ by assumption, and the $\log n$ term will dominate the constant term as $n \rightarrow \infty$.
    This completes the proof for condition (ii).
    We have thus proven our desired result under the assumption that ${\bf X}$ has only a single element.

    Now, we can extend the proof above to the general case where ${\bf X}$ has an arbitrary number of elements by replacing each occurrence of $X$ with ${\bf X}$.
    We have thus proven our desired result.
\end{proof}

\subsection{Proof of \cref{cor:bic_consistency_conditional}}

\bicConditionalConsistency*

\begin{proof}
    We are only given that, for each model, the distribution of ${\bf X}$ given ${\bf Z}$ is a member of the exponential family; we do not know anything regarding the distribution of ${\bf Z}$.
    However, suppose that we estimate the marginal PDF or PMF of ${\bf Z}$ using its empirical distribution \citep{ref:van2000asymptotic}:
    \begin{align*}
        p_{\bf Z}({\bf Z} = {\bf z}) = \frac{1}{n} \sum_{i=1}^n \mathbb{I}({\bf Z}_i = {\bf z})
    \end{align*}
    for all observed values of ${\bf Z}$.
    If ${\bf Z}$ has $m$ unique values in the observed data, then we need $m-1$ parameters to learn the distribution of ${\bf Z}$.
    Let $\hat{\boldeta}$ be the vector of these $m-1$ parameters.
    Note that this procedure extends readily to cases where ${\bf Z}$ has support in the real line.
    Furthermore, under this estimation strategy, $p_{\bf Z}({\bf Z} = {\bf z}) > 0$ for all possible observed values of ${\bf Z}$.

    Next, let us rewrite the BIC score for model $M_r$ when we seek to compare models for ${\bf X}$ given ${\bf Z}$.
    \begin{align}
        \gamma(\D, M_r) &= \sum_{i=1}^n \log p_{{\bf X} \mid {\bf Z}}({\bf X}_i \mid {\bf Z}_i, \hat{\btheta}({\bf Z}_i), M_r) - \frac{d_r}{2} \log n \label{eq:conditional_bic_simp1} \\
        &= \sum_{i=1}^n \log \frac{p_{{\bf X}, {\bf Z}}({\bf X}_i, {\bf Z}_i \mid \hat{\btheta}, \hat{\boldeta}, M_r)}{p_{\bf Z}({\bf Z}_i \mid \hat{\boldeta})} - \frac{d_r}{2} \log n \label{eq:conditional_bic_simp2} \\
        &= \sum_{i=1}^n \log p_{{\bf X}, {\bf Z}}({\bf X}_i, {\bf Z}_i \mid \hat{\btheta}, \hat{\boldeta}, M_r) - \sum_{i=1}^n \log p_{\bf Z}({\bf Z}_i \mid \hat{\boldeta}) - \frac{d_r}{2} \log n \label{eq:conditional_bic_simp3} 
    \end{align}
    \cref{eq:conditional_bic_simp1} follows from the definition of the BIC score for conditional distributions and the IID assumption.
    \cref{eq:conditional_bic_simp2} follows from Bayes' rule.
    The joint distribution in the numerator is completely determined by the parameters $\btheta$ and $\boldeta$.
    \cref{eq:conditional_bic_simp3} follows from simplifying the logarithm.
    Importantly, note that the marginal distribution of ${\bf Z}$ does not depend on the model $M_r$.

    We now seek to show that the joint distribution $p_{{\bf X}, {\bf Z}}({\bf x}, {\bf z} \mid \btheta, \boldeta)$ is a member of the exponential family when ${\bf X}$ given ${\bf Z}$ is a member of the exponential family and the marginal distribution of ${\bf Z}$ is estimated using its empirical distribution as described above.
    Let ${\bf z}_1, \dots {\bf z}_m$ be the $m$ possible values that the variables ${\bf Z}$ can take based on the observed distribution, and let $\eta_j = p_{\bf Z}({\bf Z} = {\bf z}_j), \forall j=1, \dots, m$.
    Note that the parameters for the conditional distribution of ${\bf X}$ given ${\bf Z}$ $\btheta({\bf z})$ can take on at most $m$ unique values.
    \begin{align*}
        p_{{\bf X}, {\bf Z}}({\bf x}, {\bf z} \mid \btheta, \boldeta) &= p_{{\bf X} \mid {\bf Z}}({\bf x} \mid {\bf z}, \btheta({\bf z})) p_{\bf Z}({\bf z} \mid \boldeta) \\
        &= h({\bf x}) \exp[{\bf T}({\bf x}) \btheta({\bf z}) - b(\btheta({\bf z}))] \eta_{\bf z} \\
        &= h({\bf x}) \exp[{\bf T}({\bf x}) \btheta({\bf z}) + \log \eta_{\bf z} - b(\btheta({\bf z}))] \\
        &= h({\bf x}) \exp[{\bf T}({\bf x}, {\bf z}) \phi],
    \end{align*}
    where
    \begin{align*}
        {\bf T}({\bf x}, {\bf z}) &= 
        \begin{pmatrix}
            \mathbb{I}({\bf z} = {\bf z}_1) {\bf T}({\bf x}) \\
            \dots \\
            \mathbb{I}({\bf z} = {\bf z}_m) {\bf T}({\bf x}) \\
            \mathbb{I}({\bf z} = {\bf z}_1) \\
            \dots \\
            \mathbb{I}({\bf z} = {\bf z}_m) 
        \end{pmatrix}   
    \end{align*}
    and 
    \begin{align*}
        \phi &= 
        \begin{pmatrix}
            \btheta({\bf z}_1) \\
            \dots \\
            \btheta({\bf z}_m) \\
            \log \eta_1 - b(\btheta({\bf z}_1))
            \\
            \dots \\
            \log \eta_m - b(\btheta({\bf z}_m)) 
        \end{pmatrix}.   
    \end{align*}
    Thus, $p_{{\bf X}, {\bf Z}}({\bf x}, {\bf z} \mid \btheta, \boldeta)$ is a member of the exponential family.
    Using this fact, we can now show that the BIC score is consistent for conditional distributions:
    \begin{align*}
        \gamma(\D, M_2) - \gamma(\D, M_1) &= \sum_{i=1}^n \log p_{{\bf X}, {\bf Z}}({\bf X}_i, {\bf Z}_i \mid \hat{\btheta}, \hat{\boldeta}, M_2) - \sum_{i=1}^n \log p_{\bf Z}({\bf Z}_i \mid \hat{\boldeta}) - \frac{d_2}{2} \log n \\
        &\quad \quad - \sum_{i=1}^n \log p_{{\bf X}, {\bf Z}}({\bf X}_i, {\bf Z}_i \mid \hat{\btheta}, \hat{\boldeta}, M_1) + \sum_{i=1}^n \log p_{\bf Z}({\bf Z}_i \mid \hat{\boldeta}) + \frac{d_1}{2} \log n \\
        &= \sum_{i=1}^n \log p_{{\bf X}, {\bf Z}}({\bf X}_i, {\bf Z}_i \mid \hat{\btheta}, \hat{\boldeta}, M_2) - \frac{d_2}{2} \log n \\
        &\quad \quad - \sum_{i=1}^n \log p_{{\bf X}, {\bf Z}}({\bf X}_i, {\bf Z}_i \mid \hat{\btheta}, \hat{\boldeta}, M_1) + \frac{d_1}{2} \log n 
    \end{align*}
    The term $\sum_{i=1}^n \log p_{\bf Z}({\bf Z}_i \mid \hat{\boldeta})$ cancels out exactly because, under any model with a fixed dataset, it will evaluate to the same value.
    Note that $d_2$ and $d_1$ are the model dimensions for the MLEs of ${\bf X}$ given ${\bf Z}$ under $M_2$ and $M_1$; the dimension of $\boldeta$ is the same under both models, so we can ignore it.
    According to the derivation above, comparing models for conditional distributions with the BIC is equivalent to comparing models for joint distributions.
    Thus, according to \cref{thm:bic_consistency}, the desired result is true.
\end{proof}

\begin{restatable}[Extensions to the Consistency Proof]{remark}{extensionsConsistency}\label{rm:extensions_consistency}
    Our proof for the consistency of the BIC extends the proof of \cite{ref:haughton1988choice} in the sense that we prove that the BIC is consistent for a more general class of variables from the exponential family and also for joint and conditional models.
    In \cite{ref:haughton1988choice}, the author considers variables $X$ with the density
    \begin{align*}
        p_X(x, \btheta) &= \exp[{\bf x} \btheta - b(\btheta)],
    \end{align*}
    where ${\bf x}$ is a $d$-dimensional vector with $d$ copies of the observed value $X=x$.
    Compared with our definition given in \cref{eq:exp_family}, the PDF above enforces (i) the base measure $h(x)=1$ and (ii) the sufficient statistic ${\bf T}(x) = {\bf x}$.
    Furthermore, the PDF above is for only a single variable $X$ and does not allow for $X$ to depend on other variables in the system.

    Sometimes, additive terms of the log-likelihood that do not depend on the parameters, namely the base measure $h(x)$, are omitted from the definition of the density of exponential families because they do not affect key operations on the log-likelihood such as maximizing it or differentiating it with respect to $\btheta$.
    It is unclear whether the authors of \cite{ref:haughton1988choice} omitted the base measure from their definition of the exponential family for this reason.
    In any case, we believe that it is beneficial to explicitly verify that the BIC is consistent without the constraint that $h(x) = 1$.

    Although the key steps of the proof for consistency remain largely the same between our proof and that of \cite{ref:haughton1988choice}, our slight extension covers key distributions not previously included in their original proof.
    For instance, the Poisson distribution -- $p_X(x, \theta) = \frac{1}{x!} \exp[x \log \theta - \theta]$ -- requires the inclusion of the base measure $h(x) = \frac{1}{x!}$.
    In addition, the chi-squared distribution, 
    \begin{align*}
        p_X(x, \theta) &= e^{-x/2} \exp[\log (x) \theta - (\log \Gamma(\theta+1) + (\theta+1) \log 2)],
    \end{align*}
    where $\Gamma(x)$ is the Gamma function, requires the base measure $h(x) = e^{-x/2}$ and the sufficient statistic $T(x) = \log x$.
\end{restatable}

\subsection{Proof of \cref{prop:treatment_outcome_sufficient}}

Recall \cref{def:confounding}, which is reproduced below.

\marginalUnconfoundedness*

Recall that ${\bf V} = {\bf C} \cup {\bf U} \cup {\bf A} \cup {\bf M} \cup \{Y\}$, where ${\bf C}$ represents observed confounders, ${\bf U}$ represents unobserved, or latent, confounders, ${\bf A} = \{A_1, A_2, \dots, A_k\}$ represents $k$ potential causes of the outcome that we must select from, ${\bf M}$ represents variables mediating treatment and outcome, and $Y$ represents the outcome of interest.
Furthermore, recall that we assume that the variables in $\G$ follow the partial topological ordering ${\bf C} \cup {\bf U} \prec {\bf A} \prec {\bf M} \prec Y$.

\treatmentOutcomeSufficient*

\begin{proof}
    By definition of marginal unconfoundedness, $Y({\bf a}) \ci {\bf A}$ \citep{ref:richardson2013single}.
    Furthermore, recall that consistency assumes that $Y({\bf a}) = Y$ when ${\bf A} = {\bf a}$.
    Thus,
    \begin{align*}
        \E[Y({\bf a})] &= \E[Y({\bf a}) \mid {\bf A}={\bf a}] \\
        &= \E[Y \mid {\bf A} = {\bf a}]
    \end{align*}
    The first equality follows from marginal unconfoundedness, and the second equality follows from consistency.    
\end{proof}

\subsection{Proof Intuition for \cref{thm:reweighted_bic_consistency}}
\label[secinapp]{app:reweighted_bic_consistency_intuition}

Here, we give a short proof intuition for \cref{thm:reweighted_bic_consistency}.

When ${\bf Z} = \emptyset$, the intuition for the proof of \cref{thm:reweighted_bic_consistency} is similar to that for \cref{thm:bic_consistency}.
First, like \cref{lm:likelihood_simplification}, the rBIC can be equivalently expressed as a function of the reweighted mean and the model dimension.
Because we have access to the ground-truth values of the weights, the reweighted mean and MLE possess the same convergence properties as the typical mean and MLE.
All of the steps of the proof for \cref{thm:reweighted_bic_consistency} then follow the same steps as the proof for \cref{thm:bic_consistency}.
For the case where ${\bf Z} \neq \emptyset$, applying the same steps as the proof of \cref{cor:bic_consistency_conditional} achieves the desired result.

\subsection{Proof of \cref{thm:reweighted_bic_consistency}}

We first restate the definition of the reweighted BIC.

\reweightedBicDefinition*

The proof for \cref{thm:reweighted_bic_consistency} follows the proof for \cref{thm:bic_consistency} almost identically.
However, the key difference is recognizing that the reweighted BIC can be rewritten as a function of the reweighted mean of the sufficient statistic ${\bf T}({\bf X})$.
For the remainder of this subsection, we will suppose that the weights have been standardized such that $\sum_{i=1}^n W_i = n$.

Like before, we will first prove several lemmas that will greatly simplify the proof of \cref{thm:reweighted_bic_consistency}.

\begin{restatable}[Simplification of the Reweighted Log-Likelihood]{lemma}{reweightedLikelihoodSimplification}\label{lm:reweighted_likelihood_simplification}
    Let ${\bf X}$ be a set of random variables and suppose that ${\bf Z} = \emptyset$.
    Let $M_r \subset \Theta$ be a model, $d_r$ be the model dimension of $M_r$, and $\bar{{\bf Y}}_{W} = \frac{1}{n} \sum_{i=1}^n W_i {\bf T}({\bf X}_{i})$.
    If the model $M_r$ for ${\bf X}$ is a member of the exponential family,
    then the reweighted BIC $\gamma_W(\D, \W, M_r)$ can be equivalently written as
    \begin{align}
        &\gamma_W(\D, \W, M_r) = \sum_{i=1}^n W_i \log h({\bf X}_{i}) + n \sup_{\hat{\btheta}_r \in M_r} \bigg( \bar{{\bf Y}}_{W} \hat{\btheta}_{r} - b(\hat{\btheta}_{r}) \bigg) - \frac{d_r}{2} n^{\alpha_d}
        \label{eq:simplified_reweighted_bic}
    \end{align}
\end{restatable}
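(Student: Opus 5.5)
The plan mirrors the proof of \cref{lm:likelihood_simplification}, with the weights carried through. I will start from \cref{def:reweighted_bic} with ${\bf Z}=\emptyset$, so that $\gamma_W(\D,\W,M_r) = \sum_{i=1}^n W_i \log p({\bf X}_i \mid \hat{\btheta}_W, M_r) - \frac{d_r}{2} n^{\alpha_d}$. The penalty term is untouched throughout, so all the work is in rewriting the weighted log-likelihood term $P$.

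First, for an arbitrary $\btheta$, I substitute the exponential family form \cref{eq:exp_family} into each summand. This gives $W_i \log p({\bf X}_i,\btheta) = W_i \log h({\bf X}_i) + W_i {\bf T}({\bf X}_i)\btheta - W_i b(\btheta)$. Summing over $i$ and using linearity, the three pieces become:
\begin{itemize}
    \item the base-measure term $\sum_i W_i \log h({\bf X}_i)$;
    \item the sufficient-statistic term $\big(\sum_i W_i {\bf T}({\bf X}_i)\big)\btheta = n\bar{{\bf Y}}_W \btheta$, by the definition of $\bar{{\bf Y}}_W$;
    \item the log-partition term $-\big(\sum_i W_i\big) b(\btheta)$.
\end{itemize}

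The one step that differs from \cref{lm:likelihood_simplification} is the log-partition term. Without weights it collects into $n\,b(\btheta)$ automatically. Here it equals $n\,b(\btheta)$ only because of the normalization $\sum_{i=1}^n W_i = n$ imposed in \cref{def:reweighted_bic}. With that in hand, the weighted log-likelihood is $\sum_i W_i\log h({\bf X}_i) + n(\bar{{\bf Y}}_W\btheta - b(\btheta))$.

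Next I pass to the reweighted MLE. The base-measure term does not depend on $\btheta$, so maximizing the weighted log-likelihood over $M_r$ is the same as maximizing $\bar{{\bf Y}}_W\btheta - b(\btheta)$ over $M_r$. The weights are positive and $n>0$, so this rescaling does not change the maximizer. Hence $P = \sum_i W_i \log h({\bf X}_i) + n\sup_{\hat{\btheta}_r\in M_r}(\bar{{\bf Y}}_W\hat{\btheta}_r - b(\hat{\btheta}_r))$, and adding back the unchanged penalty $-\frac{d_r}{2}n^{\alpha_d}$ yields \cref{eq:simplified_reweighted_bic}.

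There is no real obstacle here; the only points needing care are invoking the normalization for the $b(\btheta)$ term and noting that the supremum is attained at the reweighted MLE. I would also remark that the base-measure term is model-independent, so it cancels in score differences, exactly as $H$ does in the proof of \cref{thm:bic_consistency}.
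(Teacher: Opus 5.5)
Your proposal is correct and matches the paper's proof essentially step for step: substitute the exponential-family density into each weighted summand, distribute the weights, use $\sum_{i=1}^n W_i = n$ to collapse the log-partition term to $n\,b(\btheta)$, and note that only $\bar{{\bf Y}}_W\btheta - b(\btheta)$ depends on the parameter, so the weighted log-likelihood at the reweighted MLE is the base-measure term plus $n$ times the supremum over $M_r$. The appeal to positivity of the weights when passing to the maximizer is unnecessary, since dividing by $n>0$ is all that step requires.
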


\begin{proof}
    First, let us consider the reweighted MLE for the random variables ${\bf X}$ whose joint density is a member of the exponential family.
    From \cref{app:reweighted_mle}, the reweighted MLE seeks to find values of the parameters $\hat{\btheta}_W$ such that the criterion function
    \begin{align*}
        \sum_{i=1}^n W_i \log p({\bf X}_i \mid \hat{\btheta}_W)
    \end{align*}
    is maximized.
    Because the model for ${\bf X}$ is a member of the exponential family, we can simplify this quantity we seek to maximize as
    \begin{align}
        &\sum_{i=1}^n W_i \log p({\bf X}_i \mid \hat{\btheta}_W) \nonumber \\
        &= \sum_{i=1}^n W_i \log \bigg[ h({\bf X}_i) \exp \bigg[ {\bf T}({\bf X}_i) \hat{\btheta}_W - b(\hat{\btheta}_W) \bigg] \bigg] \label{eq:reweighted_likelihood_simplification2} \\
        &= \sum_{i=1}^n W_i \bigg( \log h({\bf X}_i) + \bigg[ {\bf T}({\bf X}_i) \hat{\btheta}_W - b(\hat{\btheta}_W) \bigg]  \bigg) \label{eq:reweighted_likelihood_simplification3} \\
        &= \sum_{i=1}^n W_i \log h({\bf X}_i) + \sum_{i=1}^n W_i {\bf T}({\bf X}_i) \hat{\btheta}_W - \sum_{i=1}^n W_i b(\hat{\btheta}_W) \label{eq:reweighted_likelihood_simplification4} \\
        &= \sum_{i=1}^n W_i \log h({\bf X}_i) + n \bar{{\bf Y}}_W \hat{\btheta}_W - n b(\hat{\btheta}_W) \label{eq:reweighted_likelihood_simplification5} \\
        &= \sum_{i=1}^n W_i \log h({\bf X}_i) + n (\bar{{\bf Y}}_W \hat{\btheta}_W - b(\hat{\btheta}_W)) \label{eq:reweighted_likelihood_simplification6} 
    \end{align}
    \cref{eq:reweighted_likelihood_simplification2} follows from plugging in the definition for the PDF or PMF of variables that are members of the exponential family.
    \cref{eq:reweighted_likelihood_simplification3} follows from applying the product rule of logarithms and the fact that the logarithm of the exponent is the identity function.
    \cref{eq:reweighted_likelihood_simplification4} follows from distributing the weights $W_i$ to each term.
    \cref{eq:reweighted_likelihood_simplification5} follows from the definition of the reweighted mean $n \bar{{\bf Y}}_W = \sum_{i=1}^n W_i {\bf T}({\bf X}_i)$ and the assumption that the sum of the weights have been standardized to sum to $n$.
    \cref{eq:reweighted_likelihood_simplification6} follows from factoring $n$ out of the last two terms.
    In conclusion, the reweighted MLE equivalently maximizes \cref{eq:reweighted_likelihood_simplification6}.

    If $M_r$ denotes the model space and $\hat{\btheta}_r$ denotes the reweighted MLE, the above derivation means that the reweighted log-likelihood evaluated at the reweighted MLE can be written as
    \begin{align}
        \sum_{i=1}^n W_i \log h({\bf X}_i) + n \sup_{\hat{\btheta}_r \in M_r} (\bar{{\bf Y}}_W \hat{\btheta}_r - b(\hat{\btheta}_r)) \label{eq:reweighted_likelihood_simplification7}
    \end{align}
    because only the second and third terms depend on the reweighted MLE.
    
    Now, applying \cref{def:reweighted_bic} and \cref{eq:reweighted_likelihood_simplification7} gives us
    \begin{align*}
        \gamma_W(\D, \W, M_r) &= \sum_{i=1}^n W_i \log p({\bf X}_i \mid \hat{\btheta}_r, M_r) - \frac{d_r}{2} n^{\alpha_d} \\
        &= \sum_{i=1}^n W_i \log h({\bf X}_i) + n \sup_{\hat{\btheta} \in M_r} (\bar{{\bf Y}}_W \hat{\btheta} - b(\hat{\btheta})) - \frac{d_r}{2} n^{\alpha_d}
    \end{align*}
    This completes the proof.
\end{proof}

Next, the reweighted versions of \cref{lm:gradient_log_partition,lm:max_log_likelihood} are straightforward to show.
Recall from \cref{app:reweighted_mle} that the notation $\E_q[{\bf X}]$ represents the expectation of ${\bf X}$ with respect to the target reweighted distribution.

\begin{restatable}[Gradient of the Log-Partition Function at the Reweighted Parameter]{lemma}{gradientLogPartitionReweighted}\label{lm:gradient_log_partition_reweighted}
    The gradient of the log-partition function evaluated at the true value of the reweighted parameters $\btheta^*_W$ is equal to the expected value of ${\bf T}({\bf X})$ with respect to the reweighted distribution.
    That is,
    \begin{align*}
        \nabla b(\btheta^*_W) &= \E_q[{\bf T}({\bf X})],
    \end{align*}
    where $\nabla b(\btheta^*_W)$ is the gradient of the log-partition function -- the vector of partial derivatives of the log-partition function evaluated at the parameter vector $\btheta^*_W$.
\end{restatable}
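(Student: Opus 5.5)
The plan is to mirror the proof of \cref{lm:gradient_log_partition} and carry out the differentiation under the target reweighted distribution $q({\bf V}, {\bf A}')$ rather than under $p({\bf V})$. The defining property of $\btheta^*_W$ is that the reweighted model for ${\bf X}$ is a member of the exponential family and is correctly specified at $\btheta^*_W$. In other words, the marginal density of ${\bf X}$ under $q$ is
\begin{align*}
    q({\bf x}) &= h({\bf x}) \exp[{\bf T}({\bf x}) \btheta^*_W - b(\btheta^*_W)].
\end{align*}
I will start from the normalization identity $\int h({\bf x}) \exp[{\bf T}({\bf x}) \btheta - b(\btheta)] d{\bf x} = 1$. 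It holds for every $\btheta$ in $\mathrm{int}\Theta$, because $b$ is by definition the log-partition function. I will take it as an identity in $\btheta$, not only at the single point $\btheta^*_W$, since that is what makes the differentiation legitimate.

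The key steps run in order. First, differentiate both sides of the identity with respect to $\btheta$. Second, use Condition 1 of \cref{app:regularity_conditions}, which says the support does not depend on $\btheta$, to justify exchanging the derivative and the integral. Third, compute the inner derivative by the chain rule. This gives $\int ({\bf T}({\bf x}) - \nabla b(\btheta)) h({\bf x}) \exp[{\bf T}({\bf x}) \btheta - b(\btheta)] d{\bf x} = 0$. Fourth, evaluate this at $\btheta = \btheta^*_W$. The integrating density is then exactly $q({\bf x})$, so the law of the unconscious statistician turns the integral into $\E_q[{\bf T}({\bf X}) - \nabla b(\btheta^*_W)] = 0$. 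Since $\nabla b(\btheta^*_W)$ is a constant vector, this yields $\nabla b(\btheta^*_W) = \E_q[{\bf T}({\bf X})]$.

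I also intend to state the equivalent observed-data form, because it is what the proof of \cref{thm:reweighted_bic_consistency} will actually use. Using $W p({\bf V}) = q({\bf V}, {\bf A}')$ exactly as in \cref{app:reweighted_mle}, I rewrite $\E_q[{\bf T}({\bf X})] = \E[W\, {\bf T}({\bf X})]$. This is the quantity to which the reweighted mean $\bar{{\bf Y}}_W$ of \cref{lm:reweighted_likelihood_simplification} converges by the weak law of large numbers, given finite second moments of $W\,{\bf T}({\bf X})$. It rests on the positivity bound $0<W<\infty$ together with the normalization $\frac{1}{n}\sum_i W_i = 1$.

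The main obstacle is conceptual rather than computational. I must be sure that $q$'s marginal of ${\bf X}$ genuinely has the exponential-family form with parameter $\btheta^*_W$, so that the expectation in the fourth step is with respect to $q$ and not $p$. I will take this directly from the hypothesis of \cref{thm:reweighted_bic_consistency}, which assumes the models are exponential family members with respect to the reweighted distribution, together with the definition of $\btheta^*_W$ as the true reweighted parameter. The interchange of derivative and integral then follows from the same regularity conditions, applied to $q$, that were used in \cref{lm:gradient_log_partition}.
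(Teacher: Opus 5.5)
Your proposal is correct and follows essentially the same route as the paper, which reruns the proof of \cref{lm:gradient_log_partition} with $\btheta^*$ replaced by $\btheta^*_W$ and notes that the resulting integral is an expectation under $q$, because the density being integrated is the reweighted model at its true parameter. Treating the normalization as an identity in $\btheta$ before evaluating at $\btheta^*_W$ justifies the differentiation a little more cleanly than the paper does; the rewriting $\E_q[{\bf T}({\bf X})] = \E[W\,{\bf T}({\bf X})]$ is not needed for this lemma, since it is what \cref{lm:reweighted_mean} establishes.
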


\begin{proof}
    The proof follows from the proof of \cref{lm:gradient_log_partition} by replacing $\btheta^*$ with $\btheta^*_W$ and noting that 
    \begin{align*}
        \int ({\bf T}({\bf x}) - \nabla b(\btheta^*_W)) h({\bf x}) \exp[{\bf T}({\bf x}) \btheta^*_W - b(\btheta^*_W)] d{\bf x} &= \E_q[({\bf T}({\bf x}) - \nabla b(\btheta^*_W))]
    \end{align*}
    because the PDF or PMF in the integral is with respect to the parameters in the reweighted distribution.
\end{proof}

\begin{restatable}[Maximum Log-Likelihood of a Single Observation at the Reweighted Parameter]{lemma}{maxLogLikelihoodReweighted}\label{lm:max_log_likelihood_reweighted}
    Let $f(\btheta) = \nabla b(\btheta^*_W) \btheta - b(\btheta)$.
    $f(\btheta)$ attains its unique maximum at $\btheta^*_W$.
\end{restatable}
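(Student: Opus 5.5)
We argue exactly as in \cref{lm:max_log_likelihood}, replacing $\btheta^*$ by $\btheta^*_W$ throughout; the work lies in justifying the uniqueness claim, which the earlier lemma only sketched. The plan is to show that $f$ is strictly concave on $\mathrm{int}\Theta$ and that its gradient vanishes at $\btheta^*_W$. A strictly concave differentiable function with a stationary point in the interior of a convex domain attains there its unique global maximum.

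First, the stationary point. Differentiating gives
\begin{align*}
    \nabla f(\btheta) = \nabla b(\btheta^*_W) - \nabla b(\btheta),
\end{align*}
which is zero at $\btheta = \btheta^*_W$. By \cref{lm:gradient_log_partition_reweighted}, $\nabla b(\btheta^*_W) = \E_q[{\bf T}({\bf X})]$. So we may read $f(\btheta) = \E_q[{\bf T}({\bf X})]\btheta - b(\btheta)$ as the parameter-dependent part of the expected reweighted log-likelihood. This connects the lemma to \cref{lm:reweighted_likelihood_simplification}, where the same quantity appears with $\bar{{\bf Y}}_W$ in place of $\E_q[{\bf T}({\bf X})]$.

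Second, strict concavity, which is the step that carries the uniqueness claim. Differentiating under the integral a second time, as in the proof of \cref{lm:gradient_log_partition} and licensed by Condition 1 of \cref{app:regularity_conditions}, gives the standard exponential-family identity
\begin{align*}
    \nabla^2 b(\btheta) = \mathrm{Cov}_{\btheta}[{\bf T}({\bf X})].
\end{align*}
Hence $\nabla^2 f(\btheta) = -\mathrm{Cov}_{\btheta}[{\bf T}({\bf X})]$. In an exponential family the Fisher information equals this covariance, so $\nabla^2 f(\btheta)$ is exactly the negative of the Fisher information matrix.

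The main obstacle is the positive definiteness needed for \emph{strict} concavity. Condition 4 only guarantees positive definiteness of the Fisher information at the true parameter, not at every $\btheta$. To get it everywhere, I would assume the representation is minimal: no nonzero vector ${\bf c}$ makes ${\bf c}^T{\bf T}({\bf X})$ almost surely constant. Under minimality, ${\bf c}^T \mathrm{Cov}_{\btheta}[{\bf T}]\,{\bf c} = \mathrm{Var}_{\btheta}({\bf c}^T{\bf T}) > 0$ for every $\btheta$, since the support does not depend on $\btheta$ (Condition 1). Minimality also follows from Condition 4 itself, because a degenerate direction at one $\btheta$ would be degenerate at all $\btheta$, including the true parameter.

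With this in hand, $b$ is strictly convex on the convex set $\mathrm{int}\Theta$, and $\nabla b$ is injective there. Therefore $\btheta^*_W$ is the only stationary point and the unique maximizer of $f$. As in \cref{lm:max_log_likelihood}, one may alternatively cite Theorems 9.1 and 9.13 of \cite{ref:barndorff2014information}.
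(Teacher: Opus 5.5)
Your proposal is correct and follows the paper's route. The paper's proof transfers the argument of \cref{lm:max_log_likelihood} with $\btheta^*$ replaced by $\btheta^*_W$: it checks that $\nabla f(\btheta) = \nabla b(\btheta^*_W) - \nabla b(\btheta)$ vanishes at $\btheta^*_W$, and defers the claim that this stationary point is the unique maximizer to Theorems 9.1 and 9.13 of \cite{ref:barndorff2014information}. Your strict-concavity step supplies that deferred justification soundly: $\nabla^2 b(\btheta) = \mathrm{Cov}_{\btheta}[{\bf T}({\bf X})]$ is positive definite everywhere, because a degenerate direction at any one parameter would be degenerate at all of them and so would contradict Condition 4.
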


\begin{proof}
    The proof follows from the proof of \cref{lm:max_log_likelihood} by replacing $\btheta^*$ with $\btheta^*_W$.
\end{proof}

Finally, let us consider some important properties of the reweighted mean $\bar{{\bf Y}}_W$.

\begin{restatable}[Convergence of the Reweighted Mean]{lemma}{reweightedMean}\label{lm:reweighted_mean}
    Let $\bar{{\bf Y}}_W = \frac{1}{n} \sum_{i=1}^n W_i {\bf T}({\bf X}_i)$.
    Let $\sigma^2 = Var[W {\bf T}({\bf X})] < 0$.
    Then,
    \begin{enumerate}
        \item $\bar{{\bf Y}}_W \xrightarrow{p} \E_q[{\bf T}({\bf X})]$
        \item $\sqrt{n}(\bar{{\bf Y}}_W - \E_q[{\bf T}({\bf X})]) \xrightarrow{d} \mathcal{N}(0, \sigma^2)$
    \end{enumerate}
    
    In words, 1. $\bar{{\bf Y}}_W$ converges in probability to $\E_q[{\bf T}({\bf X})]$.
    Furthermore, 2. $\sqrt{n}(\bar{{\bf Y}}_W - \E_q[{\bf T}({\bf X})])$ converges in distribution to a normal distribution with mean $0$ and variance $\sigma^2$.
\end{restatable}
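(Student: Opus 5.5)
The plan is to reduce both claims to the classical weak law of large numbers and central limit theorem for IID sums, using the change-of-measure identity from \cref{app:reweighted_mle}. Two points need fixing first. The condition $\sigma^2 < 0$ is evidently a typo for $\sigma^2 < \infty$, and I will assume the finite-variance version. I will also separate the raw oracle weights $w_i = W({\bf O}_i, {\bf A}'_i)$ from the normalized weights $W_i = n w_i / \sum_{j} w_j$ used in \cref{def:reweighted_bic}.

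\emph{Step 1 (unnormalized weights).} The pairs $({\bf O}_i, {\bf A}'_i)$ are IID, so the terms $w_i {\bf T}({\bf X}_i)$ are IID. Since $W({\bf O}, {\bf A}')\, p({\bf V}) = q({\bf V}, {\bf A}')$, the same computation as in \cref{app:reweighted_mle} (with $\log p$ replaced by ${\bf T}$) gives $\E[w {\bf T}({\bf X})] = \E_q[{\bf T}({\bf X})]$. Taking ${\bf T} \equiv 1$ gives $\E[w] = 1$, because $q$ integrates to one. Finiteness of $\sigma^2 = Var[w {\bf T}({\bf X})]$ implies a finite first moment. The weak law of large numbers then gives $\frac{1}{n}\sum_i w_i {\bf T}({\bf X}_i) \xrightarrow{p} \E_q[{\bf T}({\bf X})]$. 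The multivariate Lindeberg--L\'evy CLT gives $\sqrt{n}\big(\frac{1}{n}\sum_i w_i {\bf T}({\bf X}_i) - \E_q[{\bf T}({\bf X})]\big) \xrightarrow{d} \mathcal{N}(0, \sigma^2)$.

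\emph{Step 2 (normalization).} By construction, $\bar{{\bf Y}}_W = \big(\frac{1}{n}\sum_i w_i {\bf T}({\bf X}_i)\big) / \big(\frac{1}{n}\sum_j w_j\big)$. The denominator converges in probability to $\E[w] = 1$ by the weak law of large numbers, which requires $Var[w] < \infty$. Claim 1 then follows from Slutsky's theorem or the continuous mapping theorem.

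For claim 2, I will write
\begin{align*}
\sqrt{n}\big(\bar{{\bf Y}}_W - \E_q[{\bf T}]\big) = \frac{1}{\bar w_n}\, \frac{1}{\sqrt{n}} \sum_{i=1}^n w_i \big({\bf T}({\bf X}_i) - \E_q[{\bf T}]\big), \qquad \bar w_n = \frac{1}{n}\sum_{j=1}^n w_j .
\end{align*}
The summands $w_i({\bf T}({\bf X}_i) - \E_q[{\bf T}])$ are IID with mean zero. The CLT applies to them, and Slutsky's theorem then removes $\bar w_n \to 1$. This yields asymptotic normality with variance $Var[w({\bf T}({\bf X}) - \E_q[{\bf T}])]$.

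\emph{Main obstacle.} The variance in Step 2 coincides with the stated $\sigma^2$ only when the weights are used unnormalized, since normalization acts like a control-variate correction. I would handle this in one of two ways. One option is to state claim 2 with $\sigma^2$ interpreted as the variance of the centred product. The other is to remark that for the unnormalized mean the stated $\sigma^2$ is exact. Either way, the downstream use in the proof of \cref{thm:reweighted_bic_consistency} only needs the rate $\bar{{\bf Y}}_W - \E_q[{\bf T}({\bf X})] = O_p(n^{-1/2})$, and that rate holds in both versions.

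The remaining technical requirement is the moment condition: the second moments of $w$ and $w{\bf T}({\bf X})$ must be finite. This is not implied by $0 < W < \infty$ pointwise. I would therefore add it explicitly as an assumption, for instance by assuming the weights are bounded, which gives $\E[w^2 \|{\bf T}\|^2] < \infty$ whenever $\E_q[\|{\bf T}\|^2] < \infty$.
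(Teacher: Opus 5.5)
Your argument is correct, and its core matches the paper's. The paper's proof consists only of the change-of-measure identity $\E[W{\bf T}({\bf X})] = \E_q[{\bf T}({\bf X})]$ followed by a direct appeal to the law of large numbers and the central limit theorem, which is your Step~1.

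Where you go further is Step~2. The paper's proof never addresses the normalization $\sum_i W_i = n$, even though it imposes that normalization throughout this subsection and needs it in \cref{lm:reweighted_likelihood_simplification}. It writes $\E[\frac{1}{n}\sum_i W_i {\bf T}({\bf X}_i)] = \E[W{\bf T}({\bf X})]$ and applies IID limit theorems to the summands $W_i{\bf T}({\bf X}_i)$. That is valid only for the raw weights $w_i$, because the normalized $W_i$ are dependent through $\sum_j w_j$. Your ratio representation combined with Slutsky's theorem closes this gap.

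Your variance remark is also right. For the normalized mean the limiting covariance is $Var[w({\bf T}({\bf X}) - \E_q[{\bf T}({\bf X})])]$, not $Var[W{\bf T}({\bf X})]$. For example, replacing ${\bf T}$ by the constant $1$ makes the normalized mean identically $1$, whereas $Var[w]$ need not vanish. So claim~2 as literally stated holds only for the unnormalized mean. The $O_p(n^{-1/2})$ rate actually used in \cref{thm:reweighted_bic_consistency} holds in both versions.

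One small correction: the weak law for $\bar w_n$ does not need $Var[w] < \infty$. Since $w > 0$ and $\E[w] = 1$, the mean is already finite, which is all the weak law requires. The second moment of $w$ is needed instead for your CLT on the centred summands $w_i({\bf T}({\bf X}_i) - \E_q[{\bf T}({\bf X})])$, and that is where your added moment assumption is genuinely used.
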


\begin{proof}
    First, let us consider the expected value of $\bar{{\bf Y}}_W$.
    \begin{align*}
        \E[\bar{{\bf Y}}_W] &= \E[\frac{1}{n} \sum_{i=1}^n W_i {\bf T}({\bf X}_i)] \\
        &= \E[W {\bf T}({\bf X})] \\
        &= \int_{{\bf V}, {\bf A}'} W {\bf T}({\bf X}) p({\bf V}) d{\bf v}, {\bf a}' \\
        &= \int_{{\bf V}, {\bf A}'} {\bf T}({\bf X}) W p({\bf V}) d{\bf v}, {\bf a}' \\
        &= \int_{{\bf V}, {\bf A}'} {\bf T}({\bf X}) q({\bf V}, {\bf A}') d{\bf v}, {\bf a}' \\
        &= \E_q[{\bf T}({\bf X})]
    \end{align*}
    This shows that the expected value of $\bar{{\bf Y}}_W$ is equal to the true mean of ${\bf T}({\bf X})$ with respect to the reweighted distribution $q({\bf V}, {\bf A}')$.
    By the law of large numbers, the first part of the lemma is true.
    By the central limit theorem, the second part of the lemma is true \cite{ref:van2000asymptotic}.
\end{proof}

An important consequence of \cref{lm:reweighted_mean} is that $\bar{{\bf Y}}_W$ converges to $\E_{q}[{\bf T}({\bf X})]$ at the standard $O_p(1/\sqrt{n})$ rate.

Having established these lemmas, we can now prove the consistency of the reweighted BIC with oracle weights by directly invoking the proof of the consistency of the classical BIC.

\reweightedBicConsistency*

\begin{proof}
    We first start with the case where ${\bf Z} = \emptyset$.
    Then we will show how consistency of the rBIC for joint distributions implies consistency of the rBIC for conditional distributions.
    
    Using \cref{lm:reweighted_likelihood_simplification} to simplify the difference $\gamma_W(\D, \W, M_2) - \gamma_W(\D, \W, M_1)$,
    \begin{align*}
        \gamma_W(\D, \W, M_2) - \gamma_W(\D, \W, M_1) &= \sum_{i=1}^n W_i \log h({\bf X}_i) + n \sup_{\hat{\btheta}_2 \in M_2} (\bar{{\bf Y}}_W \hat{\btheta}_2 - b(\hat{\btheta}_2)) - \frac{d_2}{2} n^{\alpha_d} \\
        &\quad \quad - \bigg( \sum_{i=1}^n W_i \log h({\bf X}_i) + n \sup_{\hat{\btheta}_1 \in M_1} (\bar{{\bf Y}}_W \hat{\btheta}_1 - b(\hat{\btheta}_1)) - \frac{d_1}{2} n^{\alpha_d} \bigg) \\
        &= n \bigg( \sup_{\hat{\btheta}_2 \in M_2} \bigg(\bar{{\bf Y}}_W \hat{\btheta}_2 - b(\hat{\btheta}_2)\bigg) - \sup_{\hat{\btheta}_1 \in M_1} \bigg(\bar{{\bf Y}}_W \hat{\btheta}_1 - b(\hat{\btheta}_1)\bigg) \bigg) - \frac{1}{2}(d_2 - d_1) n^{\alpha_d}
    \end{align*}
    The desired result follows from first taking the proof of \cref{thm:bic_consistency} and replacing $\btheta^*$ with $\btheta^*_W$, \cref{lm:max_log_likelihood} with \cref{lm:max_log_likelihood_reweighted}, $\bar{{\bf Y}}_n$ with $\bar{{\bf Y}}_W$, $\E[{\bf T}({\bf X})]$ with $\E_q[{\bf T}({\bf X})]$, \cref{lm:gradient_log_partition} with \cref{lm:gradient_log_partition_reweighted}, \cref{lm:likelihood_simplification} with \cref{lm:reweighted_likelihood_simplification}, and $\gamma(\D, M_2) - \gamma(\D, M_1)$ with $\gamma_W(\D, \W, M_2) - \gamma_W(\D, \W, M_1)$.
    Then, under condition (i), with probability $1$ as $n \rightarrow \infty$,
    \begin{align*}
        \gamma_W(\D, \W, M_2) - \gamma_W(\D, \W, M_1) &> n(\epsilon/2) - \frac{1}{2} (d_2 - d_1) n^{\alpha_d} \\
        &> 0
    \end{align*}
    because $n(\epsilon/2)$ asymptotically dominates the difference between the model dimension, which is scaled by $n^{\alpha_d}$ where $\alpha_d < 1$.

    Next, \cref{app:reweighted_mle} shows that the reweighted MLE converges to $\btheta^*_W$ at a rate of $O_p(n^{-1/2})$, and \cref{lm:reweighted_mean} shows that the reweighted mean converges to $\E_q[{\bf T}(X)]$ at a rate of $O_p(n^{-1/2})$.
    Thus, under condition (ii),
    \begin{align*}
        \gamma_W(\D, \W, M_2) - \gamma_W(\D, \W, M_1) &= O_p(1) - \frac{1}{2} (d_2 - d_1) n^{\alpha_d} \\
        &> 0
    \end{align*}
    because $d_2 < d_1$, and the $n^{\alpha_d}$ term asymptotically dominates the constant term since $\alpha_d > 0$.
    This completes the proof for the case where ${\bf Z}=\emptyset$.

    Next, using the same proof technique as the proof for \cref{cor:bic_consistency_conditional}, we will show that the rBIC for comparing conditional distributions is equivalent to comparing joint distributions, which we already proved above is consistent.
    Suppose that the empirical distribution of ${\bf Z}$, the parameters $\boldeta$, and the estimation procedure for $\boldeta$ are as they are described in the proof of \cref{cor:bic_consistency_conditional}, with the slight change that the empirical distribution is reweighted:
    \begin{align*}
        q_{\bf Z}({\bf Z} = {\bf z}) = \frac{1}{n} \sum_{i=1}^n W_i \mathbb{I}({\bf Z}_i = {\bf z}).
    \end{align*}
    Consider the rBIC for a conditional distribution:
    \begin{align*}
        \gamma(\D, \W, M_r) &= \sum_{i=1}^n W_i \log p_{{\bf X} \mid {\bf Z}}({\bf X}_i \mid {\bf Z}_i, \hat{\btheta}({\bf Z}_i), M_r) - \frac{d_r}{2} n^{\alpha_d} \\
        &= \sum_{i=1}^n W_i \log \frac{p_{{\bf X}, {\bf Z}}({\bf X}_i, {\bf Z}_i \mid \hat{\btheta}, \hat{\boldeta}, M_r)}{p_{\bf Z}({\bf Z}_i \mid \hat{\boldeta})} - \frac{d_r}{2} n^{\alpha_d} \\
        &= \sum_{i=1}^n W_i \log p_{{\bf X}, {\bf Z}}({\bf X}_i, {\bf Z}_i \mid \hat{\btheta}, \hat{\boldeta}, M_r) - \sum_{i=1}^n W_i \log p_{\bf Z}({\bf Z}_i \mid \hat{\boldeta}) - \frac{d_r}{2} n^{\alpha_d} 
    \end{align*}
    When we use the rBIC to compare two models, the term $\sum_{i=1}^n W_i \log p_{\bf Z}({\bf Z}_i \mid \hat{\boldeta})$ cancels out exactly because it does not depend on any of the models, only the weights and the observed data.
    Furthermore, using the same argument as in the proof of \cref{cor:bic_consistency_conditional}, the assumption that ${\bf X}$ given ${\bf Z}$ is a member of the exponential family is sufficient to show that the joint distribution of ${\bf X}$ and ${\bf Z}$ is also a member of the exponential family when the distribution of ${\bf Z}$ is estimated using its reweighted empirical distribution.
    Thus, comparing conditional distributions is equivalent to comparing joint distributions, so the rBIC is also consistent for conditional distributions as well.
    We have thus arrived at our desired result.
\end{proof}

\subsection{Proof of \cref{thm:consistency_estimated_weights}}

\consistencyEstimatedWeights*

\begin{proof}
    Let $\bar{{\bf Y}}_{\hat{W}} = \frac{1}{n} \sum_{i=1}^n \hat{W}_i {\bf T}({\bf X}_i)$.
    This is the reweighted mean estimated using the estimated weights rather than the ground truth weights.
    We start by showing that $\bar{{\bf Y}}_{\hat{W}} - \bar{{\bf Y}}_W = O_p(n^{-\alpha_W})$.
    \begin{align}
        \bar{{\bf Y}}_{\hat{W}} - \bar{{\bf Y}}_W &= \frac{1}{n} \sum_{i=1}^n \hat{W}_i {\bf T}({\bf X}_i) - \frac{1}{n} \sum_{i=1}^n W_i {\bf T}({\bf X}_i) \nonumber \\
        &= \frac{1}{n} \sum_{i=1}^n (\hat{W}_i - W_i) {\bf T}({\bf X}_i) \label{eq:reweighted_mean_convergence1} \\
        &\leq \bigg( \frac{1}{n} \sum_{i=1}^n (\hat{W}_i - W_i)^2 \bigg)^{1/2} \bigg( \frac{1}{n} \sum_{i=1}^n {\bf T}({\bf X}_i)^2 \bigg)^{1/2} \label{eq:reweighted_mean_convergence2} \\
        &= o_p(n^{-\alpha_W}) O_p(1) \label{eq:reweighted_mean_convergence3} \\
        &= O_p(n^{-\alpha_W}) \label{eq:reweighted_mean_convergence4}
    \end{align}
    \cref{eq:reweighted_mean_convergence1} follows from regrouping terms.
    \cref{eq:reweighted_mean_convergence2} follows from the Cauchy-Schwarz inequality.
    \cref{eq:reweighted_mean_convergence3} follows from the initial assumption that the estimated weights converge to the oracle weights at a rate of $o_p(n^{-\alpha_W})$ in RMSE.
    Furthermore, the second moment of the sufficient statistic is just  a vector of finite numbers, so the empirical estimate of the second moment will converge to a vector of numbers by the law of large numbers.
    Thus, the estimated reweighted mean converges to the oracle reweighted mean at a rate of $o_p(n^{-\alpha_W})$.
    \cref{eq:reweighted_mean_convergence4} follows from $o_p(n^{-\alpha_W}) O_p(1) = o_p(n^{-\alpha_W}) = O_p(n^{-\alpha_W})$.

    Let $g({\bf Z}, M_r) = \sup_{\hat{\btheta}_r \in M_r} ({\bf Z} \hat{\btheta}_r - b(\hat{\btheta}_r))$, where $b(\btheta)$ is the log-partition function for the model $M_r$, who is a member of the exponential family.
    This function takes as input a vector of values ${\bf Z}$, finds the value $\hat{\btheta}_r \in M_r$ that maximizes the expression, and outputs the value of the maximized expression.

    Recall that $\btheta_W^* \in \mathrm{int} \Theta$ is the true parameter in the reweighted distribution.
    By \cref{lm:reweighted_mean,lm:gradient_log_partition_reweighted}, $\bar{{\bf Y}}_W \xrightarrow{p} \nabla b(\btheta^*_W)$.
    This means that, as the amount of data increases to infinity, $\bar{{\bf Y}}_W$ will get arbitrarily close to $\nabla b(\btheta^*_W)$.
    Let $\btheta_0 \in M_r$ be the maximizing value of $g(\nabla b(\btheta^*_W), M_r)$.
    Note that $\btheta_0$ here is a finite vector in the interior of $\Theta$ that is either $\btheta_W^*$ itself (according to \cref{lm:max_log_likelihood_reweighted}), or the point in $M_r$ closest to $\btheta_W^*$ if $\btheta_W^* \not \in M_r$.
    
    We now note that the maximizing value of $\hat{\btheta}_r$ in $g(\bar{{\bf Y}}_W, M_r)$ will be bounded in a neighborhood near $\btheta_0$.
    This is due to the fact that $\bar{{\bf Y}}_W$ will itself be in a neighborhood near $\nabla b(\btheta^*_W)$ as the former is converging in probability to the latter.
    
    Next, let us consider the derivative of $g({\bf Z}, M_r)$ with respect to ${\bf Z}$ in a neighborhood near $\bar{{\bf Y}}_W$.
    This derivative will be $\hat{\btheta}_r$, the maximizing value of $g({\bf Z}, M_r)$.
    We can thus conclude that the first derivative of $g$ will be bounded in a neighborhood near $\bar{{\bf Y}}_W$ because we showed above that $\hat{\btheta}_r$ is bounded.
    
    We can then use the mean value theorem to show that there exists a neighborhood $U$ around $\bar{{\bf Y}}_W$ and a constant $K < \infty$ such that $\forall {\bf Z}_1, {\bf Z}_2 \in U$,
    \begin{align*}
        |g({\bf Z}_1, M_r) - g({\bf Z}_2, M_r)| \leq K |{\bf Z}_1 - {\bf Z}_2|
    \end{align*}
    That is, $g({\bf Z}, M_r)$ is locally Lipschitz near $\bar{{\bf Y}}_W$.
    As $n \rightarrow \infty$, $\bar{{\bf Y}}_{\hat{W}}$ will get arbitrarily close to $\bar{{\bf Y}}_W$ because $\bar{{\bf Y}}_{\hat{W}} - \bar{{\bf Y}}_W = O_p(n^{-\alpha_W})$, so $\bar{{\bf Y}}_{\hat{W}}$ will be in the desired neighborhood $U$ with sufficient data.

    This allows us to conclude
    \begin{align*}
        |g(\bar{{\bf Y}}_{\hat{W}}, M_r) - g(\bar{{\bf Y}}_W, M_r)| &\leq K |\bar{{\bf Y}}_{\hat{W}} - \bar{{\bf Y}}_W| \\
        &\leq K O_p(n^{-\alpha_W}) \\
        &= O_p(n^{-\alpha_W})
    \end{align*}
    Thus, $g(\bar{{\bf Y}}_{\hat{W}}, M_r) - g(\bar{{\bf Y}}_W, M_r) = O_p(n^{-\alpha_W})$.

    As a reminder, from \cref{lm:reweighted_likelihood_simplification},
    \begin{align*}
        \gamma_W(\D, \W, M_r) &= \sum_{i=1}^n W_i \log h({\bf X}_i) + n g(\bar{{\bf Y}}_W, M_r) - \frac{d_r}{2} n^{\alpha_d}
    \end{align*}
    Following the same steps as \cref{lm:reweighted_likelihood_simplification},
    \begin{align*}
        \gamma_W(\D, \hat{\W}, M_r) &= \sum_{i=1}^n \hat{W}_i \log h({\bf X}_i) + n \sup_{\hat{\btheta}_r \in M_r} (\bar{{\bf Y}}_{\hat{W}} \hat{\btheta}_r - b(\hat{\btheta}_r)) - \frac{d_r}{2} n^{\alpha_d} \\
        &= \sum_{i=1}^n \hat{W}_i \log h({\bf X}_i) + n g(\bar{{\bf Y}}_{\hat{W}}, M_r) - \frac{d_r}{2} n^{\alpha_d}
    \end{align*}
    Taking the difference between the reweighted BIC with estimated weights and oracle weights under the same model $M_r$ gives us
    \begin{align}
        \gamma_W(\D, \hat{\W}, M_r) - \gamma_W(\D, \W, M_r) &= \sum_{i=1}^n \hat{W}_i \log h({\bf X}_i) + n g(\bar{{\bf Y}}_{\hat{W}}, M_r) - \frac{d_r}{2} n^{\alpha_d} \nonumber \\
        &\quad \quad - \bigg( \sum_{i=1}^n W_i \log h({\bf X}_i) + n g(\bar{{\bf Y}}_W, M_r) - \frac{d_r}{2} n^{\alpha_d} \bigg) \nonumber \\
        &= \sum_{i=1}^n (\hat{W}_i - W_i) \log h({\bf X}_i) + n (g(\bar{{\bf Y}}_{\hat{W}}, M_r) - g(\bar{{\bf Y}}_W, M_r)) \nonumber \\
        &= \sum_{i=1}^n (\hat{W}_i - W_i) \log h({\bf X}_i) + n O_p(n^{-\alpha_W}) \nonumber \\
        &= \sum_{i=1}^n (\hat{W}_i - W_i) \log h({\bf X}_i) + O_p(n^{1-\alpha_W}) \label{eq:diff_est_ora_weights}
    \end{align}
    Let $A = \sum_{i=1}^n (\hat{W}_i - W_i) \log h(X_i)$.
    Note that $A$ does not depend on the model, just the oracle weights, estimated weights, and the observed data.

    Finally, let us consider the difference between the reweighted BIC evaluated at two models $M_1$ and $M_2$ using the estimated weights.
    \begin{align}
        &\gamma_W(\D, \hat{\W}, M_2) - \gamma_W(\D, \hat{\W}, M_1) \nonumber \\
        &= \gamma_W(\D, \hat{\W}, M_2) - \gamma_W(\D, \W, M_2) + \gamma_W(\D, \W, M_2) - (\gamma_W(\D, \hat{\W}, M_1) - \gamma_W(\D, \W, M_1) + \gamma_W(\D, \W, M_1)) \nonumber \\
        &= A + O_p(n^{1-\alpha_W}) + \gamma_W(\D, \W, M_2) - (A + O_p(n^{1-\alpha_W}) + \gamma_W(\D, \W, M_1)) \label{eq:remove_bic_est_weights} \\
        &= \gamma_W(\D, \W, M_2) - \gamma_W(\D, \W, M_1) + O_p(n^{1-\alpha_W}) \nonumber
    \end{align}
    \cref{eq:remove_bic_est_weights} follows from two applications of \cref{eq:diff_est_ora_weights}.

    Following the proof of \cref{thm:reweighted_bic_consistency}, under condition (i), with probability 1 as $n \rightarrow \infty$,
    \begin{align*}
        \gamma_W(\D, \W, M_2) - \gamma_W(\D, \W, M_1) + O_p(n^{1-\alpha_W}) &> n(\epsilon/2) - \frac{1}{2}(d_2 - d_1) n^{\alpha_d} + O_p(n^{1-\alpha_W}) \\
        &> 0
    \end{align*}
    because $n(\epsilon/2)$ dominates both other terms as $1-\alpha_W < \alpha_d < 1$ by construction.

    Again, following the proof of \cref{thm:reweighted_bic_consistency}, under condition (ii), with probability 1 as $n \rightarrow \infty$,
    \begin{align*}
        \gamma_W(\D, \W, M_2) - \gamma_W(\D, \W, M_1) + O_p(n^{1-\alpha_W}) &= O_p(1) - \frac{1}{2}(d_2 - d_1) n^{\alpha_d} + O_p(n^{1-\alpha_W}) \\
        &> 0
    \end{align*}
    because $d_2 < d_1$, and the $n^{\alpha_d}$ term asymptotically dominates both other terms due to $\alpha_d > 0$ and $\alpha_d > 1-\alpha_W$ by construction.
    This completes the proof.
\end{proof}

\section{Simulation Details and Additional Results}
\label[secinapp]{app:sims}

\subsection{Data-Generating Process Details}
\label[secinapp]{app:sims_dgp}

We first remind the reader of the expit function $\text{expit}(x) = \frac{1}{1+e^{-x}}$.

The DGP for the graph shown in \cref{fig:example_dag_backdoor} is:

\begin{align*}
    C &\sim \mathcal{N}(0, 0.25) \\
    A1 &\sim \text{Bernoulli}(\text{expit}(C)) \\
    A2 &\sim \text{Bernoulli}(\text{expit}(C)) \\
    A3 &\sim \text{Bernoulli}(\text{expit}(C)) \\
    Y &\sim 1.5 C + 1.5 A_1 + 1.5 A_2 + \mathcal{N}(0, 1)
\end{align*}

The DGP that follows \cref{fig:example_dag_frontdoor} is:

\begin{align*}
    U &\sim \mathcal{N}(0, 0.01) \\
    A1 &\sim \text{Bernoulli}(\text{expit}(U)) \\
    A2 &\sim \text{Bernoulli}(\text{expit}(U)) \\
    A3 &\sim \text{Bernoulli}(\text{expit}(U)) \\
    M &\sim \sqrt{1.5} A_1 - \sqrt{1.5} A_3 + \mathcal{N}(0, 2.875) \\
    Y &\sim 1.5 U + \sqrt{1.5} M + \mathcal{N}(0, 1)
\end{align*}

Note that $U$ is unobserved in this setting, so none of the model selection methods trialed in the simulation for the frontdoor graph have access to it.
The coefficients of $A_1$ and $A_3$ in the conditional mean model $\E[Y \mid U, A_1, A_3]$ are both $1.5$.
In both DGPs, the variances for the normally distributed random variables are chosen to keep the computation of the weights stable in finite samples.

\subsection{Computation of Weights}
\label[secinapp]{app:sims_weights}

In this subsection, we start by describing the computation of weights in the backdoor setting (\cref{fig:example_dag_backdoor}).
Then, we will describe computing the weights in the frontdoor setting (\cref{fig:example_dag_frontdoor}).

Recall that the weights in the backdoor setting are
\begin{align*}
    W = \frac{p^*({\bf A})}{p({\bf A} \mid C)} &= \frac{p^*(A_1, A_2, A_3)}{p(A_1, A_2, A_3 \mid C)} \\
    &= \frac{p^*(A_1) p^*(A_2) p^*(A_3)}{p(A_1 \mid C) p(A_2 \mid C) p(A_3 \mid C)}.
\end{align*}
$p^*(A_1, A_2, A_3)$ are user-defined probabilities for the potential treatments, so we can simply define them to be independent of each other.
Next, the treatments are independent of each other conditional on $C$ in our synthetic DGP, which implies $p(A_1, A_2, A_3 \mid C) = p(A_1 \mid C) p(A_2 \mid C) p(A_3 \mid C)$.
In general, if treatments are not independent of each other conditional on $C$, we can simply use the chain rule to get $p(A_1, A_2, A_3 \mid C) = p(A_1 \mid A_2, A_3, C) p(A_2 \mid A_3, C) p(A_3 \mid C)$.

We define $p^*(A_j = 1) = 0.5$ for $j=1,2,3$.
Thus, the numerator of the weight for each individual is $0.5^3$, regardless of the observed values of the treatments.
To estimate the probabilities in the denominator, we fit a linear logistic regression model $\hat{p}(A_j \mid C; \eta_j)$ for $j=1,2,3$, where $\eta_j$ are the learned parameters from the linear logistic regression.
Then, for each individual, we use the fitted models to predict their conditional probability of getting assigned their observed treatment.
That is, the weight for individual $i$ is
\begin{align*}
    W_i &= \frac{0.5^3}{\hat{p}(A_1 = a_{1i} \mid C = c_i; \eta_1) \hat{p}(A_2 = a_{2i} \mid C = c_i; \eta_2) \hat{p}(A_3 = a_{3i} \mid C = c_i; \eta_3)}
\end{align*}
where $a_{1i}, a_{2i}, a_{3i},$ and $c_i$ are the observed values of $A_1, A_2, A_3,$ and $C$ for individual $i$.

Next, we describe the computation of weights in the frontdoor setting.
The weights in the frontdoor setting are
\begin{align*}
    \frac{p^*({\bf A}') p(M \mid {\bf A})}{p(M \mid {\bf A}')} &= \frac{p^*(A_1', A_2', A_3') p(M \mid A_1, A_2, A_3)}{p(M \mid A_1', A_2', A_3')} \\
    &= \frac{p^*(A_1') p^*(A_2') p^*(A_3') p(M \mid A_1, A_2, A_3)}{p(M \mid A_1', A_2', A_3')},
\end{align*}
where $A_1', A_2',$ and $A_3'$ are binary random variables generated according to $A_j' \sim \text{Bernoulli}(0.5)$ for $j=1,2,3$.
Thus, $p^*(A_1') p^*(A_2') p^*(A_3') = 0.5^3$ regardless of the observed values of these simulated treatments.

Because $M$ is a normal random variable, the probabilities involving $M$ in the weights are replaced with conditional densities at observed values of the treatments and simulated treatments.
$p(M \mid A_1, A_2, A_3)$ is computed as follows.
We first fit a linear regression for $M$ as a function of $A_1, A_2,$ and $A_3$.
We then use the fitted regression to make a prediction $\hat{m}_i$ of $M$ for each individual $i$ based on the observed values of $A_1, A_2,$ and $A_3$.
Let $m_i$ be the observed value of $M$ for individual $i$.
We then estimate the variance of $M$ using the MLE
\begin{align*}
    \hat{\sigma}^2 &= \frac{1}{n} \sum_{i=1}^n (m_i - \hat{m}_i)^2.
\end{align*}
The density $p(M \mid A_1, A_2, A_3)$ for individual $i$ is
\begin{align*}
    \hat{p}(M = m_i \mid A_1=a_{1i}, A_2=a_{2i}, A_3=a_{3i}) &= \frac{1}{\sqrt{2 \pi \hat{\sigma}^2}} \exp (\frac{-(m_i - \hat{m}_i)^2}{2 \hat{\sigma}^2}).
\end{align*}
Next, $p(M \mid A_1', A_2', A_3')$ is computed as follows.
Using the linear regression learned above, we make a prediction $\hat{m}_i'$ of $M$ for each individual $i$ based on the observed values of the simulated treatments $A_1', A_2',$ and $A_3'$.
The density in the denominator is thus
\begin{align*}
    \hat{p}(M = m_i \mid A_1'=a_{1i}', A_2'=a_{2i}', A_3'=a_{3i}') &= \frac{1}{\sqrt{2 \pi \hat{\sigma}^2}} \exp (\frac{-(m_i - \hat{m}_i')^2}{2 \hat{\sigma}^2}),
\end{align*}
where $a_{1i}', a_{2i}',$ and $a_{3i}'$ are the observed values of the simulated treatments $A_1', A_2',$ and $A_3'$ for individual $i$.
The weight for individual $i$ is thus
\begin{align}
    W_i = \frac{0.5^3 \hat{p}(M = m_i \mid A_1=a_{1i}, A_2=a_{2i}, A_3=a_{3i})}{\hat{p}(M = m_i \mid A_1'=a_{1i}', A_2'=a_{2i}', A_3'=a_{3i}')}. \label{eq:frontdoor_weights}
\end{align}

Because the densities of $M$ can be small, especially at values of predicted $M$ far from the observed $M$, our simulation computes $W_i$ using the logarithm of the ratio of densities before transforming it back to its exponent as follows:
\begin{align*}
    \text{log-ratio}_i &= \frac{- ((m_i - \hat{m}_i)^2 - (m_i - \hat{m}_i')^2)}{2 \hat{\sigma}^2} \\
    W_i &= 0.5^3 \exp(\text{log-ratio}_i)
\end{align*}
The normalization term $\frac{1}{\sqrt{2 \pi \hat{\sigma}^2}}$ is omitted from the ratio of densities because it cancels out exactly. 
Computing the weights using this procedure is equivalent to computing the weights using \cref{eq:frontdoor_weights}.

\subsection{Adaptive LASSO and Regression with SCAD Penalty}
\label[secinapp]{app:sims_alasso_scad}

Before going into details, this paragraph gives the ``too long, did not read'' (TLDR) summary on the adaptive LASSO and regression with SCAD penalty methods.
Adaptive LASSO first fits an ordinary regression then uses the inverse of those learned parameters in the L1 penalty of a second regression.
This forces parameters that are close to $0$ in the first regression to be greatly penalized in the second regression.
The SCAD penalty term penalizes parameters during gradient descent using a piecewise function whose value is based on the absolute value of the estimated parameters at each step.
Large values of estimated parameters are not penalized whereas smaller values of estimated parameters receive increasingly large penalties, which force them to $0$.
We chose hyperparameters for both of these methods that satisfy the conditions for their oracle properties to take effect.

We fit the adatpive LASSO using the following steps \citep{ref:zou2006adaptive}.
\begin{enumerate}
    \item Run a linear regression of $Y$ on $A_1, A_2, A_3$, and covariates $C$, if they are available.
    Save $\tilde{\beta}_j$ for $j=1,\dots,p$, where $\tilde{\beta}_j$ is the estimated coefficient for the $j$th variable in the regression, and $p$ is the number of coefficients learned.
    \item Define the weights $\hat{\omega}_j = \frac{1}{|\tilde{\beta}_j|^\gamma}$, where $\gamma > 0$ is a user-specified hyperparameter.
    \item Let ${\bf X}$ be the design matrix for the linear regression. 
    Let ${\bf Y}$ be the vector of outcomes $Y$.
    Let $\beta$ be the vector of coefficients used in the adaptive LASSO regression.
    The adaptive LASSO estimates are given by
    \begin{align*}
        \underset{\beta}{\text{argmin}} \frac{1}{n} ({\bf Y} - {\bf X} \beta)^2 + \lambda_n \sum_{j=1}^p \hat{\omega}_j | \beta_j |,
    \end{align*}
    where $\lambda_n$ is a user-specified hyperparameter.
\end{enumerate}
Algorithm 1 of \cite{ref:bahamyirou2022doubly} gives a succint summary of the method under a similar setting.
In order for adaptive LASSO to achieve its oracle property, we need $\lambda_n/\sqrt{n} \rightarrow 0$ and $\lambda_n n^{(\gamma-1)/2} \rightarrow \infty$ \citep{ref:zou2006adaptive}.
In our simulations, we choose $\gamma=2$ and $\lambda_n = n^{-1/4}$.

We now describe regression with the SCAD penalty using the same notation as above \citep{ref:fan2001variable}.
SCAD stands for smoothly clipped absolute deviation.
The estimates of the regression with the SCAD penalty are given by
\begin{align*}
    \underset{\beta}{\text{argmin}} \frac{1}{n} ({\bf Y} - {\bf X} \beta)^2 + \sum_{j=1}^p p_{\lambda_n}(|\beta_j|),
\end{align*}
where $p_{\lambda_n}(|\beta_j|)$ is the penalty function whose derivative is defined as
\begin{align*}
    p_{\lambda_n}'(\beta) &= \lambda_n \bigg( \mathbb{I}(\beta \leq \lambda_n) + \frac{(a \lambda_n - \beta)_+}{(a-1) \lambda_n} \mathbb{I}(\beta > \lambda_n) \bigg),
\end{align*}
where $\mathbb{I}(\cdot)$ is the indicator function and $(x)_+$ is equal to $x$ if $x \geq 0$ and $0$ otherwise.
$a$ and $\lambda_n$ are user-specified hyperparameters.
Breaking down the derivative of the penalty function, assuming $\beta > 0$,
\begin{enumerate}
    \item $p_{\lambda_n}'(\beta) = \lambda_n$ if $\beta \leq \lambda_n$.
    \item $p_{\lambda_n}'(\beta) = \frac{a \lambda_n - \beta}{(a-1)}$ if $\lambda_n < \beta < a \lambda_n$.
    \item $p_{\lambda_n}'(\beta) = 0$ if $\beta \geq a \lambda_n$.
\end{enumerate}
The derivative of the penalty term becomes increasingly small as the value of the estimated coefficient becomes increasingly large.

\cite{ref:fan2001variable} suggest $a=3.7$, so this is the value we use in our simulations.
Regression with the SCAD penalty achieves the oracle property when $\lambda_n \rightarrow 0$ and $\sqrt{n} \lambda_n \rightarrow \infty$ \citep{ref:fan2001variable}.
Thus, we choose $\lambda_n = n^{-1/4}$ in our simulation.

To ensure fair comparisons with the reweighted BIC, we use reweighted versions of the loss functions when fitting parameters for adaptive LASSO and regression with the SCAD penalty.

\subsection{Simulation Results with Oracle Weights}
\label[secinapp]{app:results_oracle_weights}

Here, we give additional simulation results executed under the same settings as those described in \cref{sec:simulation_study}, except we use the ground-truth weights to fit penalized regressions models and for computation of the rBIC.

As shown in \cref{fig:backdoor_oracle_results}, the rBIC with the $\log n$ and $n^{1/2}$ penalties select ${\bf A}^*$ correctly with high probability across all sample sizes.
The rBIC with the $n^{3/4}$ penalty selects the correct ${\bf A}^*$ with probability $1$ starting at sample size $n=1000$.
In the backdoor graph setting, results with oracle weights appear roughly the same as results with estimated weights.
This is because the convergence of the estimated weights is fast enough that the estimated weights behave like the oracle weights at small sample sizes.

\begin{figure}[h]
    \centering
    \includegraphics[scale=0.5]{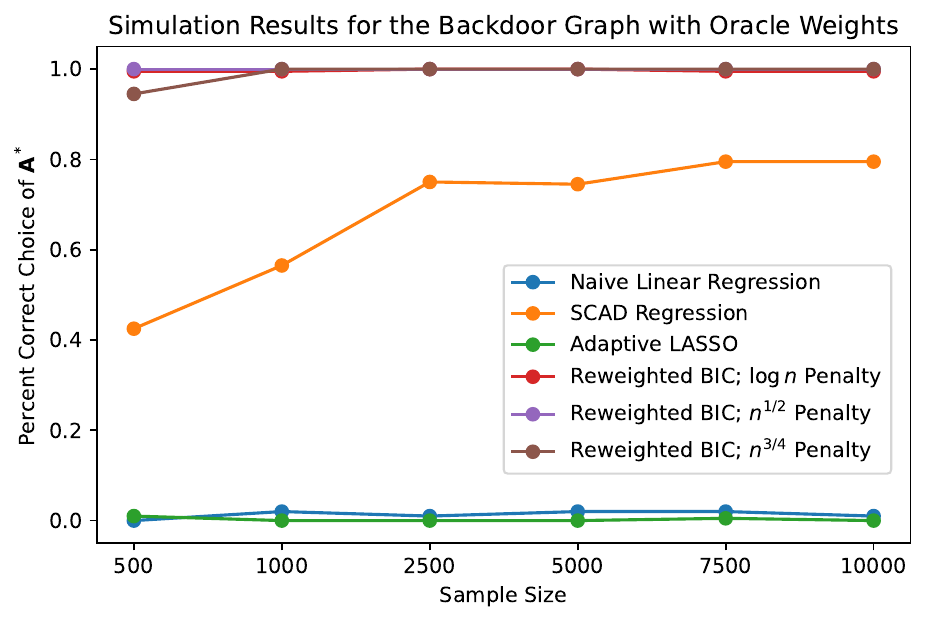}
    \caption{Simulation results for the backdoor graph setting when ground-truth weights are used.}
    \label{fig:backdoor_oracle_results}
\end{figure}

Next, consider the frontdoor graph setting.
Recall from \cref{fig:results} that the rBIC with the $\log n$ penalty selects the true model roughly $80\%$ of the time across sample sizes when we use estimated weights.
As shown in \cref{fig:frontdoor_oracle_results}, when we use the oracle weights, the $\log n$ penalty's accuracy rises to $90\%$, and we expect it to increase as the sample size increases further, which aligns with our theoretical results.
Due to computational constraints at such large sample sizes, however, we did not run the experiments at larger sample sizes.
Like the simulations with estimated weights, the accuracy of the rBIC with the $n^{3/4}$ penalty increases as the sample size increases.

\begin{figure}[h]
    \centering
    \includegraphics[scale=0.5]{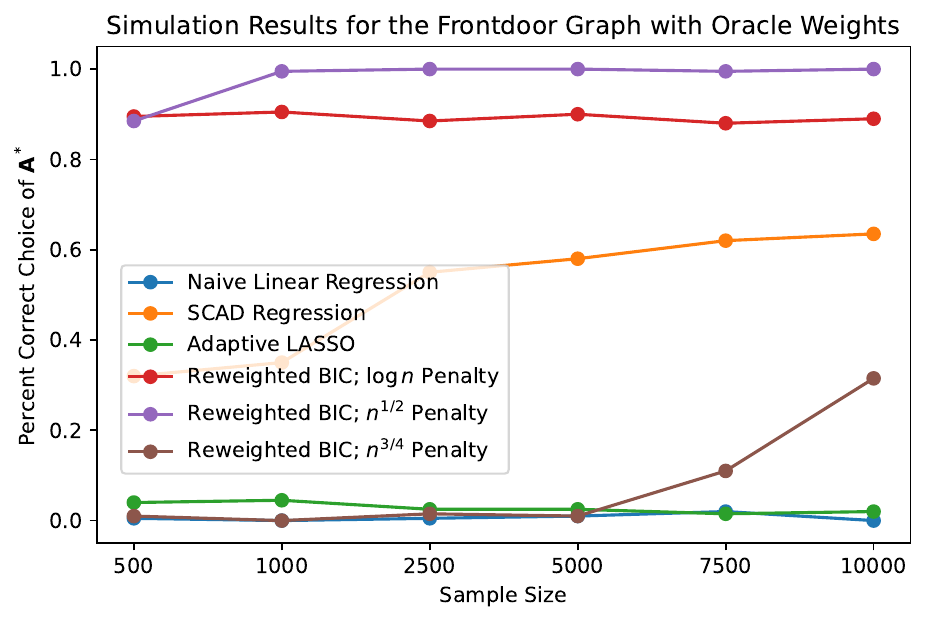}
    \caption{Simulation results for the frontdoor graph setting when ground-truth weights are used.}
    \label{fig:frontdoor_oracle_results}
\end{figure}

\subsection{Hardware Used for Simulation Study}
\label[secinapp]{app:sims_hardware}

The cluster used for the simulation study is comprised of 87 compute nodes, providing over 4000 64-bit cores/threads and 53 TB of DDR-SDRAM.
The cluster also has 7 GPU nodes with 23 Nvidia H100, V100, and A100 GPUs in support of numerous AI, ML, and LLM efforts.
The storage environment on the cluster contains a total of 20PB of storage with a 2PB Lustre storage cluster, 16PB of ZFS NAS storage, and a 2PB ZFS NAS storage array for offsite disk-to-disk backup, all built on the “dirt cheap” storage philosophy.
In addition, a 25 TB NVME SSD based “fast scratch” storage array is used to support short-term storage of high-IO processing. 

\end{document}